\journal{Artificial Intelligence}
\tikzset{> = {Latex[length=3mm,width=2mm]}}
\newcommand{\HDGshort}{\textsc{HDG}\xspace}
\newcommand{\MSS}{\textsc{Multidimensional Subset Sum}\xspace}
\newcommand{\MSSshort}{\textsc{MSS}\xspace}
\newcommand{\XThreeC}{\textsc{Exact Cover By $3$-Sets}\xspace}
\newcommand{\XThreeCshort}{\textsc{X3C}\xspace}
\newcommand{\PTTN}{\textsc{Partition}\xspace}
\newcommand{\allHDGNashShort}{\textsc{HDG-Nash}\xspace}
\newcommand{\allHDGIndividualShort}{\textsc{HDG-Individual}\xspace}
\newcommand{\Nat}{\mathbb{N}}
\newcommand{\bigoh}{\mathcal{O}}
\newcommand{\cc}[1]{{\mbox{\textnormal{\textsf{#1}}}}\xspace}  
\newcommand{\NP}{\cc{NP}}
\newcommand{\NPh}{\NP-hard\xspace}
\newcommand{\NPc}{\NP-complete\xspace}
\newcommand{\paraNP}{\emph{para}-\NP}
\newcommand{\FPT}{\cc{FPT}}
\newcommand{\W}[1][1]{\cc{W[#1]}}
\newcommand{\Wh}[1][1]{{\cc{W[#1]}}-hard\xspace}
\newcommand{\XP}{\cc{XP}}
\newtheorem{theorem}{Theorem}
\crefname{theorem}{Theorem}{Theorems}
\newtheorem{observation}[theorem]{Observation}
\crefname{observation}{Observation}{Observations}
\newtheorem{lemma}[theorem]{Lemma}
\crefname{lemma}{Lemma}{Lemmas}
\newtheorem{corollary}[theorem]{Corollary}
\crefname{corollary}{Corollary}{Corollaries}
\crefname{proposition}{Proposition}{Propositions}
\crefname{conjecture}{Conjecture}{Conjectures}
\newtheorem{claim}{Claim}
\crefname{claim}{Claim}{Claims}
\newenvironment{claimproof}[1]{\par\noindent\underline{Proof:}\space#1}{\hfill $\blacksquare$}
\theoremstyle{remark}
\newtheorem{example}{Example}
\crefname{example}{Example}{Examples}
\newcommand{\colors}{\ensuremath{\gamma}}
\newcommand{\coalsize}{\ensuremath{\sigma}}
\newcommand{\coalnum}{\ensuremath{{\rho_{\geq2}}}}
\newcommand{\scoalnum}{\ensuremath{{\rho_{\geq1}}}}
\newcommand{\types}{\ensuremath{\tau}}
\newcommand{\YES}{\cc{Yes}}
\newcommand{\NO}{\cc{No}}
\newcommand{\III}{\mathcal{I}}
\newcommand{\JJJ}{\mathcal{J}}
\newcommand{\PPP}{\mathcal{P}}
\newcommand{\QQQ}{\mathcal{Q}}
\newcommand{\SSS}{\mathcal{S}}
\begin{document}
	
\begin{frontmatter}
	\title{Hedonic Diversity Games: \\A Complexity Picture with More than Two Colors\tnoteref{aaai22}}
	
	\author[tuwien]{Robert Ganian}\ead{rganian@ac.wien.ac.at}
	\author[uu]{Thekla Hamm}\ead{t.l.s.hamm@gmail.com}
	\author[ctu]{Dušan Knop}\ead{dusan.knop@fit.cvut.cz}
	\author[ctu]{Šimon Schierreich}\ead{schiesim@fit.cvut.cz}
	\author[ctu]{Ondřej Suchý}\ead{ondrej.suchy@fit.cvut.cz}
	
	\address[tuwien]{Algorithms and Complexity Group, TU Wien, Austria}
	\address[uu]{Algorithms and Complexity Group, Utrecht University, The Netherlands}
	\address[ctu]{Faculty of Information Technology, Czech Technical University in Prague, Czechia}
	
	\tnotetext[aaai22]{An extended abstract of this work has been published in the proceedings of the Thirty-Sixth AAAI Conference on Artificial Intelligence (AAAI '22) \cite{GanianHKSS2022}.}
	
	\begin{abstract}
		Hedonic diversity games are a variant of the classical hedonic games designed to better model a variety of questions concerning diversity and fairness. Previous works mainly targeted the case with two diversity classes (represented as colors in the model) and provided some initial complexity-theoretic and existential results concerning Nash and individually stable outcomes. Here, we design new algorithms accompanied with lower bounds which provide a comprehensive parameterized-complexity picture for computing Nash and individually stable outcomes with respect to the most natural parameterizations of the problem. Crucially, our results hold for general hedonic diversity games where the number of colors is not necessarily restricted to two, and show that---apart from two trivial cases---a necessary condition for tractability in this setting is that the number of colors is bounded by the parameter. Moreover, for the special case of two colors we resolve an open question asked in previous work~(Boehmer and Elkind, AAAI 2020).
	\end{abstract}
	
	\begin{keyword}
		Hedonic games\sep diversity \sep coalition formation \sep stability \sep fixed-parameter tractability
		\MSC[2010] 91A12\sep  68Q25 \sep 68Q17
	\end{keyword}
	
\end{frontmatter}

\section{Introduction}

Settings in which individual agents form groups or \emph{coalitions} are ubiquitous in scenarios which are modeled by computational social choice theory and arise throughout social, political and economic life.
Modeling and investigating the behavior of agents in these kinds of situations is the main goal of the field of cooperative game theory~(see \cite{ChalkiadakisEW11} for a survey with an algorithmic focus), and an important subarea concerns the study of which coalitions are formed by agents based on their various preferences.
One of the most simple yet compelling models for the formation of coalitions is based on the assumption of \emph{hedonic} behavior which postulates that an agent's preference is based purely on the composition of their own coalition, and ignores any potential inter-coalitional or global effects~\cite{dreze1980hedonic,ElkindFF20,KerkmannLRRSS20,0001BW21,BanerjeeKS01}. Anonymous hedonic games~\cite{BogomolnaiaJ2002} are a well-studied restriction of hedonic games where preferences only take into account the sizes of the coalitions, not the precise identity of their members.

Bredereck et al.~\cite{BredereckEI2019} recently initiated the study of a model that naturally falls between general hedonic games and anonymous hedonic games in terms of to what level of specificity agents are distinguished by each other's preferences. Specifically, they considered the scenario where each agent is assigned a class or \emph{color}, and preference profiles are defined with respect to the ratios of classes occurring in the coalition. Loosely following their terminology, we call these \emph{hedonic diversity games}\footnote{This term was originally used for the special case of two classes, and the presence of $k>2$ classes was previously identified by adding ``\emph{$k$-tuple}'' or ``\emph{k-}''\cite{BoehmerE2020}. Since here we consider $k$ to be part of the input, we use \emph{hedonic diversity games} for the general model.}.
The question targeted by hedonic diversity games occurs in a number of distinct settings (which are not captured by anonymous hedonic games), such as the Bakers and Millers game (where coalitions are formed between two types of agents with competing preferences)~\cite{AzizBBHOP19,schelling1971dynamic,Bilo0FMM18} or when the task is to assign guests from several backgrounds to tables~\cite{BoehmerE2020:ijcai,IgarashiSZ19}.

The most prominent computational question arising from the study of hedonic diversity games targets the computation of an outcome that is stable under some well-defined notion of stability. While several stability concepts have been considered in the literature including, e.g., \emph{envy-freeness} and \emph{core stability}, here we focus on two prominent notions of single-agent stability that have been applied in the context of hedonic games:
\begin{itemize}
	\item \emph{Nash stability}, which ensures that no agent prefers leaving their coalition to join a different one, and
	\item \emph{individual stability}, where no agent prefers leaving their coalition to join another coalition whose members would all appreciate or be indifferent to such a change.
\end{itemize}

In their pioneering work, Boehmer and Elkind~\cite{BoehmerE2020} have provided initial results on the computational complexity of computing stable outcomes for hedonic diversity games (a problem we hereinafter generally refer to as \HDGshort), with a particular focus on the case with $2$ colors. They analyzed the problem not only from the viewpoint of classical complexity, but also with respect to the more refined \emph{parameterized} paradigm\footnote{A brief introduction to parameterized complexity is provided in the Preliminaries.}. Among others, they showed that individually and Nash-stable \HDGshort\ is \NP-hard even when restricted to instances with $5$ and $2$ colors, respectively. Moreover, for the case of $2$ colors they obtained a polynomial time algorithm for individually stable \HDGshort\ and an \XP algorithm for Nash-stable \HDGshort\ when parameterized by the size of the smaller color class~\cite[Theorem~5.2]{BoehmerE2020}. Still, our understanding of the computational aspects of \HDGshort has up to now remained highly incomplete: no tractable fragments of the problem have been identified beyond the aforementioned \XP\ algorithm targeting Nash stability for \(2\) colors and the polynomial-time algorithm for individual stability for \(2\) colors. In fact, even whether this known \XP-algorithm for \(2\) colors can be improved to a fixed-parameter one has been explicitly stated as an open problem~\cite[Section 3]{BoehmerE2020}.

\paragraph{Our Contribution} 
We obtain new algorithms and lower bounds that paint a comprehensive picture of the complexity of \HDGshort through the lens of parameterized complexity. In particular, our results provide a complete understanding of the exact boundaries between tractable and intractable cases for \HDGshort for both notions of stability and with respect to the most fundamental parameters that can either be identified from the input or have been used as additional conditions defining which outcomes are accepted:

\begin{itemize}
  \item The number of colors on the input, denoted by $\colors$;
  \item The maximum size $\coalsize$ of a coalition in an acceptable outcome. This is motivated by application scenarios as well as several works on the stable roommates problem---a special case of hedonic games---restricted to coalitions of fixed size~\cite{Huang07,NgH91,BredereckHKN20};
  \item Two possible bounds on the maximum number of coalitions in an acceptable outcome: $\scoalnum$ bounds the total number of coalitions \emph{including} agents who are alone, while $\coalnum$ bounds the total number of coalitions \emph{excluding} agents who are alone. Both of these restrictions induce different complexity-theoretic behaviors and are meaningful in different contexts---for instance, the former when coalitions represent tables at a conference dinner, while the latter when coalitions represent optional sports activities.
  \item The number $\types$ of \emph{agent types} of the instance, i.e., the number of different preference lists that need to be considered. Agent types have been used as a parameter in many related works~\cite{BredereckHKN20,BiroIS11,KavithaNN14,IrvingMS08} and is a more relaxed parameterization than the number of agents. In the context of \HDGshort, the number of agents bounds the size of the instance and hence is not an interesting parameter.
\end{itemize}

The complexity picture for \HDGshort\ is provided in \Cref{fig:complexityPicture} and is based on a set of three non-trivial algorithms and four reductions. Our results show that apart from two trivial cases (restricting the size and number of coalitions), a necessary condition for tractability under the considered parameterizations is that the number of colors is bounded by the parameter---and this condition is also sufficient, at least as far as \XP-tractability is concerned. Remarkably, in our general setting the problem retains the same complexity regardless of whether we aim for Nash or individual stability---this contrasts the previously studied subcase of $\gamma=2$~\cite{BoehmerE2020}.

Apart from requiring non-trivial insight into the structure of a possible solution, the approaches used to solve the tractable fragments vary greatly from one another: the fixed-parameter algorithm w.r.t.\ $\colors+\coalsize$ combines branching with an ILP formulation, the \XP-algorithm w.r.t.\ $\colors+\types$ relies on advanced dynamic programming and the \XP-algorithm which parameterizes by $\colors+\coalnum$ combines careful branching with a network flow subroutine.
We complement these positive results with lower bounds which show that none of the parameters can be dropped in any of the algorithms. This is achieved by a set of \W[1]-hardness and \NP-hardness reductions, each utilizing different ideas and starting from diverse problems: \textsc{Independent Set}, \textsc{Multidimensional Subset Sum}, \textsc{Exact Cover by 3-Sets} and \textsc{Partition}.

As our final result, we resolve an open question of Boehmer and Elkind~\cite{BoehmerE2020}: when the number of colors ($\gamma$) is $2$, is Nash-stable \HDGshort\ fixed-parameter tractable when parameterized by the size of the smaller color class? Here, we provide a highly non-elementary reduction from a variant of the \textsc{Group Activity Selection} problem~\cite{DarmannDDLS17,EibenGO18} which excludes fixed-parameter tractability.

\begin{figure}[bt]
	\centering
	\begin{tikzpicture}[node distance=1cm]
		\tikzstyle{every node} = [minimum height=0.6cm,,minimum width=0.7cm]
		\tikzstyle{result} = [draw,thick]
		\tikzstyle{known} = [label={[label distance=-5pt]270:\small{\tt Known}}]
		\tikzstyle{NPh} = [fill=red!30]
		\tikzstyle{Wh} = [fill=orange!30]
		\tikzstyle{FPT} = [fill=green!30]

		\node[NPh,known] (C) at (0, 0) {$\colors$};
		\node[NPh] (SN) [right=of C] {$\scoalnum$};
		\node[NPh] (S) [right=of SN] {$\coalsize$};
		\node[NPh] (T) [right=of S] {$\types$};
		\node[NPh] (WN) [right=of T] {$\coalnum$};

		\node[result,FPT,label={[label distance=-5pt]90:\small{\tt T\ref{lem:allHDG:FPT:ColorsSize}}}] (CS) [below left=1 and 0 of C] {$\colors\coalsize$};
		\node[Wh,result,label={[label distance=-5pt]90:\small{\tt C\ref{lem:allHDGNash:XP:ColorsStrongNum}}}] (CSN) [right=0.15 of CS] {$\colors\scoalnum$};
		\node[Wh,result,label={[label distance=-5pt]90:\small{\tt T\ref{lem:allHDG:XP:ColorsTypes}}}] (CT) [right=0.15 of CSN] {$\colors\types$};
		\node[Wh,result,label={[label distance=-5pt]90:\small{\tt T\ref{lem:allHDGNash:XP:ColorsWeakNum}}}] (CWN) [right=0.15 of CT] {$\colors\coalnum$};
		\node[result,NPh,label={[label distance=-5pt]270:\small{\tt\, T\ref{lem:allHDG:NPh:TypesNum}}}] (SNT) [right=0.15 of CWN] {$\scoalnum\types$};
		\node[result,NPh,label={[label distance=-5pt]270:\small{\tt T\ref{lem:allHDG:NPh:SizeTypes}}}] (ST) [right=0.15 of SNT] {$\coalsize\types$};
		\node[result,NPh,label={[label distance=-5pt]270:\small{\tt T\ref{lem:allHDG:NPh:TypesNum}}}] (WNT) [right=0.15 of ST] {$\coalnum\types$};
		\node[result,FPT,label={[label distance=-5pt]90:\small{\tt L\ref{lem:allHDG:FPT:StrongNumSize}}}] (SSN) [right=0.15 of WNT] {$\coalsize\scoalnum$};
		\node[Wh,result,label={[label distance=-5pt]90:\small{\tt L\ref{lem:allHDG:XP:WeakNumSize}}}] (SWN) [right=0.15 of SSN] {$\coalsize\coalnum$};

		\node[FPT] (CSSN) [below=1 of CS] {$\colors\coalsize\scoalnum$};
		\node[FPT] (CSWN) [right=0.25 of CSSN] {$\colors\coalsize\coalnum$};
		\node[FPT] (CST) [right=0.25 of CSWN] {$\colors\coalsize\types$};
		\node[result,Wh,label={[label distance=-5pt]270:\small{\tt T\ref{lem:allHDG:Wh:ColorsNumTypes}}}] (CSNT) [right=0.25 of CST] {$\colors\scoalnum\types$};
		\node[Wh] (CWNT) [right=0.25 of CSNT] {$\colors\coalnum\types$};
		\node[FPT] (SSNT) [right=0.25 of CWNT] {$\coalsize\scoalnum\types$};
		\node[result,Wh,label={[label distance=-5pt]270:\small{\tt T\ref{lem:allHDG:Wh:SizeTypeNum}}}] (SWNT) [right=0.25 of SSNT] {$\coalsize\coalnum\types$};

		\node[FPT] (CSSNT) [below right=1 and 0.25 of CSWN] {$\colors\coalsize\scoalnum\types$};
		\node[FPT] (CSWNT) [right=of CSSNT] {$\colors\coalsize\coalnum\types$};

		\draw[->] (SSN.south) -- (SSNT.north);
		\draw[->] (CS.south) -- (CSSN.north);
		\draw[->] (CS.south) -- (CSWN.north);
		\draw[->] (CS.south) -- (CST.north);
		\draw[->] (CST.south) -- (CSSNT.north);
		\draw[->] (CST.south) -- (CSWNT.north);

		\draw[->] (ST.north) -- (S.south);
		\draw[->] (ST.north) -- (T.south);
		\draw[->] (SNT.north) -- (SN.south);
		\draw[->] (WNT.north) -- (WN.south);

		\draw[<->] (CSNT.north) -- (CT.south);
		\draw[->] (CSNT.north) -- (CSN.south);
		\draw[->] (3.7,0 |- CSNT.south east) .. controls +(down:5mm) and +(down:5mm) .. (CWNT.south);
		\draw[<->] (CWNT.north) -- (CWN.south);

		\draw[<->] (SWNT.north) -- (SWN.south);
	\end{tikzpicture}
	\caption{The complexity picture for \HDGshort for both Nash and individual stability. Combinations of parameters which give rise to fixed-parameter algorithms are highlighted in green, while combinations for which \HDGshort is \Wh but in \XP are highlighted in orange and \NPc combinations are highlighted in red. Results explicitly proved in this work are represented by a black box and a reference to the given theorem, corollary or lemma.}
	\label{fig:complexityPicture}
\end{figure}

\paragraph{Related Work}
Hedonic games generalize various well-studied matching models
such as the marriage matching model~\cite{GaleS62}, the roommate matching model~\cite{GaleS62}, and many-to-one matching model~\cite{RothS90}. These are all models in which agents have hedonic preferences over the partition formed by the different coalition but where, in contrast to general hedonic games, not all partitions into coalitions are allowed.

In the extensive research on hedonic games a range of solution concepts have been considered for the outcomes of hedonic games.
Classically one requires stability of an outcome under deviation of single agents, this includes Nash and individual stability which are the solution concepts considered in this work, or groups of agents~\cite{BogomolnaiaJ02}.
Other solution concepts include individual rationality, perfection and Pareto optimality~\cite{AzizBH13}, or scores from voting theory under which an outcome should maximize social welfare and which have been considered for restricted variants of hedonic games~\cite{AzizGGMT15,AzizBS11}.
So far, research on hedonic diversity games, like our work has focused on stability with respect to single-agent deviations~\cite{BoehmerE2020,0001BW21,Darmann21} but maximizing social welfare has also been studied by Darmann~\cite{Darmann21}.

Inspired by the definition of hedonic diversity games, Boehmer and Elkind~\cite{BoehmerE2020:ijcai}, studied the \emph{roommate diversity problem} in which agents have color-fraction based preferences over their coalitions.
Even before the introduction of hedonic diversity games by Bredereck et al.~\cite{BredereckEI2019}, diversity has been studied in matching models~\cite{Huang10,KamadaK15}, with the crucial difference that in hedonic diversity games different colors are used to define agents preferences while in matching models they are used to prescribe distributional constraints on the outcome.
Considering diversity as defining feature of agents' preferences over coalitions is also related and was originally strongly motivated by its importance in residential community formation which is formalized by the theoretical model of Schelling games~\cite{Schelling71,AgarwalEGISV2021} and, more recently, in the refugee housing problem~\cite{KnopS2023}.
Important distinctions between Schelling games and refugee housing on one hand, and hedonic games on the other hand are that in the former communities, unlike coalitions in the latter, are not necessarily disjoint, and are also restricted by some underlying topology.

\section{Preliminaries}

For integers $i<j$, we let $[i]=\{1,\dotsc,i\}$, $[i]_0=[i]\cup\{0\}$ and \([i,j] = \{i, \dotsc, j\}\). $\Nat$ denotes the set of positive integers.

\subsection{Hedonic Diversity Games}
Let $N=[n]$ be a set of agents partitioned into color classes $D_1,\dots,D_\colors$. Let a \emph{palette} be a tuple of the form $(r_1,\dots,r_\colors)$ where (1) $\sum_{i\in [\colors]}r_i=1$ and (2) each $r_i$, $i\in [\colors]$, is non-negative rational number with a denominator of at most $n$. 
A subset $C\subseteq N$ is called a \emph{coalition}, and if $|C|=1$ we call it a \emph{trivial coalition}. For a coalition $C$, we define the \emph{palette of $C$} as the tuple $\left(\frac{|D_c\cap C|}{|C|}\right)_{c\in [\colors]}$. Intuitively, in the setting of hedonic diversity games agents will judge coalitions based only on their palettes (i.e., the tuples of fractions capturing how well each diversity class is represented). Formally, let $M$ be a set of weak orders over the set of all palettes, and let $\succeq$ be a (not necessarily surjective) mapping which assigns each agent $i\in N$ to its preference list $\succeq_i\in M$. We refer to the preference list $\succeq_i$ as the \emph{master list}. We remark that some palettes can never occur in a coalition containing an agent with a certain color, and such ``irrelevant'' palettes can be omitted from the preferences of that agent for brevity. Interested readers may find an example of these notions later on in this section (cf. \Cref{ref:example}).

Let $\Pi = \{C_1,\ldots,C_\ell\}$ be a partitioning of the agents, i.e., a set of subsets of $N$ such that $\bigcup_{i=1}^\ell C_i = N$ and all $C_i$'s are pairwise disjoint. We call $\Pi$ an \emph{outcome}, and use $\Pi_i$ to denote the coalition the agent $i$ is involved in for the outcome $\Pi$.
The notion of \emph{stability} of an outcome $\Pi$ for a game $(N,(\succeq_i)_{i\in N})$ will be crucial for our considerations. If there is an agent $i$ and coalition $C$ in $\Pi$ (possibly allowing $C=\emptyset$) such that $C\cup\{i\}\succ_i \Pi_i$, we say $i$ admits an \emph{NS-deviation} to~$C$. $\Pi$ is called \emph{Nash stable} (NS) if it contains no agent with an NS-deviation. If agent $i$ admits an NS-deviation to $C$ where in addition for each agent $j\in C$ it holds that $C\cup\{i\} \succeq_j C$, we say that $i$ admits an \emph{IS-deviation} to~$C$. $\Pi$ is then called \emph{individually stable} (IS) if it contains no agent with an IS-deviation. The core computational task that arises in the study of hedonic diversity games is determining whether an instance admits a stable outcome w.r.t.\ the chosen notion of stability.

The classical notion of hedonic games coincides with the special case of hedonic diversity games where each agent receives their own color.
Since already for hedonic games the preference profiles may be exponentially larger than $n$ even after removing all coalitions that are strictly less favorable than being alone, and this blow-up does not reflect the usual application scenarios for hedonic games, we adopt the \emph{oracle model} that has been proposed in previous works~\cite{Peters2016dichotomous,Peters2016graphical,HanakaL2022,IgarashiE16}: instead of having the preference profiles included on the input, we are provided with an oracle that can be queried to determine (in constant time) whether an agent $i\in N$ prefers some coalition to another coalition.

Our aim will be to obtain an understanding of the problem's complexity with respect to the following parameters and their combinations:
\begin{description}[leftmargin=!,labelindent=1.5cm,align=right]
  \item[$\colors$] the number of color classes,
  \item[$\coalsize$] the maximum size of a coalition,
  \item[$\scoalnum$] the maximum number of coalitions,
  \item[$\coalnum$] the maximum number of non-trivial coalitions,
  \item[$\types$] the number of \emph{agent types}, formally defined as $|M|$.
\end{description}

We can now formalize our problems of interest:

\begin{center}
\begin{boxedminipage}{0.98 \columnwidth}
\allHDGNashShort\\[5pt]
\begin{tabular}{l p{0.78 \columnwidth}}
Input: & Instance $\III$ consisting of a set $N=[n]$ of agents partitioned into $D_1,\dots,D_\colors$, an oracle that can compare coalitions according to $\succeq_i$ for each $i\in N$, and integers $\coalsize$, $\scoalnum$, $\coalnum$.\\
Question: \hspace{-0.4cm} & Does $\III$ admit a Nash stable outcome consisting of at most $\scoalnum$ coalitions and $\coalnum$ non-trivial coalitions, each of size at most $\coalsize$?
\end{tabular}
\end{boxedminipage}
\end{center}

\allHDGIndividualShort is then defined analogously, with the distinction that the outcome must be individually stable; we use \HDGshort\ to jointly refer to both problems. Observe that the restrictions imposed by $\coalsize$, $\scoalnum$, $\coalnum$ can be made irrelevant by setting these integers to $n$, meaning that \allHDGNashShort\ and \allHDGIndividualShort generalize the case where these three restrictions are removed.
As the complexity of our algorithms which do not use any of $\coalsize$, $\scoalnum$ or $\coalnum$ as parameter depends only polynomially on $\coalsize$, $\scoalnum$ and $\coalnum$, this means that, removing the restriction completely (by setting the respective measures to \(n\)) does not affect the algorithms' parameterized complexity.

Now, we give a simple example illustrating both stability concepts and the notion of types and colors used thorough this work. The example also illustrates the concept of \emph{master lists}. 
\begin{example}
\label{ref:example}
	Let $\III$ be an instance with $4$ agents, denoted $a,b,c,d$ for clarity, such that $D_1 = \{a,b\}$ and $D_2 = \{c,d\}$. The agents have the following preference lists:
	\begin{align*}
		a &:: \textstyle (\frac13,\frac23) \succ (\frac23,\frac13) \succ (1,0) \succ (\frac12,\frac12) \\
		b &:: \textstyle (\frac12,\frac12) \succ (\frac13,\frac23) \succ (1,0) \succ (\frac23,\frac13) \\
		c,d &:: \textstyle (\frac13,\frac23) \succ (\frac23,\frac13) \succ (\frac12,\frac12) \succ (0,1)
	\end{align*}

	One can see that the preferences of agents $a$, $c$, and $d$ are derivable from the same master list \[ \textstyle (\frac13,\frac23) \succ (\frac23,\frac13) \succ (1,0) \succ (\frac12,\frac12) \succ (0,1), \] whereas preference list of agent $b$ is not, i.e., $\III$ has exactly two types of agents and two colors. However, the color classes and type classes do not coincide.

	If we are interested in an individually stable outcome, we can choose $C_1 = \{a,c,d\}$ and $C_2 = \{b\}$. Note that this outcome is not Nash stable, since $b$'s current palette is $(1,0)$ and she prefers to be in the grand coalition $\{a,b,c,d\}$ with palette $(\frac12,\frac12)$. Moreover, in contrast to IS-stability, it does not matter whether other agents are willing to accept $b$. On the other hand, the outcome $C_1 = \{b,c,d\}$, $C_2 = \{a\}$ is Nash stable (and hence also individually stable).
\end{example}

\subsection{Parameterized Complexity}
Parameterized complexity~\cite{CyganFKLMPPS15,DowneyFellows13,Niedermeier06} analyzes the
running time of algorithms with respect to a parameter
$k\in\Nat$ and input size~$n$. The high-level idea is to find a parameter
that describes the structure of the instance such that the
combinatorial explosion can be confined to this parameter. In this
respect, the most favorable complexity class is \FPT
(\textit{fixed-parameter tractable}) which contains all problems that
can be decided by an algorithm running in $f(k)\cdot
n^{\bigoh(1)}$ time, where $f$ is a computable function. Algorithms with
this running-time are called \emph{fixed-parameter algorithms}. A less
favorable outcome is an \XP{} \emph{algorithm}, which is an algorithm
running in $n^{f(k)}$ time; problems admitting such
algorithms belong to the class \XP. Naturally, it may also happen that an \NP-complete problem remains \NP-hard even for a fixed value of $k$, in which case we call the problem \emph{\paraNP-complete}.

Showing that a problem is $\W[1]$-hard rules out the existence of a fixed-parameter algorithm under the well-established assumption that $\W[1]\neq \FPT$.
This is done via a \emph{parameterized reduction}~\cite{CyganFKLMPPS15,DowneyFellows13} from some known $\W[1]$-hard problem. A parameterized reduction from a parameterized problem $\PPP$ to a parameterized problem $\QQQ$ is a function:
\begin{itemize}
	\item which maps \YES-instances to \YES-instances and \NO-instances to \NO-instances,
	\item which can be computed in $f(k)\cdot
	n^{\bigoh(1)}$ time, where $f$ is a computable function, and
	\item where the parameter of the output instance can be upper-bounded by some function of the parameter of the input instance.
\end{itemize}

\section{Algorithms and Tractable Fragments}

The aim of this section is to establish the algorithmic upper bounds that form the foundation for the complexity picture provided in \Cref{fig:complexityPicture}. We begin with a fairly simple observation that establishes our first two tractable fragments.

\begin{lemma}\label{lem:allHDG:FPT:StrongNumSize}\label{lem:allHDG:XP:WeakNumSize}
 \allHDGNashShort and \allHDGIndividualShort are:
	\begin{enumerate}
		\item fixed-parameter tractable parameterized by $\scoalnum+\coalsize$, and
		\item in \XP parameterized by $\coalnum+\coalsize$.
	\end{enumerate}
\end{lemma}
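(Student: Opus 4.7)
The plan for both parts is to exploit the fact that the parameters in each case bound the relevant structural complexity of any acceptable outcome tightly enough that one can brute-force enumerate all candidate outcomes, and then verify stability for each candidate via a straightforward polynomial-time check that uses only oracle queries.

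For part (1), I would first observe that any outcome satisfying the required bounds consists of at most $\scoalnum$ coalitions each of size at most $\coalsize$, so it involves at most $\scoalnum \cdot \coalsize$ agents; if $n > \scoalnum \cdot \coalsize$, the instance is immediately a \No-instance. Otherwise $n$ itself is bounded by a function of $\scoalnum + \coalsize$, so the total number of partitions of $N$ is at most the Bell number $B(n) \leq n^n$, i.e., a function of the parameter alone. The algorithm enumerates all such partitions, discards those violating the size or count bounds, and verifies stability of each remaining candidate $\Pi$ by iterating over every pair $(i, C)$ with $i \in N$ and $C \in \Pi \cup \{\emptyset\}$, querying the oracle to check whether $C \cup \{i\} \succ_i \Pi_i$, and, for individual stability, additionally querying each $j \in C$ to check whether $C \cup \{i\} \succeq_j C$. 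The total running time is $f(\scoalnum, \coalsize) \cdot n^{\bigoh(1)}$, giving an \FPT algorithm for both stability notions.

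For part (2), the bound $\coalnum$ only restricts non-trivial coalitions, so the number of singleton coalitions—and hence $n$—can be arbitrary. The crucial observation is that the set $A \subseteq N$ of agents lying in non-trivial coalitions still has $|A| \leq \coalnum \cdot \coalsize$. The algorithm therefore first guesses $A$ in $n^{\bigoh(\coalnum \cdot \coalsize)}$ ways, then guesses a partition of $A$ into at most $\coalnum$ parts, each of size in $[2, \coalsize]$, contributing only an additional factor depending on $\coalnum$ and $\coalsize$, and finally declares every agent outside $A$ to be a singleton. Stability of the resulting candidate is then checked exactly as in part (1), yielding an overall running time of $n^{\bigoh(\coalnum \cdot \coalsize)}$, i.e., an \XP algorithm.

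Neither part poses any real technical obstacle; the only point requiring care is in part (2), where one must remember to check potential deviations to every singleton coalition in the candidate outcome, not only to the non-trivial ones, since a singleton may wish to join another singleton to form a new pair and a member of a non-trivial coalition may wish to depart to some particular singleton. Because stability is verified by iterating over every coalition in the outcome plus the empty coalition, this is handled automatically by the polynomial-time stability check.
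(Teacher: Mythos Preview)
Your proposal is correct and matches the paper's approach essentially point for point: for part (1) both you and the paper observe that $n\leq \scoalnum\cdot\coalsize$ or the instance is trivially \No, then brute-force all partitions; for part (2) both enumerate the at most $\coalnum\cdot\coalsize$ agents in non-trivial coalitions together with their arrangement, leaving the rest as singletons, and then verify stability in polynomial time. The only cosmetic difference is that the paper first guesses the shape (number and sizes) of the non-trivial coalitions and then fills the positions, whereas you first guess the agent set $A$ and then partition it; the resulting running-time bound $n^{\bigoh(\coalnum\cdot\coalsize)}$ is identical.
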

\begin{proof}
	For the first claim, note that either $n \le \scoalnum \cdot \coalsize$ or we are facing an obvious \NO-instance. Hence we can restrict our attention to the case where $n$ is bounded by a function of  the parameters.
	Therefore, in \FPT{} time it is possible to enumerate all possible coalition structures and check their stability and properties. There are at most $\scoalnum^n = \scoalnum^{\scoalnum \cdot \coalsize}$ possible outcomes and we can check whether any such outcome is stable in time at most $\bigoh(\scoalnum^2\cdot \coalsize)$, which results in a total running time of at most $\bigoh(\coalsize\scoalnum^{\coalsize\scoalnum+2})$.

	For the second claim, we begin by branching to determine the structure of non-trivial coalitions in a solution, that is, we branch on how many non-trivial coalitions there will be in the outcome and for each such coalition we branch to determine its size. This yields a branching factor of at most $\coalnum \cdot \coalsize^\coalnum$. Now in each branch we have at most $\coalnum \cdot \coalsize$ ``positions'' for agents in the identified non-trivial coalitions, and we apply an additional round of branching to determine which agents will occupy these positions. This yields an additional branching factor of at most $n^{\coalnum \cdot \coalsize}$. At this point, the outcome is fully determined and we can check whether it is stable and satisfies the requirements in $\bigoh(n^2)$ time. This results in an algorithm with a total running time of at most $\bigoh(\coalnum\cdot\coalsize^{\coalnum}n^{\coalnum \cdot \coalsize+2})$.
\end{proof}

Our first non-trivial result is an algorithm for solving \HDGshort\ parameterized by the number of colors and the coalition size.
There, we first observe that in this case we may assume the number of types to also be bounded.

\begin{observation}\label{obs:colors_sizes_types}
	It holds that $\types \le (\colors^{\coalsize+1})!\cdot 2^{\colors^{\coalsize+1}}$.
\end{observation}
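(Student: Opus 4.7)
The plan is to factor the bound into two ingredients: first, a bound on the number of distinct palettes that can arise from coalitions of size at most $\coalsize$; and second, a bound on the number of weak orders on a set of that size. Since every coalition in an acceptable outcome has size at most $\coalsize$, the preferences of any agent over palettes of larger coalitions are irrelevant to the problem and may be ignored when counting (effective) agent types. Thus $\types$ is at most the number of weak orders on the set $P$ of palettes arising from coalitions of size at most $\coalsize$.

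To bound $|P|$, I would observe that any palette is determined by the multiset of colors of the underlying coalition, and a multiset of size $s$ over the color set $[\colors]$ is the image of some function $[s]\to [\colors]$. There are at most $\colors^s$ such functions, so summing over $s\in \{1,\dots,\coalsize\}$ yields
\[
|P| \le \sum_{s=1}^{\coalsize}\colors^s \le \colors^{\coalsize+1},
\]
where the final inequality holds for $\colors \ge 2$ (and is trivial when $\colors = 1$, in which case $|P|=1$).

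To count the weak orders on an $m$-element set, I would use that any weak order corresponds to an ordered partition of $[m]$ into non-empty blocks, which in turn is determined by a permutation of $[m]$ together with a subset $S \subseteq [m-1]$ indicating the positions at which new blocks begin. This exhibits a surjection from the set of such pairs to the set of weak orders, yielding at most $m!\cdot 2^{m-1} \le m!\cdot 2^m$ weak orders on $m$ elements. Substituting $m = \colors^{\coalsize+1}$ and using monotonicity of $m!\cdot 2^m$ in $m$ gives the claimed inequality. The proof is a straightforward counting exercise; the only conceptual point worth flagging is the initial reduction to palettes of small coalitions, without which the stated bound would not hold.
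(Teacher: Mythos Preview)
Your proof is correct and follows essentially the same approach as the paper: bound the number of relevant palettes by $\sum_{s=1}^{\coalsize}\colors^s \le \colors^{\coalsize+1}$ and then bound the number of weak orders on that many elements by $m!\cdot 2^m$. You add a bit more detail---the explicit surjection from (permutation, subset) pairs to weak orders, the $\colors=1$ edge case, and the remark that one must first restrict attention to palettes of coalitions of size at most $\coalsize$---but the argument is the same.
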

\begin{proof}
	As the size of coalitions is bounded by $\coalsize$ and there are $\colors$ colors of agents, there are $p=\sum_{x = 1}^{\coalsize} \colors^x \le \colors^{\coalsize+1}$ possible palettes.
	Therefore, there are at most $p!\cdot 2^p \le (\colors^{\coalsize+1})!\cdot 2^{\colors^{\coalsize+1}}$ weak orders over the palettes in total, yielding the bound.
\end{proof}

\begin{theorem}\label{lem:allHDG:FPT:ColorsSize}
	\allHDGNashShort and \allHDGIndividualShort are fixed-parameter tractable parameterized by $\colors+\coalsize$.
\end{theorem}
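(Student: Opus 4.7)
The plan is to combine the structural bound from \Cref{obs:colors_sizes_types} with branching over the ``shape'' of a stable outcome and then a fixed-dimension integer program. Since \Cref{obs:colors_sizes_types} bounds $\types$ by a function of $\colors+\coalsize$, the number of distinct \emph{coalition profiles}---vectors $P = (m_{c,t}(P))_{c \in [\colors], t \in [\types]}$ with $\sum_{c,t} m_{c,t}(P) \le \coalsize$ indicating how many agents of each (color, type) combination appear in a coalition---is at most $(\coalsize+1)^{\colors\cdot\types}$, still a function of $\colors+\coalsize$ alone. Let $\mathcal{P}$ denote this set of profiles. The key observation is that each agent's preference depends only on its type and on the palette of its coalition, which in turn is entirely determined by the coalition's profile.

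I will first branch over all subsets $\mathcal{S}\subseteq\mathcal{P}$ representing the set of profiles that actually appear in the sought outcome; there are at most $2^{|\mathcal{P}|}$ such subsets, again bounded by the parameter. For each branch I perform a stability check that uses only palette data. For Nash stability, for every $P\in\mathcal{S}$ and every pair $(c,t)$ with $m_{c,t}(P)\ge 1$, I verify that a type-$t$ agent of color $c$ does not strictly prefer either the palette of a freshly created singleton of color $c$, nor the palette obtained from any $P'\in\mathcal{S}$ by inserting a single agent of color $c$ and type $t$, to the palette of $P$ itself. For individual stability the latter case is relaxed: a deviation from $P$ to $P'$ is treated as blocked whenever some current member of $P'$ strictly prefers the palette of $P'$ to the augmented palette.

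If $\mathcal{S}$ passes the stability check, I decide feasibility by an ILP with one nonnegative integer variable $x_P$ per $P\in\mathcal{S}$, together with the constraints $x_P\ge 1$ for each $P\in\mathcal{S}$, the supply equations $\sum_{P\in\mathcal{S}} m_{c,t}(P)\cdot x_P = n_{c,t}$ for every pair $(c,t)$ (where $n_{c,t}$ denotes the number of agents of color $c$ and type $t$ in the input), and the upper bounds $\sum_{P\in\mathcal{S}} x_P \le \scoalnum$ and $\sum_{P\in\mathcal{S},\,|P|\ge 2} x_P \le \coalnum$. The number of variables is at most $|\mathcal{P}|$, hence bounded by a function of $\colors+\coalsize$, so each such ILP is solved in fixed-parameter time by Lenstra's algorithm. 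The instance is a \Yes-instance if and only if some branch yields a feasible ILP, for a total running time of the form $2^{|\mathcal{P}|}\cdot h(|\mathcal{P}|)\cdot \mathrm{poly}(n)$ for some computable $h$.

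The main obstacle is getting the stability check right: stability is a relation between the profiles present in the outcome---an agent in $P$ may be tempted by any $P'$ that happens to be realised, or by forming a new singleton---and so it cannot be decided profile-by-profile from $\mathcal{P}$ alone. That is exactly what the preliminary enumeration of $\mathcal{S}$ is for, and this enumeration is affordable precisely because $|\mathcal{P}|$ is bounded by the parameter; without the bound on $\coalsize$ (and hence on $\types$ via \Cref{obs:colors_sizes_types}), the exponential factor $2^{|\mathcal{P}|}$ would be uncontrolled, and the stability relation between profiles could no longer be enumerated in advance.
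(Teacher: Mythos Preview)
Your overall strategy coincides with the paper's---bound $|\mathcal{P}|$ via \Cref{obs:colors_sizes_types}, branch over which profiles occur, check stability purely from palette data, and decide realisability with a small integer program---but there is a genuine gap in the stability check. Branching only on the \emph{set} $\mathcal{S}\subseteq\mathcal{P}$ of occurring profiles is not sufficient: whether an agent sitting in a coalition of profile $P$ can deviate to ``another'' coalition of the same profile $P$ depends on whether $P$ is realised at least twice, and your check does not distinguish the two cases. Concretely, take two colours and two agents $A$ (colour~$1$) and $C$ (colour~$2$); let $A$'s preferences satisfy $(\tfrac23,\tfrac13)\succ(\tfrac12,\tfrac12)\succ(1,0)$ and let $C$ prefer $(\tfrac12,\tfrac12)$ to $(0,1)$. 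The unique Nash-stable outcome is $\{A,C\}$ (the only alternative $\{\{A\},\{C\}\}$ is unstable since $A$ wants to join $C$), so this is a \Yes-instance. Yet for $\mathcal{S}=\{P\}$ with $P=(1,1)$ your test compares the palette of $P$ with that of $P'^{+1}=P^{+1}=(2,1)$, which $A$ strictly prefers, so you reject this branch; the branch $\mathcal{S}=\{(1,0),(0,1)\}$ is (correctly) rejected too, and your algorithm outputs \No.

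The repair is exactly what the paper does: for each profile branch over $\{0,1,2^+\}$ rather than $\{0,\ge 1\}$, and apply the ``self-deviation'' test $P'=P$ only in the $2^+$ case. This raises the branching from $2^{|\mathcal{P}|}$ to $3^{|\mathcal{P}|}$, still bounded in the parameter, and the ILP needs only the obvious adjustment $x_P\ge 2$ for profiles in the $2^+$ bucket. (A second, harmless, difference: you solve the ILP via Lenstra using the bound on the number of \emph{variables}, whereas the paper exploits the bound on the number of \emph{constraints}; either route gives \FPT.)
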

\begin{proof}
	Let $\mathcal{C}$ be the set of all possible types of coalitions, where each type of coalition $C$ is specified by the set of the combination of agent type and color of each agent in a coalition of type $C$. Observe that $|\mathcal{C}| \le (\types \cdot \colors)^{\coalsize+1}$, which by \Cref{obs:colors_sizes_types} means that $|\mathcal{C}| \le 2^{\colors^{\bigoh(\coalsize)}}$.
	Let $\mathcal{C}_{\ge 2}$ be the set of all non-trivial types of coalitions from $\mathcal{C}$.
	For each type of coalition $C \in \mathcal{C}$ we now branch to determine whether it occurs $0$, $1$, or at least $2$ times in the solution, i.e., in a hypothetical Nash-stable or individually stable outcome. For each type of coalition $C$, we store the condition on the number of its occurrences imposed by this branching via the variable $\pi_C \in \{0,1,2^+\}$; an outcome \emph{respects} the branch if each type of coalition occurs in an outcome \(\Pi\) $\pi_C$ many times. Observe that the information given by the variables $\pi_C$ is sufficient to determine whether an outcome respecting the branch will be (Nash or individually) stable. In fact:

	\begin{claim}
		It is possible to verify, in time at most $\coalsize \cdot |\mathcal{C}|^2$, whether every outcome respecting the branch $(\pi_C)_{C \in \mathcal{C}}$ is (Nash or individually) stable, or whether no such outcome is (Nash or individually) stable.
	\end{claim}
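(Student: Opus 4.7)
The plan is to exploit the symmetry inherent in the branching data: two outcomes that both respect $(\pi_C)_{C\in\mathcal{C}}$ contain the same multiset of coalition types (with the count being ``at least $2$'' whenever $\pi_C=2^+$), and within each coalition of type $C$ the agents are distinguishable only through their (agent type, color) pair. Since preferences depend only on palettes, and since agents sharing a type share an identical preference list, whether some agent admits an NS- or IS-deviation is determined by $(\pi_C)_{C\in\mathcal{C}}$ alone, independently of the specific bijection between abstract slots and concrete agents. Thus it suffices to test, once and for all, every possible ``deviation template'' consisting of a source coalition type, a (type, color) slot inside it, and a target coalition type.

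Concretely, I would iterate over all ordered pairs $(C, C')\in \mathcal{C}^2$ with $\pi_C\ge 1$ and $\pi_{C'}\ge 1$, together with the special target $C'=\emptyset$ that corresponds to splitting off a new singleton. For each such pair, and for each of the at most $\coalsize$ distinct (type, color) slots occurring in $C$, pick a representative agent $i$ of that slot and compare $C'\cup\{i\}$ with $C$ using a single oracle query to $\succeq_i$. An NS-deviation exists iff $C'\cup\{i\}\succ_i C$, subject to the caveat that if $C=C'$ we must additionally have $\pi_C=2^+$, so that a sibling coalition of type $C$ is guaranteed to exist as a target. For the IS variant, each tentative NS-deviation is confirmed with up to $\coalsize$ further oracle queries (one per (type, color) slot present in $C'$) which test that every existing member of $C'$ weakly prefers the augmented coalition to its present one. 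If any valid deviation template is found the branch is rejected; otherwise every outcome respecting it is stable and the branch is accepted.

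The cost is $|\mathcal{C}|^2$ source--target pairs, each inspected with $\bigoh(\coalsize)$ oracle work, matching the claimed $\coalsize\cdot|\mathcal{C}|^2$ bound. The main obstacle is establishing soundness, i.e., showing both that any slot-level template that fires yields a concrete deviation in \emph{every} outcome respecting the branch, and conversely that no concrete deviation can exist if no template fires. Both directions follow from the symmetry argument outlined above, but some care is needed for the boundary cases: when $C=C'$ the existence of a second target of type $C$ must be guaranteed (hence the $\pi_C=2^+$ check), and deviations to a fresh singleton must not be overlooked (hence the explicit inclusion of $C'=\emptyset$ in the enumeration).
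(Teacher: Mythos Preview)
Your approach is essentially identical to the paper's: both loop over source coalition types $C$ with $\pi_C\ge 1$, over the at most $\coalsize$ (type, color) slots in $C$, and over target types $C'$ with $\pi_{C'}\ge 1$, checking the self-target $C'=C$ only when $\pi_C=2^+$. You are in fact more careful than the paper on two points---you explicitly include the empty target $C'=\emptyset$ (deviation to a fresh singleton) and you spell out the extra $\bigoh(\coalsize)$ acceptance queries needed to confirm an IS-deviation---though the latter pushes the IS running time to $\coalsize^2\cdot|\mathcal{C}|^2$ rather than the stated $\coalsize\cdot|\mathcal{C}|^2$; this is harmless for the enclosing \FPT argument.
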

	\begin{claimproof}
		A simple exhaustive algorithm suffices.
		We loop over all types of coalitions $C \in \mathcal{C}$ with $\pi_C \ge 1$ and all its (at most~$\coalsize$) agent type and color combinations $i \in C$.
		Then we check for each coalition type $C' \in \mathcal{C} \setminus \{C\}$ with $\pi_{C'} \ge 1$ whether $(i,C,C')$ yields a deviation (either NS- or IS-deviation).
		Furthermore, if $\pi_C = 2^+$ we also check whether $(i,C,C)$ yields a deviation.
		The outcome is stable if and only if we do not find any deviation.
	\end{claimproof}

  	At this point it remains to determine whether there exists an outcome $\Pi$ respecting the branch $(\pi_C)_{C \in \mathcal{C}}$.
 	 We resolve this by encoding the question into an Integer Linear Program (ILP) with boundedly-many constraints.

	Let $a^C_{c,t}$ be the number of agents of color $c$ and type~$t$ in a type of coalition $C \in \mathcal{C}$.
  	We denote by $n_{c,t}$ the number of agents of color $c$ and type $t$ in the instance (note that the full preferences of each agent can be obtained using $\bigoh(|\mathcal{C}|^2 \cdot n)$ oracle calls).
  	Further, we denote by $\mathcal{C}'$ the set of types of coalitions from $\mathcal{C}$ with $\pi_C = 2^+$ and $\mathcal{C}'_{\ge 2}=\mathcal{C}' \cap \mathcal{C}_{\ge 2}$.
  	We construct the following ILP with integer variables $x_C$ for $C \in \mathcal{C}'$ which encode the number of times a coalition with type \(C\) occurs in our solution:

  	\begingroup
  	\allowdisplaybreaks
  	\begin{align*}
    	\sum_{C \in \mathcal{C}'} a^C_{c,t} \cdot x_C &= n_{c,t} - \sum_{C \in \mathcal{C}\setminus\mathcal{C}'}a^C_{c,t} \cdot \pi_C
                                                                     &\forall c \in [\colors],t\in[\types]  \\
                    \sum_{C \in \mathcal{C}'} x_C &\le \scoalnum - \sum_{C \in \mathcal{C} \setminus \mathcal{C}'}\pi_C       \\
            \sum_{C \in \mathcal{C}'_{\ge 2}} x_C &\le \coalnum - \sum_{C \in \mathcal{C}_{\ge 2} \setminus \mathcal{C}'}\pi_C       \\
                                              x_C &\ge 2             &\forall C \in \mathcal{C}'
  	\end{align*}
  	\endgroup
  	It is straightforward to verify that the ILP is feasible if and only if it is possible to realize an outcome respecting the branch $(\pi_C)_{C \in \mathcal{C}}$ using the given agent set:
  	Indeed, the first set of inequalities ensures the sufficient amount of agents of the correct type to realize the number of coalitions of type \(C\) described by the value of \(x_C\) in a solution of the ILP.
  	The second and third set of inequalities ensure that we do not create to many coalitions and non-trivial coalitions respectively.
  	The final set of inequalities ensures that indeed at least two coalitions of each type in \(\mathcal{C}'\) are created, conforming to our branch.
  	
  	As the largest coefficient is bounded by $\coalsize$, the right-hand sides are bounded by $n$, the number of constraints (excluding the non-negativity constraints) is bounded by $\colors\cdot\types+2$, and the number of variables is $|\mathcal{C}|=2^{\colors^{\bigoh(\coalsize)}}$, it is possible to determine whether this ILP is feasible in
  	\[\bigoh\Big(\big(\coalsize^2(\colors\cdot\types+2)\big)^{\colors\cdot\types+2}\cdot \log n + 2^{\colors^{\bigoh(\coalsize)}}\cdot(\colors\cdot\types+2)\Big)=2^{2^{\colors^{\bigoh(\coalsize)}}}\cdot \log n\] time~\cite{EisenbrandW2020, JansenR19}.
  	The total running time is hence at most $2^{2^{\colors^{\bigoh(\coalsize)}}}\cdot n$.
\end{proof}

Next, we turn our attention to \HDGshort\ parameterized by the number of colors and agent types. While our previous result reduced the problem to a tractable fragment of ILP, here we will employ a non-trivial dynamic programming subroutine.

\begin{theorem}\label{lem:allHDG:XP:ColorsTypes}
	\allHDGNashShort and \allHDGIndividualShort are in \XP parameterized by $\colors+\types$.
\end{theorem}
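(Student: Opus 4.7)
The plan is to reduce stability to a guess-then-verify scheme in which the guess makes stability a local property of individual coalition signatures, followed by a dynamic program checking realizability. Each coalition is represented by a \emph{signature} $\sigma = (a_{c,t})_{(c,t) \in [\colors] \times [\types]} \in \mathbb{Z}_{\ge 0}^{\colors\cdot\types}$ recording multiplicities of each (color, type); there are at most $(n+1)^{\colors\cdot\types} = n^{\bigoh(\colors\types)}$ such signatures, and at most $n^{\bigoh(\colors)}$ distinct palettes of coalitions of size at most $n$.

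First, for every pair $(c,t)$ with $n_{c,t} \ge 1$, guess a palette $p^*_{c,t}$ intended as the worst palette of an outcome-coalition containing a $(c,t)$-agent; the total number of guesses is bounded by $n^{\bigoh(\colors^2\types)}$. For each guess, use the oracle to compute the set $\Sigma^{\star}$ of \emph{good} signatures $\sigma$ such that (i) whenever $\sigma_{c,t} \ge 1$, the palette of $\sigma$ is $\succeq_t p^*_{c,t}$, and (ii) for every $(c',t')$ with $n_{c',t'} \ge 1$, adding a fresh $(c',t')$-agent to $\sigma$ is not a threat---i.e.\ the palette of $\sigma + e_{c'}$ is $\preceq_{t'} p^*_{c',t'}$, or (only for individual stability) some type $t''$ occurring in $\sigma$ would strictly object. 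Also discard any guess in which the palette of a singleton of color $c$ strictly dominates $p^*_{c,t}$ for some $(c,t)$.

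Realizability is then verified by a dynamic program with state $(r_{c,t})_{(c,t)} \in [0,n]^{\colors\cdot\types}$ augmented with two counters tracking the number of used coalitions against $\scoalnum$ and of non-trivial coalitions against $\coalnum$; each transition picks a signature $\sigma \in \Sigma^{\star}$ of size at most $\coalsize$ and subtracts $\sigma$ from $r$, with the goal of reaching $r = 0$. Correctness is two-sided: any stable outcome $\Pi$ yields a valid guess by taking $p^*_{c,t}$ to be its actual worst palette containing a $(c,t)$-agent, and every signature used by $\Pi$ belongs to $\Sigma^{\star}$ since stability rules out improving moves; conversely, any partition produced by the DP under some guess is stable because condition (i) bounds each agent's current palette from below by $p^*_{c,t}$ and condition (ii) rules out every NS- (resp.\ IS-) deviation. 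The total running time is $n^{\bigoh(\colors^2\types)}$, which is \XP in $\colors + \types$.

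The main obstacle will be designing condition (ii) uniformly so that it captures IS---where the reject clause hinges on the set of types populating $\sigma$---as well as NS, for which only $\sigma$'s color profile matters. The crucial structural insight that keeps the guess small is that hospitability and non-temptingness depend only on $\sigma$ and the guess; in particular, the multiplicity of each signature in the outcome is irrelevant to stability, so it suffices to guess only the worst hosted palette per (color, type) rather than the entire support of the outcome.
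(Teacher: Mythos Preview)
Your high-level scheme---guess per $(c,t)$ a ``threshold'' palette, filter signatures locally, then DP over remaining counts---is exactly the paper's approach. However, guessing only the single worst palette $p^*_{c,t}$ is not enough, and your completeness claim (``every signature used by $\Pi$ belongs to $\Sigma^{\star}$'') is false.

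The failure is in condition (ii) applied to the \emph{worst} coalition itself. Suppose a stable outcome has, for some $(c,t)$, a unique worst coalition $C^*$ with palette $p^*_{c,t}$ and a second-worst coalition $C'$ with palette strictly better. Stability only requires that no $(c,t)$-agent outside $C^*$ wants to enter, i.e.\ $C^*+e_c \preceq_t C'$; it does \emph{not} require $C^*+e_c \preceq_t C^*$. Concretely, take two $(c,t)$-agents split between $C'$ and $C^*$ with preferences
\[
C' \ \succ_t\ C^*+e_c \ \succ_t\ C^* \ \succ_t\ C'+e_c.
\]
This outcome is Nash stable (neither agent improves by switching), yet the signature of $C^*$ fails your condition (ii) since $C^*+e_c \succ_t p^*_{c,t}=C^*$. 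No choice of $p^*_{c,t}$ saves you: condition (i) forces $p^*_{c,t}\preceq_t C^*$ while condition (ii) on $C^*$ forces $p^*_{c,t}\succeq_t C^*+e_c \succ_t C^*$, a contradiction. Your algorithm would therefore reject this \Yes-instance.

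The paper repairs exactly this by guessing \emph{two} palettes $C_1^{c,t}\preceq_t C_2^{c,t}$ (worst and second-worst) per $(c,t)$. Condition (ii) for the unique worst coalition is then relaxed to $C^*+e_c \preceq_t C_2^{c,t}$, while all other coalitions must satisfy the stronger $C+e_c \preceq_t C_1^{c,t}$. To enforce that at most one coalition uses the relaxed test, the DP state carries, besides the residual counts and the two coalition counters, a Boolean flag $w(c,t)$ per pair recording whether the ``worst'' coalition for $(c,t)$ has already been placed. With this fix the rest of your argument goes through and yields the same $n^{\bigoh(\colors^2\types)}$ running time.
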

\begin{proof}
	For each color $c\in [\colors]$ and agent type $t$ (represented as an integer in $[\types]$), let $n_{c,t}$ be the number of agents of color $c$ and type $t$ in the instance (the type of each agent can be determined using  $\bigoh(n^{2 \colors})$ oracle calls). It will be useful to observe that agents of the same type and color are pairwise interchangeable
	without affecting the stability of an outcome.
	
	For each combination of $c\in [\colors]$ and $t\in[\types]$, we now branch to determine the ``worst'' and ``second-worst'' coalitions in which an agent of color $c$ and type $t$ appears in a sought-after NS/IS outcome. Formally, this branching identifies $(\colors\cdot \types)$-many coalitions $C_1^{c,t}$ and $C_2^{c,t}$, each represented as a palette $\left(\frac{|C_\star^{c,t}\cap D_1|}{|C_\star^{c,t}|},\dots, \frac{|C_\star^{c,t}\cap D_\colors|}{|C_\star^{c,t}|}\right)$, such that $C_2^{c,t}\succeq_t C_1^{c,t}$. Since there are at most $(n+1)^\colors$ many such palettes, the number of branches is upper-bounded by $\big((n+1)^{2\colors^2 \cdot \types}\big)$.
	
	Our aim is to determine whether it is possible to pack agents into stable coalitions, whereas the insight we use is that the information we branched on allows us to pre-determine whether a coalition will be stable or not.
	Intuitively, the coalitions should not be ``worse'' than the branch allows (only one coalition as ``bad'' as $C_1^{c,t}$ is allowed, unless $C_1^{c,t}\sim_t C_2^{c,t}$, and the others must be at least as ``good'' as $C_2^{c,t}$) ensuring that they do not become the source of a deviation and should not represent a destination for a deviation from coalitions as ``bad'' as $C_1^{c,t}$ or $C_2^{c,t}$.
	We solve this packing problem via dynamic programming.
	
	Before we proceed to formalizing our solution, we provide a high-level description of the dynamic programming procedure. The procedure constructs coalitions one by one, and stores information about the constructed coalitions in the form of ``patterns'' which capture all the required information to prevent undesirable deviations. Crucially, the number of patterns is bounded by $n^{\bigoh(\colors\cdot \types)}$, and the program stores a table which keeps track of whether each pattern can be realized or not. As soon as the algorithm constructs a pattern which corresponds to a solution we output \texttt{yes}, while if the algorithm reaches a stage where it cannot identify any new realized patterns we output \texttt{no}. 
	
	Formally, we say that a \emph{partial solution} is a packing $C_1, \ldots, C_\ell$ of coalitions such that for every $i \in [\ell]$ we have
	\begin{itemize}
		\item $C_i \subseteq N$,
		\item $|C_i| \le \coalsize$,
		\item for every color $c$ and type $t$, if $C_i$ contains an agent of color $c$ and type $t$, then $C_i\succeq_t C_1^{c,t}$,
		\item for every color $c$ and type $t$
			\begin{itemize}
				\item either $C_i$ contains all $n_{c,t}$ agents of color $c$ and type $t$,
				\item or $C_i$ contains an agent of color $c$ and type $t$, $C_i \prec_t C_2^{c,t}$, and $C_i^{+(c,t)} \preceq_t C_2^{c,t}$,
				\item or $C_i^{+(c,t)} \preceq_t C_1^{c,t}$,
				\item or alternatively, in case of IS, there is $t' \in [\types]$ such that an agent of type $t'$ appears in $C_i$ and $C_i^{+(c,t)} \prec_{t'} C_i$,
			\end{itemize}
		where $C_i^{+(c,t)}$ is obtained from $C_i$ by adding one more agent of color $c$ and type $t$,
	\end{itemize}
	and for every $i,j \in [\ell],\, i \neq j$ we have
	\begin{itemize}
		\item $C_i \cap C_j = \emptyset$,
		\item $C_i$ contains no agent which admits NS/IS-deviation to $C_j$ and vice versa,
		\item if both $C_i$ and $C_j$ contain an agent of color $c$ and type $t$, then $C_i\succeq_t C_2^{c,t}$ or $C_j\succeq_t C_2^{c,t}$,
	\end{itemize}
	and at most $\coalnum$ coalitions are non-trivial and $\ell \le \scoalnum$.
	
	On the one hand a partial solution containing all agents is a solution.
	On the other hand, a solution is a partial solution containing all agents for the branch in which  $C_1^{c,t}$ and $C_2^{c,t}$ are set according to our semantics.
	Hence, our aim is to determine whether there is such a partial solution.
	
	Since the agents of the same type and color are interchangeable, we will slightly abuse the notation and
	call $C_1, \ldots, C_\ell$ a partial solution even if the coalitions are not disjoint, but there is enough agents of given types and colors such that the coalitions can be made disjoint by exchanging agents of the same type and color.
	
	Since we cannot store or even examine all partial solutions during the course of dynamic programming, we will collapse partial solutions that are in some sense equivalent into a single entry.
	To this end, we describe each partial solution with a \emph{pattern} formed by
	\begin{itemize}
		\item a mapping $a \colon [\colors]\times [\types]\to [n]_0$ representing for each $c \in [\colors]$ and $t \in [\types]$ the number of agents of color $c$ and type $t$ used by the partial solution,
		\item a mapping $w \colon [\colors]\times [\types]\to \{0,1\}$ representing for each $c \in [\colors]$ and $t \in [\types]$ whether the partial solution contains a coalition $C_i$ containing an agent of color $c$ and type $t$ such that $C_i \prec_t C_2^{c,t}$,
		\item and integers $r$ and $\ell$, representing the number of non-trivial and all coalitions in the partial solution, respectively.
	\end{itemize}
	
	Note that there are at most $(n+1)^{\colors \cdot \types+2}\cdot 2^{\colors \cdot \types}$ patterns.
	We call a pattern \emph{realized}, if there is a partial solution described by the pattern.
	Obviously, the \emph{zero pattern} formed by mappings $a$ and $w$ which are both identically zero and $r=\ell=0$ is realized by $\emptyset$.
	Our task is to determine whether there is a \emph{target} realized pattern $(a,w,r,\ell)$ such that $a(c,t)=n_{c,t}$ for each $c \in [\colors]$ and $t \in [\types]$, $r \le \coalnum$, and $\ell\le \scoalnum$ ($w$ can be arbitrary).
	
	We create a table storing for each pattern whether it can be realized or not.
	We initialize this table by marking each pattern as not realized and only the zero pattern as realized.
	Then we traverse the table in order of growing $\ell$ and try to add to (a partial solution described by) the realized pattern one more coalition. If this creates a partial solution, we reveal one more realized pattern. Let us describe this in more detail.
	
	Let $(a,w,r,\ell)$ be a realized pattern.
	Let $C \subseteq N$ be any non-empty coalition and let $r^C =0$ if $|C|=1$ and $r^C =1$ otherwise.
	Let $a^C \colon [\colors]\times [\types]\to [n]_0$ be a mapping such that for each $c \in [\colors]$ and $t \in [\types]$ the number of agents of color $c$ and type $t$ in $C$ is $a^C(c,t)$.
	Furthermore, let $w^C \colon [\colors]\times [\types]\to \{0,1\}$ be a mapping such that for each $c \in [\colors]$ and $t \in [\types]$ we have $w^C(c,t)=1$ if $a^C(c,t) \ge 1$ and $C \prec_t C_2^{c,t}$ and $w^C(c,t)=0$ otherwise.
	
	We call a coalition $C$ \emph{compatible} with pattern $(a,w,r,\ell)$ if for each $c \in [\colors]$ and $t \in [\types]$ the following conditions are met:
	\begin{itemize}
	 \item $|C| \le \coalsize$,
	 \item if $a^C(c,t) \ge 1$, then $C \succeq_t C_1^{c,t}$,
	 \item $a(c,t)+ a^C(c,t) \le n_{c,t}$,
	 \item $w(c,t)+ w^C(c,t) \le 1$,
	 \item $r + r^C \le \coalnum$,
	 \item $\ell + 1 \le \scoalnum$,
	 \item and
	 \begin{itemize}
	 \item either $w^C(c,t)=1$ and $C^{+(c,t)} \preceq_t C_2^{c,t}$,
	 \item or $w^C(c,t)=0$ and $C^{+(c,t)} \preceq_t C_1^{c,t}$,
	 \item or $a^C(c,t) = n_{c,t}$,
	 \item or alternatively, in case of IS, there is $t' \in [\types]$ with $\sum_{c' \in [\colors]} a^C(c',t') \ge 1$ such that $C^{+(c,t)} \prec_{t'} C$,
	 \end{itemize}
	\end{itemize}
	where $C^{+(c,t)}$ is obtained from $C$ by adding one more agent of color $c$ and type~$t$.
	
	As the algorithm traverses the table in order of growing $\ell$, whenever it reveals a realized pattern $(a,w,r,\ell)$, it cycles over all possible different coalitions (that is over all possible mappings $a^C$ determining such coalitions) and if $C$ is compatible with $(a,w,r,\ell)$, then it mark the pattern $(a',w',r',\ell')$ as realized, where $r'=r+r^C$, $\ell'=\ell+1$, and for each $c \in [\colors]$ and $t \in [\types]$ we have $a'(c,t)=a(c,t)+ a^C(c,t)$ and $w'(c,t)= w(c,t)+ w^C(c,t)$.
	
	Since there are at most $(n+1)^{\colors \cdot \types+2}\cdot 2^{\colors \cdot \types}$ patterns, $(n+1)^{\colors \cdot \types}$ possible  different coalitions, and the compatibility conditions can be checked in polynomial time, the DP algorithm works in $n^{\bigoh(\colors \cdot \types)}$ time.
	Furthermore, since there are $\big((n+1)^{2\colors^2 \cdot \types}\big)$ initial branches for $C_\star^{c,t}$'s, the total running time of the algorithm is $n^{\bigoh(\colors^2 \cdot \types)}$.
	
	It remains to prove that the algorithm fills the table correctly, that is, it marks a pattern realized if and only if it is indeed realized.
	We prove this by induction on $\ell$.
	The claim is trivial for $\ell=0$, as the zero pattern is the only realized pattern with $\ell=0$.
	Hence, let us assume that $k \ge 1$, the claim is true for all patterns with $\ell \le k$, and we want to prove it for patterns with $\ell=k+1$.
	
	We start with the only if direction. Suppose that a pattern $(a',w',r',\ell')$ with $\ell'=k+1$ was marked as realized while the algorithm was processing a pattern $(a,w,r,\ell)$ that was marked as realized and found a compatible coalition $C$ such that $r'=r+r^C$, $\ell'=\ell+1$, and for each $c \in [\colors]$ and $t \in [\types]$ we have $a'(c,t)=a(c,t)+ a^C(c,t)$ and $w'(c,t)= w(c,t)+ w^C(c,t)$.
	Since the pattern $(a,w,r,\ell)$ was marked as realized and the claim is true for it as $\ell\le k$, there is a partial solution $C_1, \ldots, C_\ell$ described by $(a,w,r,\ell)$. We claim that $(a',w',r',\ell')$ is realized by the partial solution $C_1, \ldots, C_\ell, C_{\ell+1}$, where $C_{\ell+1}=C$. Let us first check that it is indeed a partial solution. The conditions are obviously satisfied, unless $i=\ell+1$ (the case $j=\ell+1$ is symmetric). Since $C$ is compatible, we have $|C_i| \le \coalsize$ and for each $c \in [\colors]$ and $t \in [\types]$ we have $a(c,t)+ a^C(c,t) \le n_{c,t}$ implying that the coalitions can be made disjoint by exchanging agents of the same type and color. We also have $C \succeq C_1^{c,t}$ and if there was $j \in [\ell]$ such that $C_j \prec C_2^{c,t}$ and $C \prec C_2^{c,t}$, then $w(c,t)=1$ and $w^C(c,t)=1$, contradicting $w(c,t)+ w^C(c,t) \le 1$. Hence either $C_j \succeq C_2^{c,t}$ for every $j \in [\ell]$ or $C \succeq C_2^{c,t}$ and $C_1, \ldots, C_{\ell+1}$ is indeed a partial solution. It is straightforward to check, that it is described by pattern $(a',w',r',\ell')$.
	
	For the if direction, suppose that pattern $(a',w',r',\ell')$ is realized by a partial solution $C_1, \ldots, C_{\ell'}$.
	Let $(a,w,r,\ell)$ be the pattern describing the partial solution $C_1, \ldots, C_{\ell'-1}$ and $C=C_{\ell'}$.
	It follows from the definition that $r'=r+r^C$, $\ell'=\ell+1$, and for each $c \in [\colors]$ and $t \in [\types]$ we have $a'(c,t)=a(c,t)+ a^C(c,t)$ and $w'(c,t)= w(c,t)+ w^C(c,t)$.
	Since $\ell \le k$, the pattern $(a,w,r,\ell)$ is marked as realized by the induction hypothesis.
	It is straightforward to check that $C$ is compatible with $(a,w,r,\ell)$.
	Hence the algorithm marks $(a',w',r',\ell')$ as realized, finishing the proof.
\end{proof}

The final result in this section is an \XP-algorithm for \HDGshort parameterized by the number of colors and non-trivial coalitions by combining branching with a flow subroutine similar to one used by Darmann et al.~\cite{DarmannEKLSW12}.

\begin{theorem}\label{lem:allHDGNash:XP:ColorsWeakNum}
	\allHDGNashShort and \allHDGIndividualShort are in \XP parameterized by $\colors+\coalnum$.
\end{theorem}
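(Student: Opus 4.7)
The plan is to branch on enough structural information about the non-trivial coalitions of a target outcome so that the remaining placement task reduces to a capacitated bipartite matching and is thus polynomially solvable. Start by enumerating the number $\coalnum'\in[0,\coalnum]$ of non-trivial coalitions, and for each of the $\coalnum'$ coalitions guess the tuple $(a_{k,c})_{c\in[\colors]}$ recording how many agents of color $c$ it will contain -- this yields at most $(n+1)^{\colors\cdot\coalnum}$ branches, and in each branch the palette $P_k$ of every non-trivial coalition is fixed, as is the number $s_c=|D_c|-\sum_k a_{k,c}$ of color-$c$ singletons in the outcome. Prune branches that violate $\coalsize$, $\scoalnum$, or $\sum_k a_{k,c}\le|D_c|$.

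In the Nash case, the palettes of all possible deviation destinations for an agent $i$ of color $c$ placed in $C_k$ are completely determined by the branch: each $P_\ell$ augmented with one color-$c$ member for $\ell\neq k$, the singleton palette for $c$, and, whenever $s_{c'}>0$, the palette obtained by joining an existing color-$c'$ singleton. Using $\bigoh(\coalnum+\colors)$ oracle queries per agent we compute the feasible placement set $A_i$ of each agent -- i.e., those destinations from which $i$ would admit no NS-deviation -- and the question of whether the $A_i$'s admit an assignment matching the prescribed counts $(a_{k,c},s_c)$ reduces, independently per color, to a capacitated bipartite matching that is solvable in polynomial time.

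For individual stability we additionally branch to record a blocker profile. For each pair $(\ell,c)$ with $\ell\in[\coalnum']$ and each pair $(c',c)$ with $c'$ a singleton color class, guess a flag in $\{0,1\}$ indicating whether some member of $C_\ell$, respectively some singleton of color $c'$, strictly prefers its current palette to the $c$-augmented palette; whenever a flag is $1$ also guess the identity of a specific witness agent realising the veto. This costs at most $n^{\bigoh(\colors\coalnum+\colors^2)}$ further branches. Given the guesses, the IS-feasibility of placing $i$ (of color $c$) in $C_k$ becomes purely local: for every candidate destination, either $i$ weakly prefers $P_k$ to the augmented palette there, or the destination's flag is $1$; additionally, any agent eventually placed in $C_k$ must accept adding each color $c'$ whose flag in $k$ is $0$, with an analogous per-agent constraint applying to singletons. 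Witness guesses are verified (the witness sits in an appropriate slot, strictly prefers the unchanged palette to the augmented one, and does not violate any $0$-flag consistency) and then pre-placed; the remaining assignment is again a capacitated bipartite matching per color, solvable in polynomial time. Taken together the branching contributes a factor of $n^{\bigoh((\colors+\coalnum)^2)}$, yielding an \XP\ algorithm in $\colors+\coalnum$; the technical heart of the argument is the IS case, where the blocker flags and witness identities are introduced precisely to decouple the mutual-approval condition into per-agent constraints that the flow/matching subroutine can consume.
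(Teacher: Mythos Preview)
Your approach is essentially the paper's: branch on the colour composition of the at most $\coalnum$ non-trivial coalitions, derive all relevant palettes, and reduce the residual placement question to a capacitated bipartite matching / flow. The Nash case is handled correctly and matches the paper almost verbatim.

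There is, however, a genuine gap in your treatment of singleton destinations in the IS case. For a non-trivial coalition $C_\ell$ there is a \emph{single} destination, so one refusing member (your witness) indeed blocks every IS-deviation into $C_\ell$; your flag semantics ``some member of $C_\ell$ refuses colour $c$'' is exactly right there. For trivial coalitions the situation is different: each singleton of colour $c'$ is a \emph{separate} destination, and an agent of colour $c$ can IS-deviate to \emph{any one} of them. Your flag $(c',c)=1$ merely asserts that \emph{some} singleton of colour $c'$ refuses colour $c$, and you guess a single witness realising this; but this does not prevent an IS-deviation to any \emph{other} singleton of colour $c'$ that happens to accept $c$. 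Consequently the local feasibility test ``either $i$ weakly prefers $P_k$ or the destination's flag is $1$'' is unsound for singleton destinations, and your algorithm can certify outcomes that are not individually stable.

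The fix is the one the paper uses: for singletons, let the flag for $(c',c)$ record whether \emph{all} singletons of colour $c'$ refuse colour $c$ (the paper's ``blocked pair''), and enforce this as a per-agent constraint on every agent that is placed as a singleton of colour $c'$---no witness is needed or helpful here. Deviations of colour-$c$ agents are then treated as blocked for that colour pair precisely when the pair is blocked, and otherwise the usual preference comparison applies. This costs only a $2^{\colors^2}$ factor for the singleton flags (rather than the $n^{\colors^2}$ your witness-based branching would require), and the rest of your matching argument goes through unchanged.
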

\begin{proof}
	First, consider the case of Nash stability. We begin by branching to determine the number of non-trivial coalitions (at most $\coalnum$ many), and for each non-trivial coalition and each color we branch to determine the number of agents of that color in the coalition; this, in total, yields a branching factor of at most $\coalnum\cdot (n+1)^{\coalnum\cdot\colors}$. We will view each branch as a ``guess'' of these properties of a hypothetical targeted Nash-stable outcome.
	
	If any of the coalitions is of size more than $\coalsize$ we discard the guess.
	For each color we sum up the number of agents of that color in the non-trivial coalitions and if this is more than the number of agents of that color available in the instance, we discard the guess.
	Otherwise we denote by $n_c$ the number of agents of color $c$ that should be in a trivial coalitions.
	Hence, there will be $\sum_{c \in [\colors]}n_c$ trivial coalitions. If this number together with the number of non-trivial coalitions is more than $\scoalnum$, we discard the guess.
	
	Our task is now to determine where each agent $a\in N$ can be placed. To this end, we build an auxiliary bipartite graph whose vertices are agents on the one side and coalition-color pairs on the other side.
	Let $\mathcal{C}$ be the set corresponding to our guessed coalitions (both trivial and non-trivial ones).
	We say that $C \in \mathcal{C}$ is \emph{valid} for an agent $a \in N$ of color $c$ iff
	\begin{itemize}
		\item
	     $c$ should appear in the coalition corresponding to $C$ and
	    \item
	     $a$ weakly prefers the coalition corresponding to $C$ over a coalition arising from adding \(a\) to any coalition corresponding to some \mbox{$\hat{C} \in \mathcal{C}$} or the empty coalition; (all the information required to determine this is the color composition of the coalition corresponding to \(C\) which we have already branched on).
	\end{itemize}
	
	Now, we can build the auxiliary graph.
	On the left side there is the agent set $N$ and on the right side there is the set of pairs $(C,c)$ for $C \in \mathcal{C}$ and a color $c$ that is present in~$C$.
	There is an edge between $a \in N$ and $(C,c)$ if $a$ has color $c$ and $C$ is valid for $a$; all these edges have capacity~$1$.
	We add a source~$s$ and connect it by an edge of capacity~$1$ with all agent vertices.
	We add a sink vertex~$t$ and connect it with all vertices in the right side (i.e., vertices of the form $(C,c)$) and set  their capacity to the guessed number of agents of color~$c$ in the coalition~$C$.
	See \Cref{fig:flow} for an illustration of the created network.
	\begin{figure}
		\centering
		\includegraphics[scale=0.7]{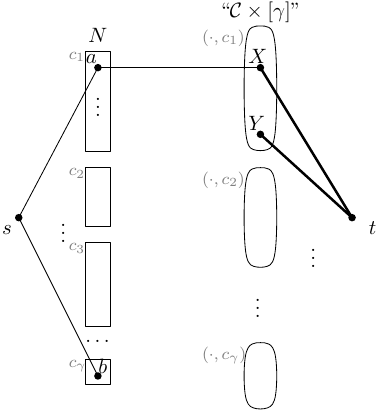}
		\caption{Illustration of the network flow instance used in the proof of \Cref{lem:allHDG:XP:WeakNumSize}\label{fig:flow}.
		Agents' membership in sets of a certain color set are indicated by rectangles; correspondingly the coalition color tuple's second entries are indicated by the rounded rectangles.
		All except the thick edges have unit capacity, and the capacity on the thick edges is determined by the branching.
		The existence of edges in the middle is determined by the branching as well.
		There will never be edges between agents of a certain color and coalition color tuple with a different second entry, e.g. there will never be an edge between \(b\) and \(X\) or \(Y\).
		In this example, the coalition that is the first entry of \(X\) is valid for \(a\), while the one of \(Y\) is not.}
	\end{figure}
	We verify whether it is possible to assign all the agents to appropriate coalitions using any max flow algorithm in polynomial time; see, e.g., \cite{EdmondsK72}.
	Due to the construction, each agent is assigned to an element of \(\mathcal{C}\) that is valid for her. This yields a Nash stable outcome for the given instance.
	The total running time is $(n+1)^{\coalnum \cdot \colors +\bigoh(1)}$.
	
	\smallskip
	Let us now turn to the more complicated case of individual stability, where we have to further refine the information we obtained by guessing.
	We additionally guess for each coalition and each color, whether an agent of this colors would be accepted in the coalition by the agents already in the coalition.
	Since the trivial coalitions are similar in that regard, we merely guess for each pair of colors $c,c'$ whether there will be a trivial coalition of color $c$ accepting agents of color $c'$. If there should be none, then we call $(c,c')$ a \emph{blocked} pair. Note that an agent in a trivial coalition is always indifferent about agents of the same color joining, so we exclude pairs formed by the same color.
	
	Furthermore, for each non-trivial coalition corresponding to some $C \in \mathcal{C}$ and for a color $c \in [\colors]$ such that no more agents of color $c$ should be accepted in the coalition corresponding to $C$,
	we guess the \emph{blocking} agent, that is, an agent $i$ such that $C^{+c} \prec_i C$, where $C^{+c}$ is obtained from the coalition corresponding to $C$ by adding one more agent of color $c$.
	Agent $i$ must have a color that should appear in $C$. The same agent can be reused for several colors, however not for different coalitions.
	This makes at most $(n+1)^{\coalnum \cdot \colors}$ guesses for the non-trivial coalitions and $2^{\colors^2}$ guesses for the trivial coalitions.
	
	For $c \in [\colors]$ let $\mathcal{C}_c$ be the set of guessed coalitions designated to accept agents of color $c$ according to our guess (including trivial ones).
	We say that a coalition $C \in \mathcal{C}$ is \emph{valid} for an agent $a \in N$ of color $c$ if
	\begin{itemize}
		\item color $c$ should appear in the coalition corresponding to $C$ and
		\item $a$ weakly prefers the coalition corresponding to $C$ over a coalition arising from adding \(a\) to any coalition corresponding to some \mbox{$\hat{C} \in \mathcal{C}$} or the empty coalition; (all the information required to determine this the color composition of the coalition corresponding to \(C\) which we have already branched on) and
		\item if $C$ is trivial, then for every blocked pair $(c,c')$, we must have that \(c\) strictly prefers the coalition corresponding to $C$ to \(C^{+c'}\), where $C^{+c'}$ is obtained from the coalition corresponding to $C$ by adding an agent of color $c'$.
	\end{itemize}
	
	We again build the graph. A blocking agent can be only assigned to the coalition it is blocking, and only if it is valid for the coalition. If it is not valid for the coalition it should be blocking, then we may discard the guess. For the agents not blocking any coalition, we add the edges as before.
	The rest of the algorithm stays the same. The total running time is $2^{\colors^2}(n+1)^{2\coalnum \cdot \colors +\bigoh(1)}$.
\end{proof}

Observant readers will notice that there is one more tractability result in \Cref{fig:complexityPicture}, notably an \XP-algorithm parameterized by $\colors+\scoalnum$. Since we may assume that, w.l.o.g., $\scoalnum\geq \coalnum$, this follows as a corollary of \Cref{lem:allHDGNash:XP:ColorsWeakNum}.

\begin{corollary}\label{lem:allHDGNash:XP:ColorsStrongNum}
	\allHDGNashShort and \allHDGIndividualShort are in \XP parameterized by $\colors+\scoalnum$.
\end{corollary}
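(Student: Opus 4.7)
The plan is to derive this corollary directly from \Cref{lem:allHDGNash:XP:ColorsWeakNum} by observing that the parameter $\colors + \coalnum$ can be upper-bounded by $\colors + \scoalnum$ on every instance after a trivial preprocessing step.

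First, I would argue that we may assume without loss of generality that $\coalnum \leq \scoalnum$ in the input instance. Recall that $\scoalnum$ bounds the total number of coalitions (including trivial ones), while $\coalnum$ bounds only the number of non-trivial coalitions; in particular, the set of non-trivial coalitions in any outcome is a subset of the set of all coalitions in that outcome. Hence, the constraint imposed by $\coalnum$ is never tighter than the constraint imposed by $\scoalnum$: replacing $\coalnum$ with $\min(\coalnum, \scoalnum)$ in the input does not change the set of acceptable outcomes, and this replacement can be done in linear time.

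After this preprocessing, we have $\coalnum \leq \scoalnum$, and so $\colors + \coalnum \leq \colors + \scoalnum$. I would then simply invoke \Cref{lem:allHDGNash:XP:ColorsWeakNum}, which yields an \XP\ algorithm of running time $n^{f(\colors + \coalnum)}$ for some computable function $f$. This is bounded by $n^{f(\colors + \scoalnum)}$, establishing membership in \XP\ with respect to the parameter $\colors + \scoalnum$ for both \allHDGNashShort\ and \allHDGIndividualShort.

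There is no substantial obstacle here; the only subtlety is justifying the WLOG assumption, which follows from the monotonicity of the $\scoalnum$/$\coalnum$ constraints described above.
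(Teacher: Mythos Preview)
Your proposal is correct and matches the paper's own justification essentially verbatim: the paper simply notes that one may assume w.l.o.g.\ that $\scoalnum \geq \coalnum$ and then invokes \Cref{lem:allHDGNash:XP:ColorsWeakNum}. Your additional sentence explaining \emph{why} the w.l.o.g.\ assumption is harmless (replacing $\coalnum$ by $\min(\coalnum,\scoalnum)$ leaves the set of acceptable outcomes unchanged) is a welcome clarification that the paper leaves implicit.
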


\section{Algorithmic Lower Bounds}

This section complements our algorithms and provides all the remaining results required to obtain a complexity picture for \HDGshort. We begin by recalling that \allHDGNashShort\ remains \NP-complete even when $\colors=2$ and \allHDGIndividualShort is \NPc even when $\colors=5$~\cite{BoehmerE2020}. The first two results in this section establish the remaining \paraNP-complete cases.

Before proceeding to the reductions, we build a general trap gadget that is similar in spirit to~\cite[Example~5.3]{BoehmerE2020} and may be of independent interest. The gadget uses agents of three colors---red, blue, and green; we assume these colors are distinct from the colors used outside this gadget.
  Let $R$ denote the red agent, $B$ the blue agent, and $G$ the green agents of which there are $m$ in total; here $G^+R$ is the profile $G^mR \succ \cdots \succ GGR \succ GR$, and similarly for $G^+B$ and $G^+$.
We will use $\mathcal{M}_G$ to denote all ``desirable'' coalitions containing at least one agent outside of the gadget, i.e., one agent that is neither red nor blue nor green.
\begin{align*}
    R &:: \textstyle RB \succ G^+R \succ R \succ \cdots  \\
    B &:: \textstyle G^+B \succ RB \succ B \succ \cdots  \\
\forall g\in G &:: \textstyle \mathcal{M}_G \succ G^+R \succ G^+B \succ G^+ \succ \cdots,
\end{align*}
Note that one can design this gadget using palettes if and only if one can design~$\mathcal{M}_G$ in this way.
Furthermore, observe that this can be done using $3$ additional master lists.

\begin{claim}\label{clm:greensMustBeInMG}
	If there is a green agent who is not part of a coalition in~$\mathcal{M}_G$, then the outcome is neither IS nor NS.
\end{claim}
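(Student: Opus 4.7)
I would proceed by contradiction: suppose $\Pi$ is a stable (Nash or individually stable) outcome and some green agent $g$ has $\Pi_g \notin \mathcal{M}_G$. The goal is to exhibit a deviation in every possible configuration.

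The first step is to constrain the locations of $R$, $B$, and $g$ using the given preference lists. Since $R$'s list ranks the singleton $\{R\}$ strictly above everything not explicitly listed, stability forces $\Pi_R$ to be either $\{R,B\}$, a $G^+R$-coalition, or $\{R\}$ (otherwise $R$ admits a deviation to the empty coalition, which is vacuously IS-accepted). Analogously, $\Pi_B$ must be $\{R,B\}$, a $G^+B$-coalition, or $\{B\}$. The same reasoning applied to $g$, together with $\Pi_g \notin \mathcal{M}_G$, forces the palette of $\Pi_g$ to lie in one of the three blocks $G^+R$, $G^+B$, or $G^+$.

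The second step is a case analysis over the five resulting configurations of $R$ and $B$, each time identifying a concrete deviation that is simultaneously an NS- and an IS-deviation. If $R, B$ are both in $\{R,B\}$, then $\Pi_g$ must be a $G^+$-coalition, and $B$ deviates to $\Pi_g$: $B$ strictly prefers the resulting $G^+B$-palette over $RB$, and the greens in $\Pi_g$ weakly prefer $G^+B$ to $G^+$. If $R$ is in a $G^+R$-coalition and $B$ is alone, $R$ deviates to $\{B\}$: $R$ prefers $RB$ to $G^+R$ and $B$ prefers $RB$ to $\{B\}$. If $R$ is in a $G^+R$-coalition and $B$ is in a $G^+B$-coalition, a green from $B$'s coalition deviates to $R$'s coalition, exploiting greens' preference of $G^+R$ over $G^+B$ and the monotonicity of $G^+R$ in the number of greens. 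If $R$ is alone (with $B$ either alone or in a $G^+B$-coalition), any green outside $\mathcal{M}_G$, e.g.\ $g$ itself, deviates to join $\{R\}$, producing a coalition with palette in $G^+R$ that the green strictly prefers and that $R$ weakly prefers over $\{R\}$.

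The main obstacle is verifying the IS acceptance condition in each subcase, since Nash deviations need no consent. This verification reduces to checking, for each deviation above, that the members of the target coalition weakly prefer the resulting palette; this follows immediately from the declared monotonicity within the $G^+R$, $G^+B$, and $G^+$ profiles (each sub-profile is descending in the number of greens), and from the fact that $B$ weakly prefers $RB$ to $\{B\}$ and $R$ weakly prefers $G^+R$ to $\{R\}$ as dictated by their preference lists.
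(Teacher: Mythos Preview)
Your proof is correct and follows essentially the same approach as the paper: a case analysis over the possible locations of $R$, $B$, and the free green agents, exhibiting in each case a deviation that is both an NS- and an IS-deviation. Your treatment is in fact more explicit than the paper's terse argument---you cleanly enumerate all five configurations (including the case $\Pi_R\in G^+R$ and $\Pi_B\in G^+B$, which the paper glosses over) and you verify the IS acceptance condition in each subcase, whereas the paper leaves several of these checks implicit.
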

\begin{claimproof}
	Suppose there is a green agent that does not participate in $\mathcal{M}_G$ and suppose towards the contradiction that~$\Pi$ is stable.
	Let us call the green agents not participating in $\mathcal{M}_G$ as the \emph{free agents}.
	We begin with observation that in~$\Pi$ it is impossible for the $R$ and $G$ agents to be all alone.
	
	Suppose $\{R,B\} \in \Pi$.
	Then, it must be the case that all free agents have to be partitioned into purely green coalitions.
	This is not individually stable as the blue agent want to join a coalition of free agents which would accept it; this is an individual deviation. Let $F$ be a non-empty subset of free-agents.
	If we have $\{ B \} \cup F \in \Pi$ (and consequently $\{ R \} \in \Pi$), then this is not individually stable, since the green agents want to join the red agent (who will accept them).
	If $\{ R \} \cup F \in \Pi$, then the blue agent~$B$ is alone and thus the red agent wants to join them (and is accepted).
	However, this yields the original situation; we have proven that $\Pi$ is not Nash/individually stable.
\end{claimproof}

\begin{theorem}\label{lem:allHDG:NPh:SizeTypes}
	\allHDGNashShort and \allHDGIndividualShort are \NPh even when $\coalsize = \types = 4$.
\end{theorem}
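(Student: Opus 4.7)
The plan is to reduce from $\XThreeC{}$. Given an instance $(U, \mathcal{F})$ with $|U|=3q$, I construct an \HDGshort{} instance as follows: for each $u \in U$ introduce an \emph{element agent} $a_u$ with its own private color $c_u$; instantiate a single copy of the trap gadget from the previous discussion with $q$ green agents, one red agent $R$, and one blue agent $B$; and set the ``desirable coalitions'' $\mathcal{M}_G$ to be exactly the palettes realized by the size-$4$ coalitions of the form $\{g, a_{u_1}, a_{u_2}, a_{u_3}\}$ where $g$ is a green agent and $\{u_1, u_2, u_3\} \in \mathcal{F}$. All element agents are then given the same preference list---ranking palettes in $\mathcal{M}_G$ above the singleton palette above all other palettes---so together with the three gadget preferences we obtain exactly $\types=4$ types, and every non-trivial coalition has size at most $\coalsize = 4$.

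For the forward direction I will use \Cref{clm:greensMustBeInMG} to argue that in any Nash/individually stable outcome each of the $q$ green agents must be placed in some $\mathcal{M}_G$-coalition; since such coalitions are pairwise disjoint and between them use exactly $3q$ element agents (corresponding to the ground set $U$), the $q$ sets of $\mathcal{F}$ identified this way form an exact cover of $U$. For the reverse direction, given an exact cover $\{S_1,\dots,S_q\} \subseteq \mathcal{F}$, I will form the coalitions $C_i=\{g_i\}\cup\{a_u:u\in S_i\}$ together with $\{R,B\}$; then every green and every element sits in a top-preference $\mathcal{M}_G$-coalition, $R$ is in its most preferred palette $RB$, and, by the design of the gadget preferences, none of $B$'s reachable coalitions realize a palette of type $G^+B$, so $B$'s next best option $RB$ is already attained.

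The main obstacle is ruling out the deviations that become relevant once we step outside the intended size-$4$ structure: one has to verify that $B$ cannot improve by joining some $C_i$ to form a size-$5$ palette mixing green, blue, and element colors; that no element agent can improve by leaving its $\mathcal{M}_G$-coalition for a different $\mathcal{M}_G$-coalition (again producing size~$5$), for the $\{R,B\}$ coalition, or for a trivial one; and that the greens' ordering of the distinct palettes within $\mathcal{M}_G$ is consistent with \Cref{clm:greensMustBeInMG}. All of these checks reduce to placing the corresponding offending palettes strictly below the current palette in the single shared element preference list and in each of the three gadget preference lists---an arrangement which is compatible with the type budget of $4$ because every such ranking depends only on the palette and not on the particular agent's identity.
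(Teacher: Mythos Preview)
Your proposal is correct and is essentially the same reduction the paper gives: element agents with private colors, the trap gadget instantiated with $q$ green agents, $\mathcal{M}_G$ set to the size-$4$ coalitions corresponding to triples in the family, and the type count of $4$ obtained from one shared element-agent list plus the three gadget lists. The paper's write-up is terser (it simply asserts the forward direction is ``not hard to verify''), whereas you spell out the individual deviation checks, but the construction and the use of \Cref{clm:greensMustBeInMG} for the backward direction are identical.
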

\begin{proof}
	We present a reduction from a classical \NPc problem~\cite{GareyJ1979} called \XThreeC (\XThreeCshort): given a set $U=\{1,\ldots,3m\}$ and a family $\mathcal{X}\subseteq \binom{U}{3}$, does there exist a set $S\subseteq \mathcal{X}$ such that every element of $U$ occurs in exactly one element of $S$?

	Let $\mathcal{I} = (U,\mathcal{X})$ be an instance of \XThreeCshort with $|U| = 3m$.
	We introduce an agent $a_u$ for every $u \in U$ with the (new) color $u$; we call these \emph{universe agents}.
	We want these agents to stay in a coalition in $\mathcal{X}$ together with one green agent, i.e., in a coalitions of the following kind
	\(
	 \left\{ X \cup \{G\} \mid X \in \mathcal{X} \right\}
	\), where $G$ is a green agent.
	For these agents we present a single master list: $\left\{ X \cup \{G\} \mid X \in \mathcal{X} \right\} \succ U$.
	That is, the agents can either be part of a desired coalition, or alone. We now apply the trap gadget construction described above to prevent stable outcomes where agents are alone---in particular, we will use $\mathcal{M}_G = \left\{ X \cup \{G\} \mid X \in \mathcal{X} \right\}$.
	
	In the rest of the proof we verify that indeed the two instances are equivalent.
	We do this for both Nash and individual stability at once.
	
	If $\mathcal{I}$ is a \YES-instance with solution $S_1, \ldots, S_m$, it is not hard to verify that $\Pi = \left( S_i \cup \{G\} \right)_{i \in [m]} \cup \{R,B\}$ is a Nash-stable outcome.
	
	In the other direction let us assume that $\Pi$ is a stable outcome.
	By \Cref{clm:greensMustBeInMG} we get that there are exactly $m$ coalitions using a green agent and three universe agents; note that the three universe agents must form a set in $\mathcal{X}$ as otherwise they would rather be alone.
	Consequently , we have found a cover of $U$ using exactly $m$ elements from $\mathcal{X}$ and thus $\mathcal{I}$ is a \YES-instance.
\end{proof}

The proof for our second intractable case is based on a reduction from the \NP-complete \PTTN problem~\cite{GareyJ1979}: given a multiset $\SSS$ of integers, is there a partition of $\SSS$ into two distinct subsets $\SSS_1$ and $\SSS_2$ such that the sum of elements in $\SSS_1$ equals the sum of numbers in $\SSS_2$?

\begin{theorem}\label{lem:allHDG:NPh:TypesNum}
	\allHDGNashShort is \NPh even when restricted to instances where ${\types=\coalnum = 2}$ and ${\scoalnum=3}$.
\end{theorem}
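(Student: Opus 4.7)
The plan is to reduce from \PTTN. Given an instance $\SSS = \{s_1, \dots, s_n\}$ summing to $2B$, for each $s_i$ I would introduce an ``item agent'' $u_i$ of its own unique color $c_i$, and I would then add on top the trap gadget (with $m = 2$ green agents $G_1, G_2$, one red agent $R$, and one blue agent $B$). I would set $\mathcal{M}_G$ to be precisely those palettes corresponding to coalitions containing exactly one green agent together with a subset $S$ of item agents for which $\sum_{u_i \in S} s_i = B$; because the $c_i$'s are pairwise distinct, this condition can be read off the palette and tested by the oracle in polynomial time. I would set $\scoalnum = \coalnum = 3$ and $\coalsize = n + 1$, all at most $4$ as required.

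The central design choice is the common preference list of the item agents: I would use three tiers, with ties within tiers---palettes in $\mathcal{M}_G$ at the top, all single-agent palettes (of any color) in the middle, and all remaining palettes at the bottom. Since no tier's definition refers to a particular color, the same weak order over palettes serves as the preference list of every item agent, contributing just one type. Together with the three types of the trap gadget this gives $\types = 4$.

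For the forward direction, a partition $\SSS_1 \cup \SSS_2 = \SSS$ with equal sums $B$ produces the candidate outcome $\Pi = \{\{G_1\} \cup \SSS_1, \{G_2\} \cup \SSS_2, \{R, B\}\}$; I would verify its stability directly by going through each agent and checking that any potential deviation drops at least one tier. For the backward direction, assuming a stable outcome with at most three coalitions, I would first invoke (an extension of) \Cref{clm:greensMustBeInMG} to conclude that both green agents must be in $\mathcal{M}_G$-coalitions, which, since $\mathcal{M}_G$ requires exactly one green per palette, forces them into two distinct coalitions $C_1, C_2$. The third coalition must contain both $R$ and $B$ (else too many coalitions) and can contain no items, since any item there would sit in the bottom tier and IS-deviate to alone (middle tier). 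Hence every item lies in $C_1 \cup C_2$, whose item-sums each equal $B$, producing a valid partition of $\SSS$ and completing the argument.

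The main obstacle is extending the trap-gadget claim to tolerate the presence of item agents. \Cref{clm:greensMustBeInMG} was proved under the assumption that the gadget lives essentially in isolation, but in our setting a free green could sit in a coalition together with items whose total sum is different from $B$. The hard part will be exhibiting explicit NS- and IS-deviations that rule out each such hybrid configuration---typically $B$ moving to a lonely green to form a $G^+B$-palette (both players strictly prefer this), or a green escaping a bottom-tier mixed coalition to become alone (improving from ``$\cdots$'' to $G^+$)---so that the conclusion of the trap-gadget claim is preserved.
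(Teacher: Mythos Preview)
Your proposal is correct and follows essentially the same approach as the paper: a reduction from \PTTN\ in which each number becomes an agent of its own color, with a single trichotomous preference type for all item agents (top: the target coalitions; middle: being alone; bottom: everything else), completed by the trap gadget to rule out outcomes with lone agents. Your construction coincides with the paper's treatment of the IS case (two green agents, $\scoalnum=\coalnum=3$, $\types=4$), and you apply it uniformly to both stability notions; the paper additionally gives a slightly sharper bound for Nash ($\types=\coalnum=2$, $\scoalnum=3$) via a single guard agent, but this is not needed for the stated theorem. Your identification of the one nontrivial verification---that \Cref{clm:greensMustBeInMG} survives in the presence of item agents---is accurate, and the deviations you sketch (a green escaping any non-$\mathcal{M}_G$ mixed coalition to $\emptyset$, $B$ joining a green-only coalition, $R$ leaving $G^+R$ for a lone $B$, an item escaping any coalition containing $R$ or $B$) are exactly what is needed. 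One cosmetic slip: your phrase ``all at most $4$'' should refer to $\scoalnum,\coalnum,\types$, not to $\coalsize=n+1$.
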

\begin{proof}
	Let $\III$ be an instance of \PTTN\ where the sum of all numbers is $2k$. We construct an equivalent instance of \allHDGNashShort as follows. For every number $i\in\SSS$ we introduce a \emph{normal} agent, distinguishing multiplicities of the same number. Every such agent is assigned a unique color, which allows us to distinguish them. We present a single master list for every agent such that on the first place there are all coalitions such that the colors of their members corresponds a choice of numbers that sum to $k$.
	Notice that this list is very large, but can be encoded using an oracle with polynomial running time. As their second choice, all the agents want to be in coalitions consisting entirely of their own color, which can be done continuing the single master list, as coalitions consisting of a color are impossible to form with agents that are not of this color.

	To complete the construction for \allHDGNashShort, we add into the instance one green (guard) agent $G$ such that their top choice is to be in a coalition consisting entirely of one other color apart from green or in the grand coalition. The second choice is to form a trivial coalition.

	Let $\III$ be a \YES-instance of the \PTTN problem and $\SSS_1,\SSS_2$ the desired partitioning. If we create coalitions $C_1 = \SSS_1$, $C_2 = \SSS_2$ and $C_3 = \{G\}$, we obtain a Nash stable and individually stable outcome, since all the agents are in the coalition of their top choice with the exception of $G$, who is alone but no preferable coalition exists for them.

	On the other hand, let $\III'$ be a \YES-instance of \allHDGNashShort with $\types = \coalnum = 2$ and $\scoalnum = 3$ and agents as prescribed above. Due to agent $G$ there are at least $2$ coalitions, since the grand coalition is not stable for normal agents. For a stable solution consisting of at least $2$ coalitions we know that one of them is the trivial coalition with agent $G$, as normal agents would want to deviate from coalitions with \(G\). Any second non-trivial coalition contains only normal agents and by their preferences must correspond numbers that sum up to \(k\). Hence we know that $G$ is in a trivial coalition, say $C_3$, and the two remaining coalitions form exactly the wanted partitions $\SSS_1$ and $\SSS_2$.
\end{proof}

Note that \allHDGNashShort and \allHDGIndividualShort are trivially solvable in polynomial time in case of $\scoalnum = 1$. The only possible coalition is the grand coalition and we can check for stability in polynomial time.

We remark that the instances produced by the reductions presented in the proofs of \Cref{lem:allHDG:NPh:SizeTypes,lem:allHDG:NPh:TypesNum} never admit any stable outcome that violates the restrictions imposed by the respective values of $\coalsize$, $\coalnum$ and $\scoalnum$ in the reductions.
More explicitly, in the instances constructed in the proof of \Cref{lem:allHDG:NPh:SizeTypes} there is no stable outcome in which any coalition has size more than four, and in the instances constructed in the proof of \Cref{lem:allHDG:NPh:TypesNum} there is no stable outcome with more than three coalitions in total and at most two non-singleton coalitions.
This implies that the hardnesses also holds when we consider\allHDGNashShort without restrictions on $\coalsize$, $\coalnum$ and $\scoalnum$ (i.e.\ setting each of these upper-bounds to \(n\)).

At this point, we know that none of our \XP\ algorithms can be strengthened by dropping a parameter in the parameterization. The remaining task for this section is to show the same for fixed-parameter algorithms, notably via matching \W[1]-hardness reductions.

For our first proof, we provide a reduction from a variant of a problem called \MSS (\MSSshort).
In the original formulation of \MSSshort, the input consists of integers $k, \omega\in\Nat$, a set $S=\{s_1,\ldots,s_n\}$ of $k$-dimensional vectors of non-negative integers and a target vector $t\in\Nat^k$.
The question is whether there exists a subset $S'\subseteq S$ of cardinality at most $\omega$ such that $\sum_{s\in S'}s = t$.

\MSSshort is known to be \Wh when parameterized by $k$~\cite[Lemma 3]{GanianKO2021}, and the same is true for several of its variants~\cite{GanianOR2019,GanianOR21}; crucially, all of these lower bound results hold even if every number on the input is encoded in unary. For our purposes, it will be useful to reduce from a partitioned variant of \MSSshort: instead of a set $S$, the input contains sets $S_1,\dots, S_\omega$ of $k$-dimensional vectors of non-negative integers, and the question is whether there exists a tuple $S'=(s_1,\dots,s_\omega)$, $\forall i\in [\omega]: s_i\in S_i\cup \{\bar{0}\}$ where \(\bar{0}\) denotes the \(k\)-dimensional vector with only zeros as entries, such that the vectors in $S'$ sum up to $t$. While this was not stated explicitly, the original proof of Ganian et al.~\cite{GanianKO2021} immediately establishes the \W[1]-hardness of partitioned \MSSshort parameterized by $\omega+k$: their reduction from \textsc{Multicolored Clique} parameterized by the clique size $\kappa$ results in instances of \MSSshort where vectors are partitioned into $\kappa+{\kappa\choose 2}$ sets and a solution can choose at most one element from each set (in fact, their construction ensures that exactly one is chosen per set).

Our first lower bound excluding fixed-parameter tractability targets the parameterization by the number of colors, number of types, and the number of coalitions.

\begin{theorem}\label{lem:allHDG:Wh:ColorsNumTypes}
	\allHDGNashShort and \allHDGIndividualShort are \Wh parameterized by $\colors+\scoalnum+\types$.
\end{theorem}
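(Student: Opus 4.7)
The plan is to reduce from the partitioned variant of \MSSshort\ parameterized by $k+\omega$ discussed above; by padding each $S_i$ with the zero vector we may assume that the instance asks for the selection of \emph{exactly} one vector per set whose sum equals the target $t = (t_1, \ldots, t_k)$. I will construct an equivalent \HDGshort\ instance with $\colors = k+\omega+\bigoh(1)$, $\types = \omega+\bigoh(1)$, and $\scoalnum = \bigoh(1)$, so that the combined parameter is bounded by a function of $k+\omega$.

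In the construction I introduce colors $c_1, \ldots, c_k$ (one per dimension), $d_1, \ldots, d_\omega$ (one per set), plus the three colors $R,B,G$ used by the trap gadget from before \Cref{lem:allHDG:NPh:SizeTypes}. For each vector $v \in S_i$ I create $v_j$ agents of color $c_j$ for every $j \in [k]$ together with a single \emph{token} agent of color $d_i$, and I let all agents arising from vectors of $S_i$ share a single preference list $\tau_i$ (plus a constant number of lists for the trap gadget). The list $\tau_i$ is designed so that its top-ranked palette $p^{*}$ is the one realized by a ``target'' coalition of size $\omega + \sum_{j=1}^{k} t_j$ containing exactly $t_j$ agents of color $c_j$ for each $j$ and exactly one token of color $d_{i'}$ for each $i' \in [\omega]$, while its second-ranked palette $p^{**}$ is the one realized by the ``complement'' coalition containing all remaining value and token agents. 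Crucially, $p^{**}$ is uniquely determined by the input and does not depend on which valid selection is made. All other palettes are ranked strictly below these, with particular care that any palette obtainable from $p^*$ or $p^{**}$ by the addition or removal of a single agent of any color is ranked strictly below both $p^*$ and $p^{**}$. The intended outcome consists of the target $T$, the complement $D$, and the trap gadget's $\{R,B\}$ coalition, with the green agents absorbed into $T$ or $D$ by defining $\mathcal{M}_G$ accordingly, yielding $\scoalnum = \bigoh(1)$.

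For the forward direction, any valid selection $(s_1,\ldots,s_\omega)$ with $\sum_i s_i = t$ immediately yields the outcome above, and the design of $\tau_i$ rules out every NS- and IS-deviation. For the reverse direction, any stable outcome must contain a coalition realizing $p^*$ (else some value agent could strictly improve by rearranging): the $d_{i'}$-entries of $p^*$ force exactly one token, and hence exactly one vector, per set to contribute to this coalition, while the $c_j$-entries force the contributed vectors to sum to $t$, so one recovers a valid MSS solution.

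The main obstacle will be the rigidity of the preference design: since the single list $\tau_i$ governs agents of different colors and different roles (value versus token), we must simultaneously rank every single-agent perturbation of $p^*$ and $p^{**}$ strictly below both of these palettes, which requires a careful palette-by-palette ordering and a case analysis over the possible deviations. In addition, the trap gadget must be synchronized with the target-complement structure so that the green agents indeed end up inside $T$ or $D$ instead of triggering further instability, and both arguments must be checked to go through uniformly for Nash and individual stability.
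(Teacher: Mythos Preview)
Your reduction has a genuine gap in the reverse direction. The palette \(p^*\) only records, for each color, the \emph{fraction} of agents of that color in the coalition; it carries no information about which vector an individual value agent ``came from''. Once you have created \(\sum_{i}\sum_{v\in S_i} v_j\) agents of color \(c_j\), any \(t_j\) of them can be placed into the target coalition \(T\), and this realizes \(p^*\) regardless of whether the underlying \MSSshort\ instance is a \YES-instance. Your sentence ``the \(d_{i'}\)-entries of \(p^*\) force exactly one token, and hence exactly one vector, per set to contribute'' conflates tokens with vectors: selecting one token of color \(d_i\) in no way ties the value agents in \(T\) to a single vector of \(S_i\), because value agents of color \(c_j\) arising from different vectors (and even different sets) are indistinguishable at the level of palettes. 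Consequently a stable outcome containing a coalition with palette \(p^*\) does not yield a valid selection \((s_1,\dots,s_\omega)\) summing to \(t\), and the reduction is not correct as stated.

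For comparison, the paper's proof avoids this pitfall by taking the opposite tack on the parameters: it sets \(\scoalnum=\omega\) rather than \(\bigoh(1)\), introduces exactly one marker agent per set \(S_i\) and exactly \(t[j]\) normal agents of color \(g_j\) (so the target is enforced by the total supply, not by a palette), and lets each marker's trichotomous preferences declare a coalition ``satisfying'' only if its normal-agent counts match some vector in \(S_i\). Because markers refuse to share a coalition and \(\scoalnum=\omega\), every coalition in a stable outcome contains exactly one marker, and stability then forces each non-trivial coalition to be satisfying; reading off the witnessing vectors gives the \MSSshort\ solution. The key difference is that the paper encodes the vector choice in the \emph{marker's} preference list (which can inspect the full palette of its own coalition), rather than trying to recover it from a global palette as you attempt.
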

\begin{proof}
	Given an instance $\III=(S_1,\dots,S_\omega,t,k)$ of partitioned \MSSshort, our reduction constructs an instance of \HDGshort\ as follows. For each $i\in [\omega]$, we introduce one \emph{marker} agent with color $m_i$ (a \emph{marker} color) and for each $j\in [k]$ we introduce $t[j]$-many (where $t[j]$ is the value of $t$ on the $j$-th coordinate) normal agents, all with color $g_j$; collectively, we refer to all the former agents as \emph{markers} and all the latter agents as \emph{normal}. Intuitively, the $\omega$ marker agents will each model the selection of one vector in $S'$ and the normal agents represent the numbers in individual coordinates (which must sum up to $t$). The total number of colors is $k+\omega$ and we set\footnote{We remark that the proof can be extended to any value of $\scoalnum>\omega$ by using the trap gadget described above \Cref{clm:greensMustBeInMG}.} $\scoalnum$ to $\omega$.
	
	Formally, the preferences of the agents are as follows. Marker agents have trichotomous preferences and each have their own agent type. In particular, a coalition $C$ is \emph{satisfying} for the marker agent with color $m_i$, $i\in [\omega]$ if: (1) $C$ contains no other marker color than $m_i$ and (2) there exists $s\in S_i$ such that for each $j\in [k]$, the number of agents of color $g_j$ in $C$ is precisely $s[j]$. Condition (2) is formalized via setting the palette to $\left(\frac{s[j]}{1+\sum_{j'\in [k]}s[j']}\right)$ for each color $g_j$. Marker agents strictly prefer being in a satisfying coalition than being alone (i.e., in a coalition containing only a single color $m_i$), and strictly prefer being alone over any other non-satisfying coalition.
	
	Normal agents all have the same agent type and also have trichotomous preferences: they strictly prefer being in a coalition containing precisely only one marker color $m_i$ over being in a coalition containing no marker color, and strictly prefer being in a coalition containing no marker color over being in a coalition containing more than one marker color. This concludes the description of our instance $\III'$ of \HDGshort. Observe that $\tau=1+\omega$. The construction can be carried out in polynomial time.
	For an exemplary illustration of the construction see \Cref{fig:MSSreduction}.
	\begin{figure}
		\centering
		\includegraphics[scale=0.7]{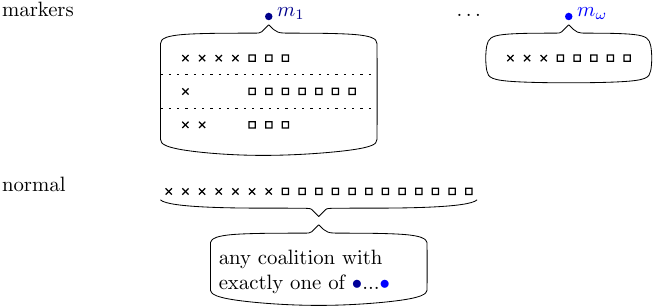}
		\caption{Exemplary illustration of the constructed \HDGshort\ instances in the proof of \Cref{lem:allHDG:Wh:ColorsNumTypes} with \(k = 2\).
		The textbubbles give the coalitions which the respective agents would prefer to be included in rather than being alone.
		Each marker (colored with one of \(\omega\) marker colors, indicated as different shades of blue) is associated to one of \(S_1, \dotsc, S_\omega\) corresponding to its preferred coalitions; in this example \(S_1\) consists of three vectors \((4,3),(1,7),(2,3)\), and \(S_\omega\) consists of only one vector \((3,5)\).
		The normal agents encode \(t\); in this example \(t = (7,12)\).\label{fig:MSSreduction}}
	\end{figure}
	
	To prove correctness, assume $\III$ admits a solution $S'=(s_1,\dots, s_\omega)$. Then $\III'$ is also a \YES-instance: for each $s_i$, $i\in [\omega]$ we construct a coalition containing the marker agent with color $m_i$ and for each $j\in [k]$ precisely $s_i[j]$-many normal agents of color $g_j$. All marker and normal agents are in their most preferred kind of coalitions, and hence the outcome is both Nash and individually stable.
	
	For the converse direction, assume that $\III'$ is a \YES-instance and in particular admits a stable outcome $\Pi = \{C_1,\ldots,C_\ell\}$ for some $\ell\leq \omega$. First of all, observe that since marker agents prefer being alone to sharing a coalition with another marker agent, it must hold that $\ell=\omega$ and each coalition in $\Pi$ contains precisely one marker agent; w.l.o.g., let us assume that each coalition $C_i$, $i\in [\omega]$, contains marker agent $m_i$. Since there are no other coalitions in $\Pi$, all normal agents must occur in coalitions which each contain precisely one marker agent; each such coalition $C_i$ containing a marker agent and at least one normal agent must be satisfying, since otherwise the marker agent would prefer being alone and thus create an NS- and IS-deviation. Hence $\Pi$ may contain only two kinds of coalitions: those which are satisfying, and those containing a single marker agent. For each satisfying coalition $C_i$, there must exist a vector $s_i\in S_i$ such that for each $j\in [k]$ there are precisely $s_i[j]$ many normal agents of color $g_j$ in $C_i$; if $C_i$ contains only a single agent and is hence not satisfying, we set $s_i$ to empty. Now consider the tuple $S'=(s_1,\dots,s_\omega)$, and observe that since all normal agents have been assigned to the coalitions in $\Pi$, it must hold that $\sum_{s\in S'}s = t$. Hence $S'$ is a solution for $\III$, concluding the proof.
\end{proof}

The last lower bound required to complete our complexity picture is for the case parameterized by the size of coalitions, number of types, and the number of non-trivial coalitions.

\begin{theorem}\label{lem:allHDG:Wh:SizeTypeNum}
	\allHDGNashShort and \allHDGIndividualShort are \Wh parameterized by $\coalsize+\types+\coalnum$.
\end{theorem}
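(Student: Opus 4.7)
The plan is to establish \Wh[1]-hardness via a parameterized reduction from \textsc{Independent Set} parameterized by the solution size $k$, a canonical \Wh[1] problem. Given an instance $(G=(V,E),k)$, I would construct a \HDGshort instance whose stable outcomes correspond to independent sets of size $k$ in $G$, while keeping $\coalnum$, $\coalsize$, and $\types$ all bounded by a function of $k$.

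For the construction I would introduce, for every $v\in V$, a \emph{vertex agent} $a_v$ of its own unique colour $c_v$---we are free to use many colours since $\colors$ is not part of the parameter. On top of this I would add $k$ \emph{slot} agents $s_1,\dots,s_k$, each of its own colour, whose purpose is to be paired with the selected vertex agents; all slot agents would share a single type and, similarly, all vertex agents would share a single type. The intended ``legal'' outcome places each selected vertex $v_i$ with slot $s_i$ into a size-$2$ coalition $C_i=\{s_i,a_{v_i}\}$, leaving every other agent alone; this gives $\coalnum=k$ non-trivial coalitions of size $\coalsize=2$.

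The technically hardest step, which I expect to form the bulk of the proof, is enforcing the independence condition without blowing up the number of types. Palette-based preferences make it straightforward to design an edge-specific ``blocker'' agent that destabilises the outcome when a particular edge $\{u,v\}$ is selected, but the naive approach assigns each such blocker its own bespoke preference list, yielding $\types=\Omega(|E|)$. My plan is to instead design an edge gadget patterned after the trap construction above \Cref{clm:greensMustBeInMG}: for every edge $e=\{u,v\}\in E$ a small constant-size cluster of agents whose preference lists are drawn from a shared pool of constant size, but whose colours involve $c_u$ and $c_v$ so that palette-based preferences still recognise the edge's endpoints. The cluster is to be engineered so that it admits a stable local configuration precisely when at most one of $a_u, a_v$ has been placed in a slot coalition, and always contains an agent with a profitable IS-deviation into $C_i$ or $C_j$ whenever \emph{both} endpoints are simultaneously selected.

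Given the edge gadget, correctness would follow in two standard directions. A size-$k$ independent set $\{v_1,\dots,v_k\}$ yields the stable outcome $\{C_1,\dots,C_k\}$ together with trivial coalitions; conversely, every stable outcome must (by a trap-gadget argument on the slot pairings) consist of exactly $k$ slot-vertex pairs, and the edge gadgets force these $k$ vertices to be an independent set. The main obstacle I anticipate is the local design of the edge gadget: since palette preferences cannot directly reference what happens in other coalitions, the triggering of the deviation in the ``bad'' configuration must be arranged through a carefully chosen cascade among the gadget's own agents and the slot coalitions---this is precisely the kind of subtlety the trap gadget was introduced to handle, and I expect the formal proof to parallel and considerably extend that construction.
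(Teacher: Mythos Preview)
Your proposal has a genuine gap at exactly the point you flag as the ``main obstacle'': the edge gadget. With $\coalnum$ bounded by a function of $k$, almost all of your edge-cluster agents must sit in trivial coalitions in the intended outcome. But a palette-based deviation of such an agent into the pair $\{s_i,a_u\}$ depends only on the palette of $\{s_i,a_u\}\cup\{\text{that agent}\}$; it cannot tell whether $a_v$ is simultaneously sitting in some other slot pair. Thus no single deviation distinguishes ``both endpoints selected'' from ``only $u$ selected''. Cascades via the trap gadget do not rescue this: the trap forces its green agents into $\mathcal{M}_G$-coalitions, so each edge cluster that is ``defused'' by an unselected endpoint would itself occupy a non-trivial coalition, giving $\coalnum=\Omega(|E|)$ (or $\Omega(|V|-k)$ in the best case), which destroys the parameter bound. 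I do not see a way to realise your plan with all of $\coalnum,\coalsize,\types$ bounded in $k$.

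The paper circumvents the whole issue with a much simpler idea: instead of distributing the $k$ choices over $k$ size-$2$ coalitions, it puts all $k$ selected vertex agents together with a single guard agent into \emph{one} coalition of size $k+1$, and uses the trap gadget (three further agents, three further types) to force the guard into such a coalition. Since every vertex has its own colour, the palette of this single coalition encodes the entire selected set, so one shared preference list for vertex agents---``approve any $(k{+}1)$-coalition containing the guard whose $k$ vertex colours span no edge, otherwise prefer being alone''---checks independence locally. This yields $\coalnum=1$, $\coalsize=k+1$, and $\types\le 4$ without any per-edge gadgetry.
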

\begin{proof}
	We design a parameterized reduction from the \textsc{Independent Set} problem parameterized by the size of the solution.
	Let $(G = (V,E), k)$ be an instance of \textsc{Independent Set}.
	There is a vertex agent $a_v$ for every $v \in V$ with the (new) color $v$.
	There is a green (guard) agent $G$ and a red and a blue agents that together constitute the trap gadget we built in the proof of \Cref{lem:allHDG:NPh:SizeTypes}.
	The approved coalitions $\mathcal{M}_G$ of the guard agent are all coalitions of size exactly $k+1$ where this agent is in a coalition with any $k$ vertex agents.
	Note that this is expressible using the ratios as there is a unique guard agent.
	The vertex agents are indifferent between any coalition of size $k+1$ that contain the guard agent and do not contain any edge; if this is not possible; they only want to be alone.
	It is not hard to verify that this is a single master list, since vertex agents have unique colors.
	We note that these preference lists are large -- in fact, they are as large as $\binom{|V|}{k}$ but we can encode them using a very simple oracle (with polynomial running time).
	We set $\coalnum=1$ and observe that $\types\leq 4$; this finishes the description of the instance of \HDGshort\ and hence of the reduction.

	For the correctness, let us first assume there is no independent set of size (at least) $k$ in the graph~$G$.
	Then, the vertex agents do only approve singleton coalitions.
	It follows that the guard agent cannot be part of any coalition of size $k+1$ and thus such an instance does not admit any Nash/individually stable outcome.
	In the opposite direction, if $G$ contains an independent set of size at least $k$, then there are some coalitions of size $k+1$ approved by vertex agents and thus these together with the guard agent form the only non-trivial coalition---forming a stable outcome.
\end{proof}

\section{Parameterizing by the Smaller of Two Color Classes}

Boehmer and Elkind~\cite{BoehmerE2020} showed that the problem of finding a Nash stable outcome for hedonic diversity game with $2$ colors $R$, $B$ is in \XP when parameterized by $q=\min\{|R|,|B|\}$. Immediately after proving that result, they ask whether this problem is in \FPT with respect to the same parameter. We answer this in the negative with a \W-hardness reduction.

For our reduction we start from a restricted version of the so called \textsc{Group Activity Selection} problem~\cite{DarmannDDLS17}, called the \textsc{Simple Group Activity Selection Problem} (\textsc{sGASP}).
To distinguish terminology between the \textsc{Group Activity Selection} problem and our HDG setting, we use slightly different terminology for that problem.

We are given a set of \emph{participants} \(P\), a set of \emph{activities} \(A\) and for each participant \(p \in P\) a set of \emph{approved group sizes} \(S_p \subseteq \{(a,i) \mid a \in A, i \in [|P|]\}\) for the activities.
The task is to decide whether there is an assignment \(\pi\) of participants to activities which is stable, i.e., whether there is \(\pi\) such that for every participant \(p \in P\), \((\pi(p), |\pi^{-1}(\pi(p))|) \in S_p\).
\textsc{sGASP} is known to be \Wh[1] parameterized by the number of activities~\cite{EibenGO18}.

For our purposes it will be useful to additionally have only odd approval sets and solutions in which the number of participants assigned to different activities do not differ by large factors.

\begin{lemma}\label{lem:w1:gasp}
	\textsc{sGASP} remains \Wh parameterized by the number of activities \(|A|\), even restricted to instances for which there is \(s \in \mathbb{N}\) such that for every participant \(p\) and \((a,t) \in S_p\), \(t\) is an odd number in \(\{2s|A|+1,\ldots, 2s|A| +2s-1\}\) and, moreover, each activity is approved by at most \(2s|A| +2s-1\) participants.
\end{lemma}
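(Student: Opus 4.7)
I would reduce from \textsc{sGASP} parameterized by the number of activities $|A|$, which is \Wh~\cite{EibenGO18}. Given an instance $\III = (P, A, (S_p))$, the idea is a padding argument: shift all approved sizes into the required odd window and add rigid \emph{dummy} participants that pin each activity to the shifted size.

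Concretely, I would first normalize $\III$ so that each activity $a \in A$ has a unique approved group size $t_a$ across all $S_p$'s; this step would leverage the specific structure of the known hard instances (or a controlled splitting of activities that preserves the parameter $|A|$). Then I set $s := |P|$ and $M := 2s|A|$, and construct a new instance $\III'$ by (i) replacing every $(a, t_a) \in S_p$ with $(a, M + 2t_a - 1) \in S'_p$, which is odd and lies in $\{M+1,\ldots, M+2s-1\}$ since $t_a \le s$, and (ii) introducing, for each activity $a$, a pool $D_a$ of $M + t_a - 1$ fresh dummy participants whose only approved pair is $(a, M+2t_a-1)$. The rigidity of $D_a$ ensures that in any stable outcome of $\III'$ all of $D_a$ is placed at $a$, which then forces exactly $t_a$ original participants to join them there, reconstructing a stable assignment of $\III$.

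The main obstacle will be to verify the two structural constraints of the lemma simultaneously. The shift $t \mapsto M + 2t - 1$ directly yields the odd-in-narrow-window requirement. The second constraint (at most $M + 2s - 1$ approvers per activity) is tight: activity $a$ admits at most $|P| \le s$ real approvers plus exactly $|D_a| = M + t_a - 1 \le M + s - 1$ dummy approvers, summing to at most $M + 2s - 1$. It is precisely this tight budget that forces each activity to carry a unique approved size, since several approved sizes would demand independent $\Theta(M)$-sized dummy pools that would collide in the count. A secondary technical issue is the edge case in which a stable assignment of $\III$ leaves some activity $a$ unused: then $D_a$ still needs a valid placement, which I would handle by adding an auxiliary ``absorbing'' activity per $a$ (counted in $|A|$, with a single approved size matching $|D_a|$), possibly after rescaling $s$ by a constant factor to keep every new approved size odd and in range.

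Correctness would follow by a direct verification: a stable $\pi$ for $\III$ extends to $\III'$ by sending each $D_a$ to $a$ (or to its absorbing companion when $a$ is unused in $\pi$); in the converse direction, rigidity of the dummies in any stable $\pi'$ forces the real participants at each used $a$ to be in number exactly $t_a$, so that projecting $\pi'$ to $P$ recovers a stable $\pi$ for $\III$.
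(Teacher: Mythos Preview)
Your proposal has a genuine gap at the normalization step. You assume each activity $a$ carries a \emph{unique} approved size $t_a$, but general \textsc{sGASP} allows up to $|P|$ distinct approved sizes per activity, and the known \Wh instances from~\cite{EibenGO18} do not have this property. Your fallback of ``controlled splitting'' would replace each activity by one copy per approved size, blowing up $|A|$ by a factor not bounded in the parameter. Without a unique $t_a$, the dummy pool $D_a$ cannot be sized: you would need a separate $\Theta(M)$-sized pool per approved size, and these pools can neither coexist at $a$ nor fit within the $2s|A|+2s-1$ approver budget---which, as you yourself note, is already tight. The absorbing-activity patch for unused activities is also only sketched: doubling $|A|$ shifts the required window to start at $4s|A|+1$, so all the arithmetic (including the oddness of $|D_a|$) has to be redone rather than handled by ``rescaling $s$ by a constant factor''.

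The paper's proof avoids the unique-size assumption via two ideas you may find instructive. First, each original participant is replaced by a \emph{pair} of identical copies, so an original group of size $t$ contributes $2t$ copies; adding a fixed block of $2s|A|+1$ activity-specific dummies then always yields an odd total $2t+2s|A|+1$, regardless of which $t$ is realised. Second, these dummies approve \emph{every} odd size in the target window, so nothing needs to be pinned in advance---this simultaneously handles unused activities (the $2s|A|+1$ dummies alone already form an approved group), eliminating the need for absorbing activities. The price is that in the converse direction the two copies of a participant might land at different activities; the paper repairs this with a short alternating-cycle swap argument that preserves all group sizes.
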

\begin{proof}
	Let \(\III\) be an instance of \textsc{sGASP} with participants \(P\), activities \(A\) and approval sets \((S_p)_{p \in P}\), let $s=|P|+1$.

	We create the following equivalent instance \(\tilde{\III}\) with participants \(\tilde{P}\), activities \(\tilde{A} = A\) and approval sets \((\tilde{S}_p)_{p \in \tilde{P}}\) of \textsc{sGASP} which satisfies the properties of the lemma statement.

	Let \(\tilde{P}\) be given by a set of \((2s|A| + 1)|A|\) `activity participants', \(2s|A| + 1\) associated to each activity; and \(2|P|\) `pair participants' two of which are associated to each participant of \(\III\).
	Now for each activity participant \(p\) associated to activity \(a\), let \(\tilde{S}_p\) be given by all tuples \((a,t)\) where \(t\) is an odd number between \(2s|A| + 1\) and \(2s|A| +2s - 1\).
	For each pair participant \(p\) associated to the original participant \(p' \in P\), let \(\tilde{S}_p\) be given by \(\{(a,2t + 2s|A| + 1) \mid (a,t) \in S_{p'}\}\).
	Observe that by construction, for every \(p \in \tilde{P}\) and \((a,t) \in \tilde{S}_p\), \(t\) is odd in \(\{2|A|s+1,\ldots, 2s|A| +2s-1\}\) and each activity is approved by at most \(2s|A| +2s-1\) participants.

	We claim that \(\III\) is a \YES-instance if and only if \(\tilde{\III}\) is a \YES-instance.

	For one direction assume that \(\pi\) is a solution for \(\III\).
	Then it is easy to check that a solution for \(\tilde{\III}\) is given by mapping every activity participant to the activity it is associated to, and every pair participant associated with the participant \(p \in P\) to \(\pi(p)\).

	Conversely assume that there is a solution \(\tilde{\pi}\) for \(\tilde{\III}\).
	Assume there is some pair participant \(p_1\) associated to \(p \in P\) such that the second pair participant \(p_2\) associated to \(p\) is assigned to a different activity by \(\tilde{\pi}\), i.e., \(\tilde{\pi}(p_1) \neq \tilde{\pi}(p_2)\). The size \(|\tilde{\pi}^{-1}(\tilde{\pi}(p_1))|\) must be odd by the choice of the approved group sizes, and must contain all activity participants associated to \(\tilde{\pi}(p_1)\) by the choice of the approved group sizes of these activity participants.
	Hence there must be at least one pair participant \(p'_1 \in \tilde{\pi}^{-1}(\tilde{\pi}(p_1))\) associated to \(p' \neq p\) such that \(p'_2\) also associated to \(p'\) is not mapped to \(\tilde{\pi}(p_1) = \tilde{\pi}(p'_1)\) by \(\tilde{\pi}\).
	In this way we can find a sequence \(p^{(1)}, p^{(2)}, \dotsc, p^{(\ell)}, p^{(\ell + 1)}\) where \(\ell\) is an even number, with \(p^{(1)} = p^{(\ell + 1)} = p_2\), \(p^{(2)} = p_1\), \(p^{(3)} = p'_1\) and \(p^{(4)} = p'_2\)
	such that
	\begin{itemize}
		\item for all odd \(i \in [\ell]\), \(p^{(i)}\) and \(p^{(i + 1)}\) are associated to the same participant in \(\III\) and are mapped to different activities by \(\tilde{\pi}\); and
		\item for all even \(i \in [\ell]\), \(p^{(i)}\) and \(p^{(i + 1)}\) are mapped to the same activity by \(\tilde{\pi}\) and are associated to different participants in \(\III\).
	\end{itemize}
	We can modify \(\tilde{\pi}\) to map \(p^{(i + 1)}\) to \(\tilde{\pi}(p^{(i)})\) for all odd \(i \in [\ell]\).
	More formally, consider \[\widehat{\pi}(p) = \begin{cases}
		\tilde{\pi}(p^{(i)}) & \mbox{if } p = p^{(i + 1)} \mbox{ for odd } i \in [\ell]\\
		\tilde{\pi}(p) & \mbox{otherwise.}
	\end{cases}\]
	Then for all \(a \in A\), \(|\tilde{\pi}^{-1}(a)| = |\widehat{\pi}^{-1}(a)|\), and the fact that for all odd \(i \in [\ell]\), \(p^{(i)}\) and \(p^{(i + 1)}\) have the same approval set implies that \(\widehat{\pi}\) is also a solution for \(\III\).

	We can iterate this procedure and assume that all pair participants which are associated to the same participant in \(\III\) are assigned to the same activity by \(\tilde{\pi}\).

	Now we observe that \(\pi\) which maps each \(p \in P\) to the same activity as \(\tilde{\pi}\) maps both the associated pair participants is a solution for \(\III\):
	In this way for every \(p \in P\) \(|\pi^{-1}(\pi(p))| = \frac{|\tilde{\pi}^{-1}(\tilde{\pi}(p'))| - 2s|A| - 1}{2}\) where \(p' \in \tilde{P}\) is a participant in \(\tilde{\III}\) associated to \(p\).
	By definition of \(\tilde{S}_{p'}\) this implies that \((\pi(p), |\pi^{-1}(\pi(p))|) \in S_p\).
\end{proof}

\begin{theorem}\label{thm:w1}
	\allHDGNashShort restricted to instances with $\gamma=2$ is \Wh when parameterized by the size $q$ of the smaller color class.
\end{theorem}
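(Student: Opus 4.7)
The plan is to reduce from the restricted version of \textsc{sGASP} provided by Lemma \ref{lem:w1:gasp}, parameterized by the number of activities $k = |A|$. Given such an instance, I would introduce two colors, say Red and Blue, and create a collection of Red agents that encode activities together with one Blue agent per participant. Concretely, identifying $A$ with $\{1,\dots,k\}$, for each activity $a$ I would introduce $r_a$ Red agents, where $r_a$ is a carefully chosen positive integer (for example $r_a=a$, which will suffice once $s$ is large enough); for each participant $p$ of the \textsc{sGASP} instance, I would introduce a single Blue agent $B_p$. The total number of Red agents is at most $\sum_{a=1}^{k} r_a = O(k^2)$, while the number of Blue agents equals the (unbounded) number of participants, so Red is the smaller color class and its size $q$ is bounded by a function of $k$, as required.

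The preferences are designed so that the palette of a coalition encodes the pair \emph{(activity, group size)}. Each Red agent associated with activity $a$ has its own agent type, whose preference list puts the palettes $\bigl(r_a/t,(t-r_a)/t\bigr)$ for the odd $t$ in the prescribed window $\{2sk+1,\dots,2sk+2s-1\}$ at the very top, then the singleton palette $(1,0)$, and strictly worse than being alone everything else. Each Blue agent $B_p$ has a preference list which puts the palettes $\bigl(r_a/t,(t-r_a)/t\bigr)$ for $(a,t)\in S_p$ at the top, then the singleton $(0,1)$, and strictly worse than being alone everything else. Consequently, in any Nash-stable outcome a Red or Blue agent is either alone or sits in a coalition whose palette corresponds to some valid $(a,t)$; and because Red agents strictly prefer a valid palette to being alone, every Red agent must actually participate in a coalition with palette $\bigl(r_a/t,(t-r_a)/t\bigr)$. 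This forces all $r_a$ Red agents of activity $a$ to cluster together with exactly $t-r_a$ Blue agents, each of whom approves $(a,t)$, giving a solution to the \textsc{sGASP} instance. The converse direction is immediate: given a solution $\pi$, the coalitions obtained by grouping the Red agents of $a$ with $\pi^{-1}(a)$ are at the top of every participating agent's list and hence Nash-stable.

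The main obstacle is showing that the palette really does recover the pair $(a,t)$, i.e.\ that $r_a/t = r_{a'}/t'$ with $t,t'$ odd in the specified window forces $(a,t)=(a',t')$, and furthermore that such ratios are distinct from the palettes of singleton coalitions. This is a number-theoretic step where the \emph{oddness} of $t$ from Lemma \ref{lem:w1:gasp} is crucial (so that $\gcd(r_a,t)$ is easy to control) and where the narrowness of the window $\{2sk+1,\dots,2sk+2s-1\}$, together with a suitably chosen $r_a$, guarantees uniqueness; choosing $s$ sufficiently larger than $k^2$ makes the argument routine because the spread of $a/t$ over admissible $t$ then becomes strictly smaller than the gap between the fractions for different $a$'s. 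A secondary subtlety is ruling out unintended stable outcomes in which Red agents of two different activities mix into a single coalition or an ``orphan'' group of Red agents sits with the wrong number of Blue agents; this is handled by making the Red preference list strict, so that any deviation from one of the approved palettes makes Red agents strictly prefer being alone and hence triggers a NS-deviation. If needed, the auxiliary trap gadget constructed before Lemma \ref{lem:allHDG:NPh:SizeTypes} (which uses a constant number of extra colors and thus can be simulated with a few additional Red agents) can be attached to absorb any residual degenerate outcomes, keeping $q$ within $O(k^2)$.
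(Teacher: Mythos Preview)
Your reduction has a genuine gap: the outcome in which \emph{every} agent is alone is Nash stable in your construction, regardless of whether the \textsc{sGASP} instance is a \Yes- or \No-instance. To see this, note that in the all-singleton outcome the only non-trivial deviation available to any agent is to join some other singleton, producing a two-agent coalition with palette $(\tfrac12,\tfrac12)$. By the oddness of the admissible $t$ (so that $r_a/t=\tfrac12$ is impossible) this palette is not among anyone's approved palettes and is therefore strictly worse than being alone for every agent. Your claim that ``because Red agents strictly prefer a valid palette to being alone, every Red agent must actually participate'' does not follow: a Nash deviation is a move to an \emph{existing} coalition, and in the all-singleton outcome no coalition with an approved palette exists to deviate to. Hence your constructed instance is always a \Yes-instance and the reduction collapses.

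This is precisely the obstacle the paper's proof has to work hard to overcome. The paper introduces, in addition to marker (red) and normal (blue) agents, a polynomially-in-$|A|$ large pool of \emph{spoiler} red agents whose top preference is any ratio of the form $\tfrac{1}{1+b}$; any spoiler sitting in an all-red coalition would deviate to any blue singleton, so no outcome with an all-blue (in particular, a blue singleton) coalition can be stable. A counting argument then shows there are too many spoilers to all be absorbed into coalitions with such ratios, which in turn forces the normal agents into coalitions with the intended ``activity'' ratios and, via further arithmetic on the carefully chosen $z_i=100i+1$, forces uniqueness of the coalition per activity. Your suggestion to patch the gap with the trap gadget does not work here: that gadget genuinely needs three \emph{distinct} colors for its cyclic deviation structure, so it cannot be ``simulated with a few additional Red agents'' while keeping $\gamma=2$.
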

\begin{proof}
	We reduce from \textsc{sGASP} restricted to instances as described in \Cref{lem:w1:gasp} parameterized by the number of activities.
	Since the instances with $|A| \le 2$ can be solved in polynomial time, we assume $|A|\ge 3$.

	Let \(\III\) be such an instance of \textsc{sGASP} with participants \(P\), activities \(A\) and approved group sizes \((S_p)_{p \in P}\).
	Moreover we consider an auxiliary sequence of
	integers \((z_i)_{i \in [|A|]}\) where simply \(z_i = 100i + 1\).

	We construct an instance \(\JJJ\) of \HDGshort with the following agents:
	\begin{itemize}
		\item \(|P|\) `normal agents' which all have color \emph{blue} and each of which corresponds to one participant in \(\III\).
		\item \(\sum_{i \in [|A|]} z_i\) `marker agents' of color \emph{red} such that for each \(i \in [|A|]\), \(z_i\) of these marker agents have the same preferences and correspond to the \(i\)-th activity.
		\item \((400|A|^2)200|A|^2 + 1\) `spoiler agents' which will have the same preferences and all have color \emph{red}.
	\end{itemize}
	Note that in this way the number of red agents is polynomial in \(|A|\), the parameter of the \W[1]-hard problem \textsc{sGASP}.

	Let the agents' preferences be given as follows:
	\begin{itemize}
		\item For each normal agent \(j\) corresponding to a participant \(p\) in \(\III\), we let red ratios in coalitions be preferred in the order
		\[
		\Big\{\frac{z_i}{z_i + t} \,\Big\vert\, (a,t) \in S_p \land a \text{ is the } i\text{-th activity}\Big\}  \succ_j 0 \succ_j \dots
		\]
		\textbf{Intuition:}
		The number of red agents in a coalition encodes which activity the participants corresponding to the blue agents in that coalition should be assigned to.
		Then the preferences of the normal agents ensure that at least in terms of ratio, the resulting group sizes behave like the approved group sizes would.
		\item For each marker agent \(j\) corresponding to the \(i\)-th activity in \(\III\), we let red ratios in coalitions be preferred in the order
		\[\Big\{\frac{z_i}{z_i + 2s(|A| + 1)-1}, \ldots, \frac{z_i}{z_i + 2s|A|+1}\Big\} \succ_j 1 \succ_j \dots.\]
		\textbf{Intuition:}
		For the preferences of the normal agents to take into account \emph{all} participants corresponding to blue agents which will be assigned to the same activity we have to ensure that all the blue agents that will be assigned to the same activity will be in the same coalition.
		The marker agents should identify which unique coalition will correspond to which activity.
		Correspondingly their preferences allow to be in the same coalition as the marker agents associated with the same activity and an arbitrary number of normal agents between \(2s|A|+1\) and \(2s(|A| + 1)-1\).
		\item For each spoiler agent \(j\), we let red ratios in coalitions be preferred in the order \(\mathfrak{B} \succ_j \mathfrak{S} \setminus \mathfrak{B} \succ_j 1 \succ_j \dots\),
		where
		\[\mathfrak{B} = \Big\{\frac{1}{1+b} \mid b \in \mathbb{N}\Big\}
		\]
		is the set of `blue-up-to-1' ratios and
		\begin{align*}
			\mathfrak{S} = \Big\{\frac{r + 1}{r + b + 1} \,\Big\vert\,& \exists i \in [|A|],\\
			& \exists t \in \{2s|A|+1,\ldots 2s(|A| + 1)-1\}\\
			& \mbox{odd,} \ \frac{r}{r + b} = \frac{z_i}{z_i + t} \land r \leq 75i + 1\Big\}
		\end{align*}
		is the set of `small-split' ratios.

		\textbf{Intuition:}
		The spoiler agents can in some sense be regarded as the crux of our reduction.
		They will ensure that (1) there is no stable outcome in which there is a coalition with only blue agents and (2) in every stable outcome there is exactly one coalition corresponding to each activity and that the marker agents for one activity are not distributed in many coalitions.
	\end{itemize}

	\begin{claim}
		If \(\III\) is a \YES-instance, then \(\JJJ\) is also a \YES-instance.
	\end{claim}
	\begin{claimproof}
		Let \(\pi\) be a solution for \(\III\).
		Consider \(\Pi = \{C_1, \dotsc, C_{|A| + 1}\}\) with the following coalitions.
		For \(i \in [|A|]\), coalition \(C_i\) is comprised of the \(z_i\) marker agents corresponding to activity \(i\) together with the normal agents corresponding to \(\pi^{-1}(i)\).
		\(C_{|A| + 1}\) consists of all spoiler agents.
		From the definition of the preference lists \(\Pi\) is a Nash stable outcome for \(\JJJ\):

		By definition none of the normal or marker agents admits a Nash-deviation from \(\Pi\), as they are in their most preferred types of coalitions.

		It remains to consider the spoiler agents.
		By construction a spoiler agent can only admit a deviation to coalition \(C_i\) if
		\begin{enumerate}
		 \item[(a)] \(\frac{z_i + 1}{z_i + |\pi^{-1}(i)| + 1} = \frac{1}{b + 1}\) for some \(b \in \mathbb{N}\) or
		 \item[(b)] \(\frac{z_i + 1}{z_i + |\pi^{-1}(i)| + 1} = \frac{r + 1}{r + b + 1}\) with \(\frac{z_i}{z_i + |\pi^{-1}(i)|} = \frac{r}{r + b} = \frac{z_j}{z_j + t}\) for some \(j \in [|A|]\), \(r \in [75j + 1]\) and odd \(t \in \{2s|A|+1,\ldots, 2s(|A| + 1)-1\}\).
		\end{enumerate}

		Assume that (a) occurs, i.e.\ there is \(i \in [|A|]\) such that \(\frac{z_i + 1}{z_i + |\tilde{\pi}^{-1}(i)| + 1} = \frac{1}{b + 1}\) for some \(b \in \mathbb{N}\).
		This is equivalent to \((z_i + 1)b = |\tilde{\pi}^{-1}(i)|\) for some \(b \in \mathbb{N}\).
		In particular \(|\tilde{\pi}^{-1}(i)|\) has to be even which contradicts the fact that \(\pi\) is a solution for \(\III\) in which all approved group sizes are odd.

		Now assume that (b) occurs.
		\[\frac{z_i}{z_i + |\pi^{-1}(i)|} = \frac{z_j}{z_j + t}\text{ implies } z_i = \frac{|\pi^{-1}(i)|}{t}z_j.\]
		Because \(|\pi^{-1}(i)|,t \in \{2s|A|+1,\ldots, 2s(|A| + 1)-1\}\), we have
		\[z_i = \frac{|\pi^{-1}(i)|}{t}z_j < \frac{2s(|A|+1)}{2s|A|}z_j = \frac{|A|+1}{|A|}z_j=(1+\frac{1}{|A|})z_j.\]
		Without loss of generality we assume that \(i \geq j\) (otherwise in an analogous way \(z_j < (1 + \frac{1}{|A|})z_i\)).
		By definition of \((z_i)_{i \in [|A|]}\) this implies, that, if \(i \neq j\), then
		\[100 \leq z_i - z_j < \frac{z_j}{|A|} < 100;\]
		a contradiction.
		Hence \(i = j\) and \(|\pi^{-1}(i)| = t\).
		In particular this implies \(r \leq 75i + 1 < z_i\).

		Although it is a little tedious to compute, one can also verify that
		\[r =  \frac{\displaystyle 1 - \frac{r + 1}{r + b + 1}}{\displaystyle \frac{r + 1}{r + b + 1}\left(1 + \frac{\displaystyle 1 - \frac{r}{r + b}}{\displaystyle\frac{r}{r + b}}\right) - 1}.
		\]
		This in turn is equal, by replacing the corresponding fractions, to
		\[\frac{\displaystyle 1 - \frac{z_i + 1}{z_i + |\pi^{-1}(i)| + 1}}{\displaystyle\frac{z_i + 1}{z_i + |\pi^{-1}(i)| + 1}\left(1 + \frac{\displaystyle 1 - \frac{z_i}{z_i + |\pi^{-1}(i)|}}{\displaystyle \frac{z_i}{z_i + |\pi^{-1}(i)|}}\right) - 1}.
		\]
		which can be seen to be equal to \(z_i\).
		This is a contradiction to \(r < z_i\).
	\end{claimproof}

	For the converse direction we assume that \(\JJJ\) is a \YES-instance and fix a solution \(\Pi\) that witnesses this.
	In the following claims we always speak about coalitions in \(\Pi\).

	\begin{claim}\label{claim:w1:goodratios}
		Every normal agent \(j\) with corresponding participant \(p\) in \(\III\) is in a coalition with red ratio \(\frac{z_i}{z_i + t}\) for some \(i\) and \(t\) for which \(a\) is the \(i\)-th activity in \(\III\) and \((a,t) \in S_p\).
	\end{claim}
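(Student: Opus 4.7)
The plan is to argue by contradiction: fix a normal agent $j$ with corresponding participant $p$, let $C^j$ be the coalition containing $j$ in $\Pi$, and let $\rho$ denote its red ratio. I would rule out the two cases complementary to the claim, namely (a) $\rho$ is some non-zero value that is not one of the approved ratios $\frac{z_i}{z_i + t}$ with $(a,t) \in S_p$ and $a$ the $i$-th activity, and (b) $\rho = 0$.

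For case (a), I would appeal directly to $j$'s own preference list. By construction the approved ratios sit strictly above $0$ in $\succ_j$, while every other ratio sits strictly below $0$. Hence a non-zero non-approved $\rho$ satisfies $0 \succ_j \rho$, and $j$ admits an NS-deviation by leaving $C^j$ to form the singleton $\{j\}$, whose red ratio is $0$; this contradicts the Nash stability of $\Pi$.

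For case (b), $C^j$ consists exclusively of blue agents, so $|C^j| \geq 1$. Here the spoiler agents enter the picture. The number of spoilers is chosen to exceed $|P|$ by a huge margin, while any coalition whose red ratio lies in $\mathfrak{B} = \{1/(1+b) \mid b \in \mathbb{N}\}$ must contain exactly one red agent together with at least one blue agent; since there are only $|P|$ blue agents in total, at most $|P|$ spoilers can sit in $\mathfrak{B}$-ratio coalitions in $\Pi$. Consequently some spoiler $s$ lies in a coalition whose red ratio is outside $\mathfrak{B}$. Joining $C^j$ would yield the ratio $\frac{1}{1 + |C^j|} \in \mathfrak{B}$, which sits at the very top of $s$'s preference list and is therefore strictly preferred by $s$ over any non-$\mathfrak{B}$ ratio. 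Thus $s$ admits an NS-deviation to $C^j$, again contradicting the stability of $\Pi$.

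The main obstacle is case (b); it is where the spoiler agents are essential, and the argument hinges specifically on the spoilers' count being chosen large enough compared to $|P|$ so that some spoiler must always fall outside of $\mathfrak{B}$-ratio coalitions regardless of the structure of $\Pi$. Case (a), by contrast, is immediate from how the normal agents' preference lists are designed.
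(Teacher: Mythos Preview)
Your case (a) is fine and matches the paper's opening step. The gap is entirely in case (b), and it has two concrete errors.

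First, the assertion that a coalition with red ratio in $\mathfrak{B}$ must contain exactly one red agent is false: any coalition with $r$ red and $rb$ blue agents has ratio $\frac{r}{r+rb} = \frac{1}{1+b} \in \mathfrak{B}$, for arbitrary $r \ge 1$. At best you get $r \le q$ (number of reds at most number of blues), so the total number of red agents sitting in $\mathfrak{B}$-ratio coalitions is bounded by the total number of blue agents, namely $|P|$.

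Second, and more damaging, your claim that the number of spoilers exceeds $|P|$ ``by a huge margin'' is false. The entire point of the reduction is that the red side is polynomial in the parameter $|A|$: there are $(400|A|^2)(200|A|^2)+1$ spoilers, while $|P|$ is the input size of the \textsc{sGASP} instance and can be arbitrarily large compared to $|A|$. So the bound ``at most $|P|$ spoilers in $\mathfrak{B}$-coalitions'' gives no contradiction at all.

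What the paper does, and what you are missing, is to exploit the stability of the \emph{blue} agents inside the spoilers' coalitions. If a spoiler is in a coalition whose ratio is in $\mathfrak{B}$ but is not one of the approved ratios $\frac{z_i}{z_i+t}$, then every normal (blue) agent in that coalition would NS-deviate to a singleton. Hence every coalition containing a spoiler must have a ratio that is simultaneously in $\mathfrak{B}$ \emph{and} of the form $\frac{z_i}{z_i+t}$, i.e., $z_i b = t$. This forces $b \ge (2s|A|+1)/z_i$, and from there one can bound both the number of such coalitions and the number of red agents in each by $200|A|^2$, giving at most $(200|A|^2)^2$ red agents in total---which is finally below the spoiler count. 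The crucial step you skipped is intersecting the spoiler constraint with the normal-agent constraint; without it the counting cannot close.
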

	\begin{claimproof}
		Assume for contradiction that some agent \(j\) corresponding to participant \(p\) in \(\III\) is in a coalition \(C\) with red ratio not in
		\[\Big\{\frac{z_i}{z_i + t} \,\Big\vert\, (a,t) \in S_p \land a \text{ is the } i\text{-th activity}\Big\}.\]
		By the definition of the preferences, this means that \(j\) admits a Nash-deviation unless \(C\) is a coalition with only blue vertices.

		Now consider the spoiler agents.
		If one of them is in a coalition with no blue agents at all they admit a Nash-deviation to \(C\).
		Hence all spoiler agents are in coalitions with blue agents with red ratios that all contained blue agents in these coalitions prefer over \(0\), and the contained spoiler agents prefer at least as much as joining coalition \(C\).

		This means that all red ratios of the coalitions containing spoiler agents are of the form \(\frac{z_i}{z_i + t} = \frac{1}{b + 1}\) where \(i \in [|A|]\), \(t \in \{2s|A|+1,\ldots,2s(|A| + 1)-1\}\) is odd and \(b \in \mathbb{N}\).

		This is equivalent to $z_ib = t$ and implies that
		\[b \in \left\{\frac{2s|A|+1}{z_i},\ldots,\frac{2s(|A| + 1)-1}{z_i}\right\}.\]
		Because there are only \(2s(|A| + 1)|A|\) blue agents there can be only
		\begin{align*}
			\frac{2s(|A| + 1)|A|z_i}{2s|A|} &= (|A| + 1)z_i\\
			&\leq (|A| + 1)(100|A|+ 1) \\
			&\leq 200|A|^2
		\end{align*} such coalitions (we use $|A| \ge 3$).

		We want to upper-bound the number of red agents in these coalitions.
		For this purpose denote by \(p\) the number of red agents and \(q\) the number of blue agents in one such coalition.
		\[\frac{p}{p + q} = \frac{1}{1 + b} \Leftrightarrow p = \frac{q}{b}.\]
		Using the lower bound on \(b\) from above we get that
		\[p \leq q\frac{z_i}{2s|A|}\]
		and using the number of all blue agents to bound \(q\)
		\[p \leq \frac{2s(|A| + 1)|A|z_i}{2s|A|} = (|A|+1)z_i \leq 200|A|^2.\]

		Together this means at most \((200|A|^2)^2\) red agents are in these kinds of coalitions but there are more than \((200|A|^2)^2\) spoiler agents, meaning not all spoiler agents can be in these kinds of coalitions.
	\end{claimproof}

	\begin{claim}\label{claim:w1:normal_agents_red}
			Whenever a set of normal agents \(J = \{j_1, \dotsc, j_\ell\}\), corresponding to participants \(p_1, \dotsc, p_\ell\), respectively, are together in the same coalition then the red ratio of the coalition is \(\frac{z_i}{z_i + t}\) for some \(i\) and \(t\) for which \(a\) is the \(i\)-th activity in \(\III\) and \((a,t) \in \bigcap_{k \in [\ell]} S_{p_k}\).
	\end{claim}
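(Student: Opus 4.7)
The plan is to apply the previous claim to each normal agent in the coalition separately, and then to argue that the resulting ratio expressions must coincide in both $i$ and $t$, not merely as rational numbers. All agents in $J$ sit in one common coalition $C$, whose red ratio is a single value $\rho$. By Claim~\ref{claim:w1:goodratios} applied to each $j_k$, we obtain indices $i_k \in [|A|]$ and numbers $t_k$ with $(a_{i_k}, t_k) \in S_{p_k}$ (where $a_{i_k}$ denotes the $i_k$-th activity) such that $\rho = \frac{z_{i_k}}{z_{i_k} + t_k}$. Recall that $t_k \in \{2s|A|+1,\ldots,2s(|A|+1)-1\}$ by the structure of \textsc{sGASP}-instances we reduce from.

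The core step is to show that the pairs $(i_k,t_k)$ are forced to agree across all $k \in [\ell]$. Fix any two indices $k,k'$ and set $i=i_k$, $i'=i_{k'}$, $t=t_k$, $t'=t_{k'}$. From $\frac{z_i}{z_i + t} = \frac{z_{i'}}{z_{i'} + t'}$ one gets $z_i t' = z_{i'} t$, and assuming without loss of generality $i \geq i'$, $t/t' < \frac{2s(|A|+1)-1}{2s|A|+1} < 1 + \frac{1}{|A|}$, which gives
\[
z_i \;=\; \frac{t}{t'}\,z_{i'} \;<\; \Bigl(1+\tfrac{1}{|A|}\Bigr)z_{i'}.
\]
If $i \neq i'$, then since $z_r = 100r+1$ we have $z_i - z_{i'} \geq 100$, while on the other hand $z_i - z_{i'} < z_{i'}/|A| \leq (100|A|+1)/|A| = 100 + 1/|A|$. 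Since $z_i - z_{i'}$ is a multiple of $100$, this forces $z_i - z_{i'} = 100$ and $z_{i'} > 100|A|$, i.e.\ $i' \geq |A|+1$, contradicting $i' \in [|A|]$. Hence $i = i'$, and then $z_i t' = z_{i'} t$ immediately gives $t = t'$.

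Having established that all $(i_k,t_k)$ coincide, write this common pair as $(i,t)$ with $a$ the $i$-th activity. Then $\rho = \frac{z_i}{z_i+t}$ as required, and since $(a,t) \in S_{p_k}$ for every $k \in [\ell]$, we conclude $(a,t) \in \bigcap_{k \in [\ell]} S_{p_k}$.

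The main obstacle is the arithmetic step separating $i$ from $i'$; this works precisely because of the choice $z_i = 100i+1$ together with the tight range $\{2s|A|+1,\ldots,2s(|A|+1)-1\}$ imposed on the approved group sizes, which together guarantee the $(1+\tfrac{1}{|A|})$-bound that is sharper than any gap between distinct $z_i$'s.
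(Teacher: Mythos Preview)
Your proof is correct and follows essentially the same route as the paper: derive $z_i t' = z_{i'} t$ from the common ratio, bound $t/t' < 1 + \tfrac{1}{|A|}$ via the range of approved sizes, and use the spacing $z_i = 100i+1$ to force $i=i'$ (hence $t=t'$). One small slip: from $z_{i'} > 100|A|$ you only get $i' \geq |A|$, not $i' \geq |A|+1$; the contradiction is that then $i > i'$ gives $i \geq |A|+1 \notin [|A|]$ (the paper avoids this detour by using $i' \leq |A|-1$ directly to obtain $z_{i'}/|A| < 100$).
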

	\begin{claimproof}
		In a Nash-stable solution, \(j_1\) and \(j_2\) have to be in coalitions with red ratios \(\frac{z_i}{z_i + t_1}\) and \(\frac{z_j}{z_j + t_2}\) where \(a_1\) is the \(i\)-th activity in \(\III\) and \(a_2\) is the \(j\)-th activity in \(\III\) and \((a_1,t_1) \in S_{p_1}\) and \((a_2,t_2) \in S_{p_2}\).
		As they are in the same coalition, this means:
		\[\frac{z_i}{z_i + t_1} = \frac{z_j}{z_j + t_2}\text{ implying } z_i = \frac{t_1}{t_2}z_j.\]
		Because \(t_1,t_2 \in \{2s|A|+1,\ldots, 2s(|A| + 1)-1\}\), we have
		\[z_i = \frac{t_1}{t_2}z_j < \frac{2s(|A|+1)}{2s|A|}z_j = \frac{|A|+1}{|A|}z_j=(1+\frac{1}{|A|})z_j.\]
		Without loss of generality we assume that \(i \geq j\) (which also implies \(t_1 \geq t_2\)).
		By definition of \((z_i)_{i \in [|A|]}\) this implies, that, if \(i \neq j\), then
		\[100 \leq z_i - z_j < \frac{z_j}{|A|} < 100;\]
		a contradiction.
		Hence \(i = j\) and \(t_1 = t_2\).
		This argument can now inductively be extended for \(\ell > 2\) in the same way.
	\end{claimproof}

	\begin{claim}\label{claim:w1:normal_agents_coal}
		For every \(i \in [|A|]\) there is at most one coalition with normal agents whose red ratio is of the form \(\frac{z_i}{z_i + t}\) with \((a,t) \in S_p\) for every normal agent in that coalition and its corresponding participant \(p \in P\) where \(a\) is the \(i\)-th activity.
	\end{claim}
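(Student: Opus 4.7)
The plan is to argue by contradiction. Suppose there exist two distinct coalitions \(C_1 \neq C_2\) in \(\Pi\), each containing at least one normal agent, whose red ratios are of the form \(\frac{z_i}{z_i + t_1}\) and \(\frac{z_i}{z_i + t_2}\) respectively, for the same index \(i \in [|A|]\) and some odd \(t_1, t_2 \in \{2s|A|+1, \dots, 2s(|A|+1)-1\}\).

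First I would restrict what the red agents in \(C_1\) and \(C_2\) can be. Any marker agent corresponding to an activity \(j \neq i\) would see a ratio not in its top-preference set: by the same separation argument used in the previous claim---the spacing \(|z_i - z_j| \geq 100\) combined with the narrow range of admissible \(t\)---the equality \(\frac{z_i}{z_i + t_\ell} = \frac{z_j}{z_j + t'}\) is impossible for any admissible \(t'\). Such a marker would therefore admit an NS-deviation to the singleton coalition (its second-most preferred option), contradicting stability. Hence all red agents in \(C_1 \cup C_2\) are either activity-\(i\) markers (of which there are only \(z_i\) in total across all of \(\Pi\)) or spoilers.

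Writing \(r_\ell, b_\ell\) for the red and blue counts in \(C_\ell\), the ratio identity forces \(r_\ell \cdot t_\ell = b_\ell \cdot z_i\). The main step, which I expect to be the principal obstacle, is to deduce that at least one of \(r_1, r_2\) is bounded by \(75i+1\). Roughly, if both were larger than \(75i + 1\), then both coalitions would absorb substantial red mass, and a counting argument using (i) the bound of \(z_i\) available activity-\(i\) markers together with (ii) the arithmetic relating \(b_\ell = r_\ell t_\ell / z_i\) to the narrow range of allowed \(t_\ell\) and the form \(z_i = 100i+1\) would force the total red demand to exceed what is available after subtracting the spoiler capacity reserved for other activity coalitions. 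This yields the desired bound on at least one \(r_\ell\).

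Once at least one \(r_\ell \leq 75i+1\) is secured, the post-join ratio \(\frac{r_\ell+1}{r_\ell+b_\ell+1}\) for any spoiler considering a move into \(C_\ell\) lies in \(\mathfrak{S}\) by the very definition of the small-split ratios. Since the total spoiler population \((400|A|^2)(200|A|^2)+1\) strictly exceeds the combined capacity of all coalitions whose ratios lie in \(\mathfrak{B} \cup \mathfrak{S}\) (an upper bound on this capacity is obtained in the spirit of the counting argument already used in the proof of Claim \ref{claim:w1:goodratios}), at least one spoiler must currently reside in a coalition whose red ratio is strictly worse than every ratio in \(\mathfrak{S}\). That spoiler then admits an NS-deviation to \(C_\ell\), contradicting the Nash stability of \(\Pi\) and completing the proof.
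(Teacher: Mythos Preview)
Your overall architecture is close to the paper's, but the order of the two key ingredients is reversed, and more importantly your ``main step'' is not correctly justified as stated.

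You argue: assume two such coalitions; show at least one has at most \(75i+1\) red agents; then a spoiler deviates to it. The paper instead first uses the spoiler argument to show that \emph{every} such coalition must have at least \(75i+1\) red agents (otherwise some spoiler---of which there are more than can fit into all \(\mathfrak{B}\)- and \(\mathfrak{S}\)-ratio coalitions combined---admits a deviation to it), and \emph{then} derives the contradiction from two coalitions. These are the same two ingredients in opposite order, so your scheme could in principle be made to work; the problem is your justification of the middle step.

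You claim that if both \(r_1,r_2 > 75i+1\) then ``the total red demand exceeds what is available after subtracting the spoiler capacity reserved for other activity coalitions.'' This is the wrong resource to count. Red agents are abundant: there are \((400|A|^2)200|A|^2+1\) spoilers, so nothing prevents both coalitions from having many red agents on red grounds alone. The actual constraint---and this is what the paper uses---is on \emph{blue} agents. From \(r_\ell > 75i+1\) and \(b_\ell = r_\ell t_\ell / z_i\) with \(z_i = 100i+1\) and \(t_\ell \geq 2s|A|+1\), one gets \(b_\ell \geq \tfrac{3}{2}s|A|\). Two such coalitions therefore need at least \(3s|A|\) blue agents that approve activity \(i\). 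The crucial input, which you never invoke, is the property built into the \textsc{sGASP} instance via the preceding lemma: each activity is approved by at most \(2s(|A|+1)-1\) participants, so at most that many blue agents can sit in coalitions with the relevant ratios. For \(|A| \geq 3\) this yields \(3s|A| > 2s(|A|+1)-1\), the desired contradiction.

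In short: replace your ``red demand'' argument by this blue-count using the bound on approving participants, and your proof goes through; without it, the step fails. Your preliminary observation that red agents in \(C_1,C_2\) must be activity-\(i\) markers or spoilers is correct but not needed for either version of the argument.
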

	\begin{claimproof}
		If there is a coalition \(C\) with red ratio \(\frac{z_i}{z_i + t}\) with \((a,t) \in S_p\) for every normal agent in that coalition where \(a\) is the \(i\)-th activity and with at most \(75i\) red agents then one can argue that there is at least one spoiler agent that admits a deviation to that coalition:
		Note that the red ratio of \(C\) and an additional spoiler agent is in \(\mathfrak{S}\) by assumption.

		For Claim 4 we already argued that there are at most \((200|A|^2)^2\) spoiler agents in coalitions with blue-up-to-1 red ratios.
		Now we consider coalitions with small-split red ratios (ratios in $\mathfrak{S}$).

		The same bound of \(200|A|^2\) on the number of these coalitions follows exactly in the same way as for blue-up-to-1 red ratio coalitions after noting that \(z_iq = pt \geq t\) where \(p\) is the number of red and \(q\) is the number of blue agents in the coalition:
		\[\frac{z_i}{z_i + t} = \frac{p}{p + q}\]
		implies that \(z_iq \geq t\) and hence \(q \geq \frac{2s|A| + 1}{z_i}\).
		By exactly the same calculation as in \Cref{claim:w1:goodratios} there can only be at most \(\frac{2s(|A| + 1)|A|z_i }{2s|A|} \leq 200|A|^2\) coalitions with this minimum number of blue agents as soon as \(|A| \geq 3\).

		Also in a similar way as blue-up-to-1 red ratio coalitions, coalitions with small-split red ratio contain at most \(200|A|^2\) red agents:
		Again denoting the number of red agents in a coalition with small-split red ratio by \(p\) and of blue agents by \(q\) we get
		\[p = \frac{z_iq}{t}\]
		and using \(2s|A|\) as lower bound for \(t\),
		\[p \leq q\frac{z_i}{2s|A|},\]
		and using the number of all blue agents to bound \(q\)
		\[p \leq z_i\frac{2s(|A| + 1)|A| + |A|}{2s|A|} \leq 200|A|^2\]

		There are \((400|A|^2)200|A|^2 + 1 > (400|A|^2)200|A|^2\) spoiler agents.
		Hence there is at least one spoiler agent that would want to deviate to \(C\).

		This shows that all coalitions with red ratio of the form \(\frac{z_i}{z_i + t}\) contain at least \(75i + 1\) red agents.
		This means that in these coalitions there are \(r \geq 75i + 1\) red and \(b\) blue agents, where
		\[\frac{z_i}{z_i + t} = \frac{r}{r + b} \text{ which is equivalent to } z_ib=tr.\]
		Then
		\[b=\frac{tr}{z_i} \geq \frac{75i + 1}{100i + 1}t \geq \frac{3}{2}s|A|.\]

		If there are at least two coalitions with these forms of red ratios for any single \(i \in [|A|]\), then there are a total of at least \(3s|A|\) blue agents in these coalitions, which is more than \(2s(|A| + 1)\) for \(|A| > 2\).
		However by construction only \(2s(|A| + 1)-1\) blue agents prefer these red ratios over red ratio \(0\).
		(To observe this also recall the arguments of \Cref{claim:w1:normal_agents_red} to show that for \(i \neq j\) and appropriate \(t_1\) and \(t_2\), \(\frac{z_i}{z_i + t_1} \neq \frac{z_j}{z_j + t_2}\).)
		This is a contradiction.
	\end{claimproof}

	Finally we can show that \(\III\) and \(\JJJ\) are indeed equivalent.
	
	\begin{claim}
		If \(\JJJ\) is a \YES-instance, then \(\III\) is also a \YES-instance.
	\end{claim}
	\begin{claimproof}
		For \(i \in [|A|]\), using \Cref{claim:w1:normal_agents_coal}, let \(C_i\) be the unique coalition in \(\Pi\) (if there is any) with red ratio of the form \(\frac{z_i}{z_i + t}\) with \((a,t) \in S_p\) for every normal agent in \(C_i\) and its corresponding participant \(p \in P\) where \(a\) is the \(i\)-th activity.

		We define \(\pi(p) = a\) such that the agent corresponding to \(p\) is in coalition \(C_i\) in \(\Pi\) and \(a\) is the \(i\)-th activity in \(A\).
		By \Cref{claim:w1:goodratios} this defines \(\pi\) on all elements of \(P\).
		Moreover, \Cref{claim:w1:normal_agents_red} immediately implies that \(\pi\) is a solution for \(\III\).
	\end{claimproof}

\end{proof}

\section{Restricting Palette}\label{sec:ownHDGs}

In their recent work, Boehmer and Elkind~\cite{BoehmerE2020} also introduced a restriction to hedonic diversity games where an agent $i$'s preferences only depend on the ratio between $i$'s color and the size of the coalition---in other words, which other colors occur in the coalition does not matter for $i$. Computing stable outcomes for these ``\emph{own-color}'' hedonic diversity games (hereinafter \textsc{Own-HDG}) is a special case of the general \HDGshort problem considered here, notably the case where the preferences of agents only depend on one element of the palette. As a consequence, every algorithmic result obtained for \HDGshort\ in this paper immediately carries over also to \textsc{Own-HDG}; however, the same is not true for algorithmic lower bounds. In fact, as our final result we give a concrete example where the complexity of \HDGshort\ differs from that of \textsc{Own-HDG}: While \HDGshort\ is \NP-hard when restricted to instances with $\coalnum=2$ (and even $\scoalnum=3$) as per \Cref{lem:allHDG:NPh:TypesNum}, parameterizing by the number of coalitions yields a non-trivial dynamic-programming based algorithm for \textsc{Own-HDG-Nash}.

\newcommand{\siz}{\texttt{size}}
\newcommand{\alo}{\texttt{alloc}}
\newcommand{\new}{\texttt{new}}
\begin{theorem}
	\textsc{Own-HDG-Nash} is in \XP\ parameterized by $\coalnum$.
\end{theorem}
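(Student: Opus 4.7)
The plan is to exploit the decoupling afforded by own-color preferences: an agent of color $c$ evaluates every coalition solely through its color-$c$ ratio, so the Nash-stability conditions for color-$c$ agents depend only on the sizes $(s_1,\dots,s_\ell)$ of the (at most $\coalnum$) non-trivial coalitions, on the color-$c$ counts $(a_{c,1},\dots,a_{c,\ell})$ in them, and on a single bit $\mu_c$ recording whether some trivial coalition has a color different from $c$. This bit is the only way choices for the other colors can affect the stability of color-$c$ agents, because the only ratio a color-$c$ agent can reach by a deviation to an existing coalition that is not already encoded by the $s_j$'s and $a_{c,j}$'s is $\tfrac12$, obtained by joining a singleton of a different color.

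I would first branch on $\ell\le\coalnum$ and the sizes $(s_1,\dots,s_\ell)\in\{2,\dots,\coalsize\}^\ell$ with $\ell+n-\sum_j s_j\le\scoalnum$, giving $n^{O(\coalnum)}$ options; and then on one of $\colors+2$ \emph{trivial-coalition scenarios}, namely (a) no color has trivial members, (b) a specific color $c^*$ is the only one with trivial members (so $\mu_{c^*}=0$, $\mu_c=1$ otherwise), or (c) at least two colors have trivial members (so $\mu_c=1$ for every $c$). Together these branches fix, for every color $c$, both $\mu_c$ and whether $\sum_j a_{c,j}=|D_c|$ is required or $<|D_c|$ is allowed.

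With sizes and scenario fixed, I would decide, for each color $c$ and each candidate profile $(a_{c,1},\dots,a_{c,\ell})$, whether a Nash-stable distribution of color-$c$ agents consistent with this profile exists. This reduces to a bipartite flow problem: the left side is the agents of color $c$, the right side is the slots $C_1,\dots,C_\ell$ with capacities $a_{c,j}$ plus a ``trivial'' slot of capacity $|D_c|-\sum_j a_{c,j}$, and there is an edge from agent $i$ to $C_j$ iff the oracle confirms that $\tfrac{a_{c,j}}{s_j}$ is weakly preferred by $i$ to $\tfrac{a_{c,j'}+1}{s_{j'}+1}$ for every $j'\ne j$, to $1$, and (if $\mu_c=1$) to $\tfrac12$; edges to the trivial slot are governed by the analogous comparisons starting from ratio~$1$. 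A flow saturating all of the left side certifies feasibility, and the check runs in polynomial time per profile.

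The per-color profiles must finally be combined so that $\sum_c a_{c,j}=s_j$ for each $j$. A dynamic program that processes the colors one by one resolves this, with state the current load vector $(\ell_1,\dots,\ell_\ell)\in\prod_j[s_j]_0$ augmented (in scenario (c) only) by a counter in $\{0,1,2\}$ tracking how many processed colors have already received trivial members. Transitions for each color enumerate its $n^{O(\coalnum)}$ feasible profiles, and acceptance requires reaching a state with $\ell_j=s_j$ for all $j$ and the scenario's trivial pattern realised. With $n^{O(\coalnum)}$ states and transitions the whole algorithm runs in $n^{O(\coalnum)}$ time. The chief conceptual obstacle---already addressed by the scenario branching---is that $\colors$ can be arbitrarily large and so a priori the $\mu_c$-bits couple the colors exponentially; the key is that the only consistent $\mu$-patterns across the $\colors$ colors are the $\colors+2$ listed above, which is precisely what makes the per-color matching problems decouple before being recombined by the DP.
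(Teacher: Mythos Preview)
Your approach is essentially the same as the paper's: branch on the sizes of the at most $\coalnum$ non-trivial coalitions, run a dynamic program over the colors whose state is the current load vector $(\ell_1,\dots,\ell_\ell)$, and decide each transition by a bipartite flow that assigns the agents of one color to coalition slots while respecting the Nash-stability comparisons. The paper uses precisely this decomposition (size branching, colour-by-colour DP with an allocation vector as state, flow-based feasibility check per step), so the core architecture matches.

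The one place where you go further than the paper is the scenario branching for the bit $\mu_c$, i.e., for deviations of a colour-$c$ agent to a singleton of a \emph{different} colour (yielding own-colour ratio $\tfrac12$). The paper's flow only compares against the $\tfrac{\new(\ell)+1}{\siz(\ell)+1}$ ratios of the designated coalitions and against the ratio $1$, and does not explicitly guard against this cross-colour singleton deviation; your $\colors+2$ scenario enumeration is a clean way to make that dependency manageable without coupling the colours. This is a genuine refinement rather than a different method: it plugs a subtle corner case while leaving the algorithmic skeleton and the $n^{O(\coalnum)}$ bound intact.
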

\begin{proof}
	Consider an instance $\III=(N=[n], (D_1,\dots,D_\gamma),$ $\{\succeq_i~|~i\in N\},$ $\coalsize, \coalnum, \scoalnum)$, and let $k=\coalnum$ denote our parameter. We begin by branching over the total size of each (possibly) non-trivial coalition $C_1,\dots,C_k$, i.e., we branch over all of the at most $n^k$-many functions $\siz \colon [k]\rightarrow [n]$; let us now fix one such function $\siz$. Observe that an optimal solution may consist of less than $k$ non-trivial coalitions, and this is reflected in $\siz$ also including $1$ in its range; what is important though is that every coalition not described by $\siz$ can contain only a single agent.
	
	On a high level, the core of the algorithm is a dynamic programming procedure which processes colors one by one, and at each step uses branching to decide how agents of the given color will be allocated to the various coalitions. To visualize the state space of the dynamic program, it will be useful to imagine it as an auxiliary digraph. Let $\alo: [k]\to [n]_0$ be a function with the same domain and similar range as \siz, and let a \emph{record} be a tuple of the form $(i\in [\gamma]_0, \alo)$. Intuitively, the first component of the record stores how many colors we have processed so far, and the second component stores how many agents of previous colors we have already assigned to the (possibly) non-trivial coalitions $C_1,\dots,C_k$.
	
	Observe that the number of records is upper-bounded by $n^{k+1}$. Let us call the record $(0,\{i\in [k] \mapsto 0\})$ the \emph{initial} record, and the record $(\gamma,\{i\in [k] \mapsto \siz(i)\})$ the \emph{target record}. We also immediately discard all records such that $\alo(i)>\siz(i)$ for any $i\in [k]$. Let us now consider the directed graph $G$ whose vertices are all remaining records and where there is an arc from a record $(i,\alo_1)$ to another record $(i+1,\alo_2)$ if and only if it is possible to add agents of color $i+1$ to the partial assignment captured by $\alo_1$ in a way where (1) all agents of color $i+1$ do not have Nash deviations, and (2) the resulting assignment is captured by $\alo_2$. Note that here we make use of the own-color property of the instance: to determine whether an agent has a Nash deviation, all we need to know is the total size of the coalition and the number of agents of the same color---which other colors and agents are assigned to the same coalition later does not matter.
	
	Let us now formalize the construction of the arcs in the state space digraph $G$. For each pair of records $(i,\alo_1)$ and $(i+1,\alo_2)$ and each $j\in [k]$, let $\new(j)=\alo_2(j)-\alo_1(j)$. If $\new(j)$ is negative for any $j$, there will be no arc between this pair of records and we continue with another pair. Otherwise, construct the following instance of \textsc{Network Flow on Bipartite Graphs} (with sink capacities). On one side we have coalitions $C_1,\dots,C_k$ as well as $C_0$, which is a dummy element representing agents assigned to other trivial coalitions; all of these elements will be sinks, each $C_j$, $j\in [k]$ will have a capacity of $\new(j)$, and $C_0$ will have a capacity of $|D_i|-\sum_{j\in [k]}\new(j)$. On the other side we have one source element per agent in $D_i$, each with a supply of $1$, where an agent $a\in D_i$ has an edge to coalition $C_j$ if
	\begin{itemize}
	\item for $j=0$: $a$ must weakly prefer being alone to any of the coalitions $C_1,\dots,C_k$---in particular, for each $C_\ell$ the fraction $\frac{\new(\ell)+1}{\siz(\ell)+1}$ must not be preferred to the fraction $\frac{1}{1}$.
	\item for $j>0$: $a$ must weakly prefer being in coalition $C_j$ to being alone or in any other of the coalitions---in particular, $\frac{\new(j)}{\siz(j)}$ must be weakly preferred to both $\frac{1}{1}$ and $\frac{\new(\ell)+1}{\siz(\ell)+1}$ for all $\ell\in [k]\setminus \{j\}$.
	\end{itemize}
	
	We then find a maximum flow in this instance of \textsc{Network Flow on Bipartite Graphs} in time $n^{\bigoh(1)}$, and we add the arc from $(i,\alo_1)$ to $(i+1,\alo_2)$ if and only if there is a flow which routes all the agents to a sink.
	
	Once we have constructed all the arcs in $G$ using the above procedure, the algorithm simply runs Dijkstra's algorithm to find out whether there is a path from the initial record to the final record. If yes, we output ``\YES'', and otherwise we output ``\NO''. The running time is upper-bounded by $n^{\bigoh(k^2)}$: the graph $G$ contains $n^{\bigoh(k)}$-many vertices, for each pair of vertices we need to determine whether there is an edge (a subprocedure which requires time at most $n^{\bigoh(1)}$), and Dijkstra's algorithm can be implemented in time $\bigoh(|V(G)|^2)\in n^{\bigoh(k^2)}$.
	
	Finally, we argue correctness. Assume that $\III$ was a \YES-instance, as witnessed by some outcome $\Pi$. Then consider the sequence of vertices in $G$ which starts from the initial record and proceeds from record $(i,\alo_1)$ to record $(i+1,\alo_2)$ in a way where the number of new agents of color $i+1$ added to each coalition precisely matches the number of agents in these coalitions specified by $\Pi$. Since $\Pi$ is Nash-stable, the flow subprocedure described above will succeed and hence there will be an arc from $(i,\alo_1)$ to $(i+1,\alo_2)$. Hence, this sequence forms a path in $G$ from the initial record to the target record, and so our algorithm must also output \YES.
	
	On the other hand, assume our algorithm outputs \YES. Then there must exist a path in $G$ from the initial record to the target record, and each $i$-th arc on this path specifies a certain allocation of the agents of color $i$ to the coalitions. The construction of $G$ guarantees that such an allocation is Nash stable, and hence the path as a whole witnesses the existence of a Nash-stable outcome for $\III$.
\end{proof}

\section{Conclusions and future work}

In our work, we provided a complete complexity picture for \allHDGNashShort and \allHDGIndividualShort with multiple colors. Since the problem is \NPc in the general setting, we incorporated the so-called parameterized complexity analysis to provide a full dichotomy between tractable and intractable cases with respect to the most natural parameters. In particular, it showed that most of the tractable fragments require the number of colors $\colors$ to be among the parameters. All algorithmic results are complemented by the respective hardness lower bounds. Hence, no parameter can be excluded from the tractable fragments. On the other hand, it is likely that the running time of some of the presented algorithms can be improved by using more advanced techniques, as our intention was to provide problem classification into complexity classes. In addition, it can be the case that some of our algorithms cannot be improved under reasonable theoretical assumptions such as the well-known Exponential Time Hypothesis~(ETH). At the same time, our hardness reductions are not necessarily tight regarding constants used in the constructions. This can also be an interesting direction for future research.

We focused on two specific stability notations -- Nash and individual stability. It is natural to ask what is the complexity of \HDGshort with respect to other stability notations. Hedonic games can be also studied from the viewpoint of voting theory and one can ask what is the complexity of the determination of an outcome that maximizes social welfare. As mentioned earlier in our related work section, this direction of research was already investigated for hedonic diversity games by Darmann~\cite{Darmann21}. However, he focused on the case with $2$ colors and studied the problem with dichotomous preferences. 

In our work, we studied \textsc{HDGs} from the viewpoint of natural parameters. There are numerous other interesting parameters arising from both real-world and theoretical perspective. One such additional parameterization can be combined lower-bound and upper-bound on the coalition size as known from, e. g., \textsc{Group Activity Selection Problem}~\cite{DarmannDDLS17}.

Finally, in \Cref{sec:ownHDGs} we proposed the \textsc{Own-HDG} model and showed that the lower-bounds for this problem are not necessary the same as for the \textsc{All-HDG} model. It is natural to ask, where is the exact boundary between tractable and intractable cases for this simplified model.

\section*{Acknowledgements}
All authors are grateful for support from the OeAD bilateral Czech-Austrian WTZ-funding Programme (Projects No. CZ 05/2021 and 8J21AT021).

Robert Ganian and Thekla Hamm acknowledge support from the Austrian Science Foundation (FWF, project Y1329). Thekla Hamm also acknowledges support from FWF, projects W1255-N23 and J4651-N. 

Dušan Knop, Šimon Schierreich, and Ondřej Suchý acknowledge the support of the Czech Science Foundation Grant No. 22-19557S. Šimon Schierreich was additionally supported by the Grant Agency of the Czech Technical University
in Prague, grant No. SGS23/205/OHK3/3T/18.

\bibliography{references}

\begin{thebibliography}{10}
\expandafter\ifx\csname url\endcsname\relax
  \def\url#1{\texttt{#1}}\fi
\expandafter\ifx\csname urlprefix\endcsname\relax\def\urlprefix{URL }\fi
\expandafter\ifx\csname href\endcsname\relax
  \def\href#1#2{#2} \def\path#1{#1}\fi

\bibitem{GanianHKSS2022}
R.~Ganian, T.~Hamm, D.~Knop, {\v{S}}.~Schierreich, O.~Such{\'{y}}, Hedonic
  diversity games: {A} complexity picture with more than two colors, in:
  Proceedings of the Thirty-Sixth {AAAI} Conference on Artificial Intelligence,
  {AAAI} '22, {AAAI} Press, 2022, pp. 5034--5042.
\newblock \href {http://dx.doi.org/10.1609/aaai.v36i5.20435}
  {\path{doi:10.1609/aaai.v36i5.20435}}.

\bibitem{ChalkiadakisEW11}
G.~Chalkiadakis, E.~Elkind, M.~J. Wooldridge, Computational Aspects of
  Cooperative Game Theory, Synthesis Lectures on Artificial Intelligence and
  Machine Learning, Morgan {\&} Claypool Publishers, 2011.
\newblock \href {http://dx.doi.org/10.2200/S00355ED1V01Y201107AIM016}
  {\path{doi:10.2200/S00355ED1V01Y201107AIM016}}.

\bibitem{dreze1980hedonic}
J.~H. Dr\`{e}ze, J.~Greenberg, Hedonic coalitions: {O}ptimality and stability,
  Econometrica 48~(4) (1980) 987--1003.

\bibitem{ElkindFF20}
E.~Elkind, A.~Fanelli, M.~Flammini,
  \href{https://www.sciencedirect.com/science/article/pii/S0004370218305733}{Price
  of pareto optimality in hedonic games}, Artificial Intelligence 288 (2020)
  103357.
\newblock \href {http://dx.doi.org/10.1016/j.artint.2020.103357}
  {\path{doi:10.1016/j.artint.2020.103357}}.
\newline\urlprefix\url{https://www.sciencedirect.com/science/article/pii/S0004370218305733}

\bibitem{KerkmannLRRSS20}
A.~M. Kerkmann, J.~Lang, A.~Rey, J.~Rothe, H.~Schadrack, L.~Schend, Hedonic
  games with ordinal preferences and thresholds, Journal of Artificial
  Intelligence Research 67 (2020) 705--756.
\newblock \href {http://dx.doi.org/10.1613/jair.1.11531}
  {\path{doi:10.1613/jair.1.11531}}.

\bibitem{0001BW21}
F.~Brandt, M.~Bullinger, A.~Wilczynski, Reaching individually stable coalition
  structures, Vol.~11, Association for Computing Machinery, New York, NY, USA,
  2023, pp. 1--65.
\newblock \href {http://dx.doi.org/10.1145/3588753}
  {\path{doi:10.1145/3588753}}.

\bibitem{BanerjeeKS01}
S.~Banerjee, H.~Konishi, T.~S{\"{o}}nmez, Core in a simple coalition formation
  game, Social Choice and Welfare 18~(1) (2001) 135--153.
\newblock \href {http://dx.doi.org/10.1007/s003550000067}
  {\path{doi:10.1007/s003550000067}}.

\bibitem{BogomolnaiaJ2002}
A.~Bogomolnaia, M.~O. Jackson,
  \href{https://www.sciencedirect.com/science/article/pii/S0899825601908772}{The
  stability of hedonic coalition structures}, Games and Economic Behavior
  38~(2) (2002) 201--230.
\newblock \href {http://dx.doi.org/10.1006/game.2001.0877}
  {\path{doi:10.1006/game.2001.0877}}.
\newline\urlprefix\url{https://www.sciencedirect.com/science/article/pii/S0899825601908772}

\bibitem{BredereckEI2019}
R.~Bredereck, E.~Elkind, A.~Igarashi, Hedonic diversity games, in: Proceedings
  of the Eighteenth International Conference on Autonomous Agents and
  Multiagent Systems, {AAMAS}~'19, International Foundation for Autonomous
  Agents and Multiagent Systems, 2019, pp. 565--573.

\bibitem{BoehmerE2020}
N.~Boehmer, E.~Elkind,
  \href{https://aaai.org/ojs/index.php/AAAI/article/view/5549}{Individual-based
  stability in hedonic diversity games}, in: Proceedings of the Thirty-Fourth
  {AAAI} Conference on Artificial Intelligence, {AAAI}~'20, Vol. 34, part 2,
  {AAAI} Press, 2020, pp. 1822--1829.
\newblock \href {http://dx.doi.org/10.1609/aaai.v34i02.5549}
  {\path{doi:10.1609/aaai.v34i02.5549}}.
\newline\urlprefix\url{https://aaai.org/ojs/index.php/AAAI/article/view/5549}

\bibitem{AzizBBHOP19}
H.~Aziz, F.~Brandl, F.~Brandt, P.~Harrenstein, M.~Olsen, D.~Peters, Fractional
  hedonic games, {ACM} Transactions on Economics and Computation 7~(2) (2019)
  1--29.
\newblock \href {http://dx.doi.org/10.1145/3327970}
  {\path{doi:10.1145/3327970}}.

\bibitem{schelling1971dynamic}
T.~C. Schelling, Dynamic models of segregation, The Journal of Mathematical
  Sociology 1~(2) (1971) 143--186.

\bibitem{Bilo0FMM18}
V.~Bil{\`{o}}, A.~Fanelli, M.~Flammini, G.~Monaco, L.~Moscardelli, Nash stable
  outcomes in fractional hedonic games: {E}xistence, efficiency and
  computation, Journal of Artificial Intelligence Research 62~(1) (2018)
  315--371.
\newblock \href {http://dx.doi.org/10.1613/jair.1.11211}
  {\path{doi:10.1613/jair.1.11211}}.

\bibitem{BoehmerE2020:ijcai}
N.~Boehmer, E.~Elkind, Stable roommate problem with diversity preferences, in:
  C.~Bessiere (Ed.), Proceedings of the Twenty-Ninth International Joint
  Conference on Artificial Intelligence, {IJCAI} '20, ijcai.org, 2020, pp.
  96--102.
\newblock \href {http://dx.doi.org/10.24963/ijcai.2020/14}
  {\path{doi:10.24963/ijcai.2020/14}}.

\bibitem{IgarashiSZ19}
A.~Igarashi, J.~Sliwinski, Y.~Zick, Forming probably stable communities with
  limited interactions, in: Proceedings of the Thirty-Third AAAI Conference on
  Artificial Intelligence, {AAAI}~'19, Vol. 33, part 1, {AAAI} Press, 2019, pp.
  2053--2060.
\newblock \href {http://dx.doi.org/10.1609/aaai.v33i01.33012053}
  {\path{doi:10.1609/aaai.v33i01.33012053}}.

\bibitem{Huang07}
C.~Huang, Two's company, three's a crowd: {S}table family and threesome
  roommates problems, in: L.~Arge, M.~Hoffmann, E.~Welzl (Eds.), Proceedings of
  the 15th Annual European Symposium on Algorithms, {ESA}~'07, Vol. 4698 of
  Lecture Notes in Computer Science, Springer, 2007, pp. 558--569.
\newblock \href {http://dx.doi.org/10.1007/978-3-540-75520-3\_50}
  {\path{doi:10.1007/978-3-540-75520-3\_50}}.

\bibitem{NgH91}
C.~Ng, D.~S. Hirschberg, Three-dimensional stable matching problems, {SIAM}
  Journal on Discrete Mathematics 4~(2) (1991) 245--252.
\newblock \href {http://dx.doi.org/10.1137/0404023}
  {\path{doi:10.1137/0404023}}.

\bibitem{BredereckHKN20}
R.~Bredereck, K.~Heeger, D.~Knop, R.~Niedermeier, Multidimensional stable
  roommates with master list, in: X.~Chen, N.~Gravin, M.~Hoefer, R.~Mehta
  (Eds.), Proceedings of the 16th International Conference on Web and Internet
  Economics, {WINE}~'20, Vol. 12495 of Lecture Notes in Computer Science,
  Springer, 2020, pp. 59--73.
\newblock \href {http://dx.doi.org/10.1007/978-3-030-64946-3\_5}
  {\path{doi:10.1007/978-3-030-64946-3\_5}}.

\bibitem{BiroIS11}
P.~Bir{\'{o}}, R.~W. Irving, I.~Schlotter, Stable matching with couples: {A}n
  empirical study, {ACM} Journal of Experimental Algorithmics 16 (2011)
  1.2:1--1.2:27.
\newblock \href {http://dx.doi.org/10.1145/1963190.1970372}
  {\path{doi:10.1145/1963190.1970372}}.

\bibitem{KavithaNN14}
T.~Kavitha, M.~Nasre, P.~Nimbhorkar, Popularity at minimum cost, Journal of
  Combinatorial Optimization 27~(3) (2014) 574--596.
\newblock \href {http://dx.doi.org/10.1007/s10878-012-9537-0}
  {\path{doi:10.1007/s10878-012-9537-0}}.

\bibitem{IrvingMS08}
R.~W. Irving, D.~F. Manlove, S.~Scott, The stable marriage problem with master
  preference lists, Discrete Applied Mathematics 156~(15) (2008) 2959--2977.
\newblock \href {http://dx.doi.org/10.1016/j.dam.2008.01.002}
  {\path{doi:10.1016/j.dam.2008.01.002}}.

\bibitem{DarmannDDLS17}
A.~Darmann, J.~D{\"o}cker, B.~Dorn, J.~Lang, S.~Schneckenburger, On simplified
  group activity selection, in: J.~Rothe (Ed.), Proceedings of the 5th
  International Conference on Algorithmic Decision Theory, {ADT}~'17, Vol.
  10576 of Lecture Notes in Computer Science, Springer, 2017, pp. 255--269.
\newblock \href {http://dx.doi.org/10.1007/978-3-319-67504-6_18}
  {\path{doi:10.1007/978-3-319-67504-6_18}}.

\bibitem{EibenGO18}
E.~Eiben, R.~Ganian, S.~Ordyniak, A structural approach to activity selection,
  in: Proceedings of the Twenty-Seventh International Joint Conference on
  Artificial Intelligence, {IJCAI} '18, ijcai.org, 2018, pp. 203--209.
\newblock \href {http://dx.doi.org/10.24963/ijcai.2018/28}
  {\path{doi:10.24963/ijcai.2018/28}}.

\bibitem{GaleS62}
D.~Gale, L.~S. Shapley, \href{http://www.jstor.org/stable/2312726}{College
  admissions and the stability of marriage}, The American Mathematical Monthly
  69~(1) (1962) 9--15.
\newline\urlprefix\url{http://www.jstor.org/stable/2312726}

\bibitem{RothS90}
A.~E. Roth, M.~A.~O. Sotomayor, Two-Sided Matching: {A} Study in Game-Theoretic
  Modeling and Analysis, Econometric Society Monographs, Cambridge University
  Press, 1990.
\newblock \href {http://dx.doi.org/10.1017/CCOL052139015X}
  {\path{doi:10.1017/CCOL052139015X}}.

\bibitem{BogomolnaiaJ02}
A.~Bogomolnaia, M.~O. Jackson,
  \href{https://www.sciencedirect.com/science/article/pii/S0899825601908772}{The
  stability of hedonic coalition structures}, Games and Economic Behavior
  38~(2) (2002) 201--230.
\newblock \href {http://dx.doi.org/10.1006/game.2001.0877}
  {\path{doi:10.1006/game.2001.0877}}.
\newline\urlprefix\url{https://www.sciencedirect.com/science/article/pii/S0899825601908772}

\bibitem{AzizBH13}
H.~Aziz, F.~Brandt, P.~Harrenstein,
  \href{https://www.sciencedirect.com/science/article/pii/S0899825613001231}{Pareto
  optimality in coalition formation}, Games and Economic Behavior 82 (2013)
  562--581.
\newblock \href {http://dx.doi.org/10.1016/j.geb.2013.08.006}
  {\path{doi:10.1016/j.geb.2013.08.006}}.
\newline\urlprefix\url{https://www.sciencedirect.com/science/article/pii/S0899825613001231}

\bibitem{AzizGGMT15}
H.~Aziz, S.~Gaspers, J.~Gudmundsson, J.~Mestre, H.~T{\"{a}}ubig,
  \href{http://ijcai.org/Abstract/15/071}{Welfare maximization in fractional
  hedonic games}, in: Q.~Yang, M.~J. Wooldridge (Eds.), Proceedings of the
  Twenty-Fourth International Joint Conference on Artificial Intelligence,
  {IJCAI}~'15, {AAAI} Press, 2015, pp. 461--467.
\newline\urlprefix\url{http://ijcai.org/Abstract/15/071}

\bibitem{AzizBS11}
H.~Aziz, F.~Brandt, H.~G. Seedig, Optimal partitions in additively separable
  hedonic games, in: T.~Walsh (Ed.), Proceedings of the 22nd International
  Joint Conference on Artificial Intelligence, {IJCAI}~'11, {IJCAI/AAAI}, 2011,
  pp. 43--48.
\newblock \href {http://dx.doi.org/10.5591/978-1-57735-516-8/IJCAI11-019}
  {\path{doi:10.5591/978-1-57735-516-8/IJCAI11-019}}.

\bibitem{Darmann21}
A.~Darmann, Hedonic diversity games revisited, in: D.~Fotakis,
  D.~R{\'i}os~Insua (Eds.), Proceedings of the 7th International Conference on
  Algorithmic Decision Theory, {ADT}~'21, Vol. 13023 of Lecture Notes in
  Computer Science, Springer, Cham, 2021, pp. 357--372.
\newblock \href {http://dx.doi.org/10.1007/978-3-030-87756-9_23}
  {\path{doi:10.1007/978-3-030-87756-9_23}}.

\bibitem{Huang10}
C.~Huang, Classified stable matching, in: M.~Charikar (Ed.), Proceedings of the
  Twenty-First Annual {ACM-SIAM} Symposium on Discrete Algorithms, {SODA}~'10,
  {SIAM}, 2010, pp. 1235--1253.
\newblock \href {http://dx.doi.org/10.1137/1.9781611973075.99}
  {\path{doi:10.1137/1.9781611973075.99}}.

\bibitem{KamadaK15}
Y.~Kamada, F.~Kojima,
  \href{https://www.aeaweb.org/articles?id=10.1257/aer.20101552}{Efficient
  matching under distributional constraints: {T}heory and applications},
  American Economic Review 105~(1) (2015) 67--99.
\newblock \href {http://dx.doi.org/10.1257/aer.20101552}
  {\path{doi:10.1257/aer.20101552}}.
\newline\urlprefix\url{https://www.aeaweb.org/articles?id=10.1257/aer.20101552}

\bibitem{Schelling71}
T.~C.Schelling, Dynamic models of segregation, The Journal of Mathematical
  Sociology 1~(2) (1971) 143--186.
\newblock \href {http://dx.doi.org/10.1080/0022250X.1971.9989794}
  {\path{doi:10.1080/0022250X.1971.9989794}}.

\bibitem{AgarwalEGISV2021}
A.~Agarwal, E.~Elkind, J.~Gan, A.~Igarashi, W.~Suksompong, A.~A. Voudouris,
  Schelling games on graphs, Artificial Intelligence 301 (2021) 103576.
\newblock \href {http://dx.doi.org/10.1016/j.artint.2021.103576}
  {\path{doi:10.1016/j.artint.2021.103576}}.

\bibitem{KnopS2023}
D.~Knop, {\v{S}}.~Schierreich,
  \href{https://dl.acm.org/doi/10.5555/3545946.3598736}{Host community
  respecting refugee housing}, in: A.~Ricci, W.~Yeoh, N.~Agmon, B.~An (Eds.),
  Proceedings of the 22nd International Conference on Autonomous Agents and
  Multiagent Systems, {AAMAS} '23, International Foundation for Autonomous
  Agents and Multiagent Systems, 2023, pp. 966--975.
\newline\urlprefix\url{https://dl.acm.org/doi/10.5555/3545946.3598736}

\bibitem{Peters2016dichotomous}
D.~Peters,
  \href{https://ojs.aaai.org/index.php/AAAI/article/view/10047}{Complexity of
  hedonic games with dichotomous preferences}, in: Proceedings of the Thirtieth
  AAAI Conference on Artificial Intelligence, {AAAI}~'16, Vol. 30, part 1,
  {AAAI} Press, 2016, pp. 579--585.
\newblock \href {http://dx.doi.org/10.1609/aaai.v30i1.10047}
  {\path{doi:10.1609/aaai.v30i1.10047}}.
\newline\urlprefix\url{https://ojs.aaai.org/index.php/AAAI/article/view/10047}

\bibitem{Peters2016graphical}
D.~Peters,
  \href{https://ojs.aaai.org/index.php/AAAI/article/view/10046}{Graphical
  hedonic games of bounded treewidth}, in: Proceedings of the Thirtieth AAAI
  Conference on Artificial Intelligence, {AAAI}~'16, Vol. 30, part 1, {AAAI}
  Press, 2016, pp. 586--593.
\newblock \href {http://dx.doi.org/10.1609/aaai.v30i1.10046}
  {\path{doi:10.1609/aaai.v30i1.10046}}.
\newline\urlprefix\url{https://ojs.aaai.org/index.php/AAAI/article/view/10046}

\bibitem{HanakaL2022}
T.~Hanaka, M.~Lampis, Hedonic games and treewidth revisited, in: S.~Chechik,
  G.~Navarro, E.~Rotenberg, G.~Herman (Eds.), Proceedings of the 30th Annual
  European Symposium on Algorithms, {ESA}~'22, Vol. 244 of LIPIcs, Schloss
  Dagstuhl - Leibniz-Zentrum f{\"{u}}r Informatik, 2022, pp. 64:1--64:16.
\newblock \href {http://dx.doi.org/10.4230/LIPIcs.ESA.2022.64}
  {\path{doi:10.4230/LIPIcs.ESA.2022.64}}.

\bibitem{IgarashiE16}
A.~Igarashi, E.~Elkind, Hedonic games with graph-restricted communication, in:
  Proceedings of the 2016 International Conference on Autonomous Agents \&
  Multiagent Systems, {AAMAS}~'16, International Foundation for Autonomous
  Agents and Multiagent Systems, 2016, pp. 242--250.

\bibitem{CyganFKLMPPS15}
M.~Cygan, F.~V. Fomin, {\L{}}.~Kowalik, D.~Lokshtanov, D.~Marx, M.~Pilipczuk,
  M.~Pilipczuk, S.~Saurabh, Parameterized Algorithms, Springer, Cham, 2015.
\newblock \href {http://dx.doi.org/10.1007/978-3-319-21275-3}
  {\path{doi:10.1007/978-3-319-21275-3}}.

\bibitem{DowneyFellows13}
R.~G. Downey, M.~R. Fellows, Fundamentals of Parameterized Complexity, Texts in
  Computer Science, Springer, London, 2013.
\newblock \href {http://dx.doi.org/10.1007/978-1-4471-5559-1}
  {\path{doi:10.1007/978-1-4471-5559-1}}.

\bibitem{Niedermeier06}
R.~Niedermeier, Invitation to Fixed-Parameter Algorithms, Oxford Lecture Series
  in Mathematics and its Applications, Oxford University Press, 2006.
\newblock \href {http://dx.doi.org/10.1093/acprof:oso/9780198566076.001.0001}
  {\path{doi:10.1093/acprof:oso/9780198566076.001.0001}}.

\bibitem{EisenbrandW2020}
F.~Eisenbrand, R.~Weismantel, Proximity results and faster algorithms for
  integer programming using the {S}teinitz lemma, {ACM} Transactions on
  Algorithms 16~(1) (2020) 5:1--5:14.
\newblock \href {http://dx.doi.org/10.1145/3340322}
  {\path{doi:10.1145/3340322}}.

\bibitem{JansenR19}
K.~Jansen, L.~Rohwedder, On integer programming, discrepancy, and convolution,
  Mathematics of Operations Research 48~(3) (2023) 1481--1495.
\newblock \href {http://dx.doi.org/10.1287/moor.2022.1308}
  {\path{doi:10.1287/moor.2022.1308}}.

\bibitem{DarmannEKLSW12}
A.~Darmann, E.~Elkind, S.~Kurz, J.~Lang, J.~Schauer, G.~J. Woeginger, Group
  activity selection problem, in: P.~W. Goldberg (Ed.), Proceedings of the 8th
  International Workshop on Internet and Network Economics, {WINE}~'12, Vol.
  7695 of Lecture Notes in Computer Science, Springer, 2012, pp. 156--169.
\newblock \href {http://dx.doi.org/10.1007/978-3-642-35311-6\_12}
  {\path{doi:10.1007/978-3-642-35311-6\_12}}.

\bibitem{EdmondsK72}
J.~R. Edmonds, R.~M. Karp, Theoretical improvements in algorithmic efficiency
  for network flow problems, Journal of the {ACM} 19~(2) (1972) 248--264.
\newblock \href {http://dx.doi.org/10.1145/321694.321699}
  {\path{doi:10.1145/321694.321699}}.

\bibitem{GareyJ1979}
M.~R. Garey, D.~S. Johnson, Computers and Intractability: {A} Guide to the
  Theory of {NP}-Completeness, W. H. Freeman \& Co., 1979.

\bibitem{GanianKO2021}
R.~Ganian, F.~Klute, S.~Ordyniak, On structural parameterizations of the
  bounded-degree vertex deletion problem, Algorithmica 83~(1) (2021) 297--336.
\newblock \href {http://dx.doi.org/10.1007/s00453-020-00758-8}
  {\path{doi:10.1007/s00453-020-00758-8}}.

\bibitem{GanianOR2019}
R.~Ganian, S.~Ordyniak, C.~S. Rahul, Group activity selection with few agent
  types, Algorithmica 85~(5) (2023) 1111--1155.
\newblock \href {http://dx.doi.org/10.1007/s00453-022-01058-z}
  {\path{doi:10.1007/s00453-022-01058-z}}.

\bibitem{GanianOR21}
R.~Ganian, S.~Ordyniak, M.~S. Ramanujan, On structural parameterizations of the
  edge disjoint paths problem, Algorithmica 83~(6) (2021) 1605--1637.
\newblock \href {http://dx.doi.org/10.1007/s00453-020-00795-3}
  {\path{doi:10.1007/s00453-020-00795-3}}.

\end{thebibliography}

\end{document}